\documentclass[preprint]{acmart}
\settopmatter{printacmref=false}
 \renewcommand\footnotetextcopyrightpermission[1]{} 
\pagestyle{plain} 

\acmJournal{PACMPL}
\acmVolume{1}
\acmNumber{POPL} 
\acmArticle{1}
\acmYear{2021}
\acmMonth{1}
\acmDOI{} 
\startPage{1}

\setcopyright{none}

\bibliographystyle{ACM-Reference-Format}
\citestyle{acmauthoryear}   

\usepackage{subcaption} 

\usepackage{amsmath}
\usepackage{stmaryrd}
\usepackage{proof}
\usepackage{graphicx}
\usepackage{xypic}
\usepackage{tikz}
\usetikzlibrary{calc}

\usepackage[disable
]{todonotes}

\def\eg{{\it e.g.}}
\def\ie{{\it i.e.}}
\def\lat{\mbox{$\lambda$-term}}

\newcommand\todom[2][]{\todo[color=blue!20,#1]{#2}} 
\newcommand\todod[2][]{\todo[color=yellow!20,#1]{#2}} 

\newtheorem{theorem}{Theorem}
\newtheorem{definition}[theorem]{Definition}
\newtheorem{lemma}[theorem]{Lemma}

\newtheorem{proposition}[theorem]{Proposition}

\newtheorem{remark}[theorem]{Remark}

\newcommand{\refth}[1]{Theorem~\ref{th:#1}}
\newcommand{\reflemma}[1]{Lemma~\ref{lemma:#1}}
\newcommand{\refsublemma}[2]{Lemma~\ref{lemma:#1}.\ref{#1:#2}}
\newcommand{\refprop}[1]{Proposition~\ref{prop:#1}}
\newcommand{\refdef}[1]{Definition~\ref{def:#1}}
\newcommand{\reffig}[1]{Fig.~\ref{fig:#1}}
\newcommand{\refsubfig}[1]{Fig.~\ref{subfig:#1}}
\newcommand{\refsect}[1]{Sect.~\ref{sect:#1}}

\newcommand{\replemma}[1]{\noindent\textsc{Lemma~\ref{lemma:#1}.}}

\newcommand{\gsep}{\mathrel{|}}

\newcommand{\ite}[3]{\mathsf{if}(#1, #2, #3)}

\newcommand{\PCF}{\ensuremath{\mathsf{PCF}_\tyR}}

\newcommand{\funsa}{\phi} 
\newcommand{\funsb}{\psi} 
\newcommand{\funsc}{\chi} 

\newcommand{\Relu}{\mathsf{ReLU}}
\newcommand{\Crelu}[1][]{\mathsf{cReLU}_{#1}}
\newcommand{\Floor}{\mathsf{Floor}}

\newcommand{\Sid}{\mathsf{SillyId}}
\newcommand{\EqProj}{\mathsf{EqProj}}

\newcommand{\logic}[2][\Gamma]{\mathrm{P}_{#1}(#2)} 

\newcommand{\fix}[3][]{\mathsf{fix}_{#1}#2.#3}
\newcommand{\tyR}{\mathsf R}
\newcommand{\numeral}[1]{#1}
\newcommand{\Real}{\mathbb R}
\newcommand{\Nat}{\mathbb N}

\newcommand{\sem}[2][\Gamma]{\left\llbracket #2\right\rrbracket_{#1}}
\newcommand{\pto}{\rightharpoonup}
\newcommand{\red}[1][]{\xrightarrow{#1}}
\newcommand{\reds}[1][]{\mathrel{\mathop{\xrightarrow{#1}}\!{}^\ast}}

\newcommand{\isub}[2]{\{#1/#2\}}
\newcommand{\ctxthole}{\{\cdot\}}
\newcommand{\Ctxt}{\mathsf C}
\newcommand{\Ctxtp}{\mathsf D}
\newcommand{\ctxt}[2][\Ctxt]{#1\{#2\}}

\newcommand{\hCtxt}{\mathsf{H}}
\newcommand{\hctxt}[2][\hCtxt]{#1\{#2\}}

\newcommand{\apx}{\sqsubset}
\newcommand{\Stab}[1]{\mathrm{S}(#1)}
\newcommand{\UStab}[1]{\mathrm{U}(#1)}
\newcommand{\st}{\mathrel{|}}
\newcommand{\seq}[1]{\mathbf{#1}}
\newcommand{\trapx}{\mathrel{\text{\raisebox{-4pt}{$\stackrel{\text{\scalebox{1.3}{$\sqsubset$}}}{\sim}$}}}}

\newcommand{\Ball}[2]{B_{#1}(#2)}

\newcommand{\Jacob}{\mathop{\mathcal J}\!}
\newcommand{\FSymb}{\overrightarrow{\mathbf{D}}}
\newcommand{\Fwd}[2][]{\FSymb_{#1}(#2)}
\newcommand{\BSymb}{\overleftarrow{\mathbf{D}}}
\newcommand{\Bkw}[2][]{\BSymb_{#1}(#2)}
\newcommand{\FBSymb}{\mathbf{D}}
\newcommand{\FB}[2][]{\FBSymb_{#1}(#2)}
\newcommand{\FBgrad}[2][]{{\mathop{grad}}_{#1}(#2)}
\newcommand{\Fgrad}[2][]{\overrightarrow{\mathop{grad}}_{#1}(#2)}
\newcommand{\Bgrad}[2][]{{\overleftarrow{\mathop{grad}}_{#1}(#2)}}
\newcommand{\FBgradSymb}[1][]{{\mathop{grad}}_{#1}}

\newcommand{\interior}[1]{\mathrm{int}(#1)}
\newcommand{\border}[1]{\mathrm{bor}(#1)}
\newcommand{\diffdom}[2][]{\mathrm{d}^{#1}(#2)}
\newcommand{\dom}[2][]{\mathop{\Downarrow^{#1}}\! #2} 
\newcommand{\pair}[1]{\left\langle #1\right\rangle}

\newcommand{\pr}[1]{\pi_{#1}} 
\newcommand{\fixapx}{\leq}
\newcommand{\expd}{\lhd}

\newcommand{\id}{\mathrm{id}}

\newcommand{\clone}{\mathbf}

\begin{document}

\title{Automatic Differentiation in PCF}         


\author{Damiano Mazza}
\affiliation{
  \institution{CNRS}            
  \streetaddress{UMR 7030, LIPN, Universit\'e Sorbonne Paris Nord}
  \country{France}                    
}
\email{Damiano.Mazza@lipn.univ-paris13.fr}          

\author{Michele Pagani}
\affiliation{
  \department{IRIF UMR CNRS 8243}             
  \institution{Universit\'e de Paris}           
  \country{France}                   
}
\email{pagani@irif.fr}         

\begin{abstract}
We study the correctness of automatic differentiation (AD) in the context of a higher-order, Turing-complete language (PCF with real numbers), both in forward and reverse mode. Our main result is that, under mild hypotheses on the primitive functions included in the language, AD is almost everywhere correct, that is, it computes the derivative or gradient of the program under consideration \emph{except} for a set of Lebesgue measure zero. Stated otherwise, there are inputs on which AD is incorrect, but the probability of randomly choosing one such input is zero. Our result is in fact more precise, in that the set of failure points admits a more explicit description: for example, in case the primitive functions are just constants, addition and multiplication, the set of points where AD fails is contained in a countable union of zero sets of non-identically-zero polynomials.

\end{abstract}

\begin{CCSXML}
<ccs2012>
<concept>
<concept_id>10003752.10010124.10010131</concept_id>
<concept_desc>Theory of computation~Program semantics</concept_desc>
<concept_significance>500</concept_significance>
</concept>
<concept>
<concept_id>10003752.10010070</concept_id>
<concept_desc>Theory of computation~Theory and algorithms for application domains</concept_desc>
<concept_significance>300</concept_significance>
</concept>
</ccs2012>
\end{CCSXML}

\ccsdesc[500]{Theory of computation~Program semantics}
\ccsdesc[300]{Theory of computation~Theory and algorithms for application domains}

\keywords{Differentiable Programming, Lambda-Calculus, Linear Logic}  

\maketitle

\section{Introduction}
\label{sect:Intro}
Automatic differentiation (AD) provides efficient methods for computing the derivative (or, more generally, the gradient or Jacobian) of a function specified by a computer program. Since computing derivatives is a key ingredient in the resolution of all sorts of optimization problems, it is not surprising that AD grew into a large field with applications to a host of scientific domains, most notably machine learning~\cite{Baydin:ADsurvey}.

Traditionally, AD focused on first-order imperative programs and, although its techniques allow the presence of flow control instructions and loops~\cite{Joss,Speelpenning,Beck1994}, its scope was often limited to straight-line programs (also know as \emph{computational graphs}~\cite{Goodfellow-et-al}), which were enough for most practical purposes, such as expressing neural networks. After the advances in deep learning of the last years, this is no longer the case: neural network architectures are now ``dynamic'', in the sense that the input may influence the shape of the net, and expressing such architectures requires resorting a priori to the full power of a modern programming language, yielding what some have called \emph{differentiable programming}~\cite{LeCun:diff}. This evolution of deep learning spurred the rapid development of differentiable programming frameworks \cite{PyTorch,TensorFlow} and, at the same time, received much attention in programming languages (PL) research for establishing its theoretical foundations~\cite{Elliott,Purdue,ShaikhhaFVJ19,AbadiP20,POPL2020,HuotStatonVakar}.

From the viewpoint of PL theory, AD methods boil down to \emph{program transformations}: writing $\Real$ and $\tyR$ for the set and type of real numbers, respectively, we receive as input a program $M:\tyR^n\to\tyR$ computing a (possibly partial) function $\sem[]{M}:\Real^n\pto\Real$ whose gradient $\nabla\!\sem[]M$ exists in a set $\diffdom{M}\subseteq\Real^n$ (called the \emph{domain of differentiability}), and we must output another program $\FBgrad M$ computing $\nabla\!\sem[]M$. The crucial features that one typically asks of such transformations are:
\begin{itemize}
	\item[(i)] \textbf{efficiency}: asymptotically, evaluating $\FBgrad M$ is not more costly than evaluating $M$;
	\item[(ii)] \textbf{soundness}: $\FBgrad{M}(\seq r)$ evaluates to $\nabla\!\sem[]M\!(\seq r)$ for all $\seq r\in\diffdom{M}$.
\end{itemize}
Notice that there is a tension between efficiency and soundness: implementing the definition of derivative as a limit gives a trivially sound (to an arbitrary degree of precision) but unacceptably inefficient method. Conversely, more efficient transformations tend to be more complex (\eg, reverse mode is more complex than forward mode, see below) and their soundness more difficult to prove. Also observe that we are only interested in the correctness of the result when $\nabla\!\sem[]{M}$ is defined; in case $\seq r\not\in\diffdom{M}$, the evaluation of $\FBgrad{M}(\seq r)$ may give anything, including (but not necessarily!) divergence.

Another highly desirable feature of $\FBgradSymb$ is \emph{modularity}: if $P$ is a subprogram of $M$ then $\FBgrad{P}$ is a subprogram of $\FBgrad{M}$ or, if this is not literally the case, the computation of the former may be reused in computing the latter. Indeed, it has been known from the early days of AD that modularity offers a path to attaining both efficiency and soundness: the program $M$ is decomposed into elementary blocks whose gradients are immediately computable, and $\FBgrad{M}$ is obtained by assembling these transformed blocks following the structure of $M$. In this way, the execution of $\FBgrad{M}$ mimics that of $M$, yielding efficiency, and soundness relies on the so-called \emph{chain rule} of calculus, which assures us that the derivative of a compound function may be expressed in terms of the derivative of its components. Furthermore, one sees that there are two ``dual'' ways of assembling the transformed blocks to form $\FBgrad{M}$: a covariant way, yielding \emph{forward mode} AD, and a contravariant way, yielding \emph{reverse mode} AD (this will be explained in Sect.~\ref{subsect:ad}).

The theory of AD transformations has by now been developed to considerable depth by several authors: Pearlmutter and Siskind first pointed out that reverse mode AD, commonly known as \emph{backpropagation}, may be naturally expressed in terms of higher-order programs, and used this idea to develop a differentiable variant of Scheme~\cite{PearlmutterSiskind}; more recently, Elliott emphasized functoriality as a systematic way of understanding the modular nature of AD transformations~\cite{Elliott}; the work \cite{Purdue} introduced Lantern, a fully general differentiable programming framework in which the notion of delimited continuation is used to correctly handle memory updates during backpropagation; finally, Brunel, Mazza and Pagani showed that the continuation-passing machinery at work in \cite{Purdue} (and, implicitly, in \cite{PearlmutterSiskind}) may be understood in terms of linear negation (in the sense of Girard's linear logic), giving a purely functional transformation for reverse mode AD and a conceptually clean analysis of its efficiency in terms of a ``linear factoring'' evaluation rule~\cite{POPL2020}. On the semantics side, Abadi and Plotkin studied denotational semantics for a first order differentiable language~\cite{AbadiP20} and Huot, Staton and V\'ak\'ar gave a uniform approach to proving soundness of AD (forward and reverse) for simply-typed programs based on a diffeology semantics~\cite{HuotStatonVakar}.

Nevertheless, an analysis of soundness of AD transformations for a fully general programming language is currently missing: the above-mentioned work is either fully general but lacks soundness proofs~\cite{PearlmutterSiskind,Purdue} or proves soundness in a restricted setting (first order~\cite{AbadiP20} or simply-typed \mbox{$\lambda$-calculi}~\cite{POPL2020,HuotStatonVakar,BartheCLG}). Filling this gap is precisely the contribution of the present paper: we study the soundness of AD transformations in the setting of the (idealized) functional programming language \PCF, a variant with real numbers of Plotkin's famous Turing-complete language~\cite{Plotkin:PCF}. For forward mode, we use the standard transformation described for instance in \cite{Purdue}. For reverse mode, since \PCF\ is purely functional, we consider an extension of the transformation introduced in \cite{POPL2020}, albeit simplified in that we leave linearity aside, since that is only needed for efficiency and here we are merely interested in soundness. The extension, which is a contribution of this paper in its own right, concerns conditional statements and fixpoints, which are not dealt with in \emph{loc.~cit}.

The first relevant observation is that, in presence of conditionals, soundness in the sense of statement (ii) above actually fails. Consider the program
$$\mathsf{SillyId} \quad:=\quad \lambda x^\tyR.\textsf{if}\,x=0\,\mathsf{then}\,0\,\mathsf{else}\,x.$$
We clearly have that $\mathsf{SillyId}:\tyR\to\tyR$ and that $\sem[]{\mathsf{SillyId}}$ is the identity function. We therefore expect $\FBgrad{\mathsf{SillyId}}$ to compute the constant function $1$. And yet, by modularity/functoriality, AD transformations will give something like
$$\FBgrad{\mathsf{SillyId}}\quad=\quad \lambda x^\tyR.\textsf{if}\,x=0\,\mathsf{then}\,0\,\mathsf{else}\,1,$$
which obviously gives the wrong result for $x=0$. This phenomenon, which is well known in the AD community~\cite{Beck1994}, is due to functoriality turning a syntactic discontinuity into a semantic one. Notice that, although the above example is indeed quite silly, similar situations may happen in a non-trivial neural network with rectified linear unit activation: if
$$\mathsf{ReLU} \quad:=\quad \lambda x^\tyR.\textsf{if}\,x\leq 0\,\mathsf{then}\,0\,\mathsf{else}\,x,$$
then $\mathsf{ReLU}(x)-\mathsf{ReLU}(-x)$ behaves exactly as $\mathsf{SillyId}(x)$. Also, using recursive definitions, it is easy to obtain programs on which AD fails on infinitely many inputs, even uncountably many in case of programs of type $\tyR^n\to\tyR$ with $n>1$ (simply consider $\lambda x^\tyR.\lambda y^\tyR.\textsf{if}\,x\cdot y= 0\,\mathsf{then}\,0\,\mathsf{else}\,x\cdot y$).

So, to each given \PCF\ program $M:\tyR^n\to\tyR$, we may assign a set $\mathrm{Fail}(M)\subseteq\diffdom{M}$ of points on which AD is unsound. Notice once again that we disregard what lies outside of $\diffdom{M}$, where $\FBgrad{M}$ is free to behave arbitrarily. For instance, $\FBgrad{\mathsf{ReLU}}$ is something like $\lambda x^\tyR.\textsf{if}\,x\leq 0\,\mathsf{then}\,0\,\mathsf{else}\,1$, which evaluates to $0$ when $x=0$, even though $\sem[]{\mathsf{ReLU}}$ is not differentiable in $0$. Morally, we cannot say that AD is ``wrong'' when there is no ``right'' value to compare it to.

After toying with more examples, one is led to conjecture that $\mathrm{Fail}(M)$ is always of measure zero (in the sense of the standard Lebesgue measure on $\Real^n$), so one may hope to establish an ``almost-everywhere'' relaxation of (ii):
\begin{itemize}
	\item[(ii')] \textbf{ae-soundness:} $\FBgrad{M}(\seq r)$ evaluates to $\nabla\!\sem[]M\!(\seq r)$ for all $\seq r\in\diffdom{M}$ \emph{except} on a set of measure zero.
\end{itemize}
\todod{Added paragraph} This is exactly the main result of our paper. Let us stress that, considering that ``full'' soundness is impossible, such a result is quite meaningful in practice because of the link between the Lebesgue measure and the standard understanding of randomness on $\Real^n$. In typical deep learning applications, weights are initialized ``at random'' and later updated via gradient descent in order to minimize a loss function. Technically, this means that the weights evolve following some standard probability distribution, which always arises from integrating a probability density function with respect to the Lebesgue measure. So, an informal way of stating (ii') is that \emph{AD almost never fails}, in the sense that, according to the standard definition of probability, the likelihood of computing wrong derivatives\todom{added} during gradient descent is zero.

The main result itself is articulated in \refth{Sound} and \refth{Unsound}. The first result takes care of soundness proper: we define, for any given \PCF\ program $M:\tyR^n\to\tyR$ whose domain (\ie, the inputs on which it converges) is $\dom M$, a set $\Stab{M}\subseteq\dom M$ of \emph{stable points}, and \refth{Sound} affirms that statement (ii) holds on $\diffdom{M}\cap\Stab{M}$. Then, we establish in \refth{Unsound} that the set $\dom M\setminus\Stab{M}$ of \emph{unstable points} of $M$ is of measure zero. Since $\diffdom{M}\subseteq\dom M$, this proves statement (ii').

The intuition behind stable points is the following. Take $M:\tyR^n\to\tyR$, $\seq r\in\Real^n$ and trace the execution of $M(\seq r)$. This means, in particular, unfolding the recursive definitions in $M$ and choosing, for each instance of a conditional statement of $M$, the ``$\mathsf{then}$'' or ``$\mathsf{else}$'' branch. One obtains thus a program with no conditionals and no fixpoints, \ie, a \emph{simply-typed \lat}, which is said to \emph{trace} $M(\seq r)$. Such a program depends of course on $\seq r$. If, however, there exists a simply-typed \lat\ $t$ and an open neighborhood $U\subseteq\Real^n$ of $\seq r$ (in the standard topology) such that $t(\seq r')$ traces $M(\seq r')$ for all $\seq r'\in U$, then $\seq r$ is \emph{stable} (\refdef{Stab}). For instance, any $r\neq 0$ is stable for the $\mathsf{ReLU}$ program given above: there always exists an open interval $I$ around $r$ such that either $\mathsf{Zero}:=\lambda x^\tyR.0$ or $\mathsf{Id}:=\lambda x^\tyR.x$ traces $\mathsf{ReLU}$ on $I$, depending on whether $r<0$ or $r>0$, respectively. On the other hand, $0$ is an unstable point of $\mathsf{ReLU}$: any open interval around $0$ must contain negative points, on which $\mathsf{ReLU}$ is traced by $\mathsf{Zero}$, and positive points, on which $\mathsf{ReLU}$ is traced by $\mathsf{Id}$, and of course $\mathsf{Zero}\neq\mathsf{Id}$. 

The proof of \refth{Sound} uses stability to reduce the soundness of AD on \PCF\ to the soundness of AD on simply-typed \lat s, which may be established in various ways~\cite{POPL2020,HuotStatonVakar}. The idea is simple: if $\seq r\in\Stab M$, then $M$ ``behaves like'' a simply-typed \lat\ $t$ in an open neighborhood of $\seq r$, and we know that AD works for $t$ everywhere, so it ``must'' work for $M$ on $\seq r$. Although intuitively clear, the actual argument is surprisingly subtle. First, the definition of trace (\refdef{TrApx}) is not obvious, \todod{Added sentence} due to non-uniformity issues introduced by higher types: two copies of the same higher-order subterm may be traced in different ways, as explained in the example given at the beginning of Sect.~\ref{subsection:traces}. Second, knowledge of the correctness of $\FBgrad{t}(\seq r)$ does not immediately imply the correctness of $\FBgrad{M}(\seq r)$ and some non-trivial work is needed to show that they behave similarly.

The measure-zero bound on unstable points (\refth{Unsound}), albeit obtained via a standard logical predicate argument, also requires non-trivial elements, most notably the notion of \emph{complete quasicontinuity}, which is needed to account for the behavior of unstable points under composition, and the related notion of \emph{quasivariety}. Although the exact meaning of these notions depends on the choice of primitives of \PCF\ (\ie, the basic real functions included in the language), under mild assumptions a quasivariety is always of measure zero, and \refth{Unsound} states precisely that $\mathrm{Fail}(M)$ is a quasivariety. For example, when the primitives are just constants, addition and multiplication, quasivarieties are arbitrary subsets of countable unions of zero sets of polynomials. Furthermore, our \reflemma{Adequacy} implies properties of \PCF-definable functions which, as far as we can tell, were previously unknown. For example, if $f:\Real^n\pto\Real$ is definable in \PCF, and if $U\subseteq\Real$ is open, then in general the border of $f^{-1}(U)$ (\ie, $f^{-1}(U)$ minus its interior) is not empty because conditionals introduce discontinuities, but \emph{it is always a quasivariety}. Similarly, \emph{the set of zeros of $f$ is always the disjoint union of an open set and a quasivariety}.

\paragraph*{Related work.}
We already mentioned some of the relevant previous work at the interface between AD and PL theory and stressed that our contribution here is to study the soundness of AD in a fully general setting (higher-order, Turing-complete language), something which, as far as we know, was lacking.

We also mentioned that the unsoundness of AD in presence of conditional statements is well known~\cite{Beck1994}. Surprisingly, though, recent PL work on the subject acknowledges this problem only sporadically, \eg~\cite{AbadiP20}\todod{Changed again}. 
The solution proposed therein is restricting to \emph{continuous} Boolean conditions, meaning that the inverse images of the two Boolean values along such conditions are open. For example, testing for zero or for non-positivity are not continuous, because the inverse image of $\mathsf{true}$ is the singleton $\{0\}$ or $]\!-\!\infty,0]$, respectively, which are not open. With this limitation, the authors prove statement (ii) for a Turing-complete first-order language. The benefit of Abadi and Plotkin's approach is allowing a denotational semantics modeling the $\FBgradSymb$ operator, but it has the drawback of representing standard total functions with programs diverging on singularities (for example, $\mathsf{ReLU}$ yields a program diverging in $0$), which is somewhat unexpected. We discuss this further at the end of Sect.~\ref{subsect:ad} and in \refsect{Conclusions}.

Our approach, in the wake of a large part of the AD literature, is to stick to the standard semantics and provide a bound on the unsoundness of AD, as precise as possible. This approach dates back to the Seventies, as far as we know to Joss's Ph.D.~thesis \cite{Joss}, who proved statement (ii') for forward mode AD in the context of an imperative language with variable assignments, basic arithmetic functions (sum, multiplication, division), conditional statements and gotos. So our result may be seen as an extension Joss's theorem in several directions: to a higher-order language; to a wider set of primitive functions (as long as they form an \emph{admissible clone}, \refdef{admissibility}); and to reverse mode AD. Additionally, \refth{Unsound} is more precise than (ii'), because it characterizes the set of failure points $\mathrm{Fail}(M)$ as a quasivariety, which is a rather special example of negligible set. Some discussion about this point is given in \refsect{Conclusions}, in particular the proof of \refth{Unsound} hints to methods for automatically computing at least some overapproximation of $\mathrm{Fail}(M)$ statically from the structure of $M$.

\todod{Added two paragraphs}
From a broader perspective, variants of PCF with real numbers similar to the one studied here have been considered in the literature, \eg\ \cite{Escardo} and \cite{DiGianantonioEdalat}. The latter actually also considers AD, but it is not about correctness and is quite different in spirit, being more focused on denotational semantics. There is also a recent line of work whose goal is to understand the non-differentiable points of program-defined functions, such as \cite{ZhouGKRYW,MakOngPaquetWagner}, including in the context of AD \cite{LeeYuRivalYang}, where it is a natural and important question~\cite{GriewankWalther}. 
Non-differentiability and unsoundness of AD have an important point in common: they are both introduced by conditionals. This explains why some notions used in our paper also crop up in the study of non-differentiability, such as zero sets of analytic functions~\cite{ZhouGKRYW}. However, let us underline that the two issues are orthogonal: from our perspective, PCF-definable functions might as well have been differentiable \emph{everywhere} (as in the $\mathsf{SillyId}$ example), what matters is that AD still makes mistakes and we wish to understand them. Whether our techniques also yield tools for describing the set of non-differentiable points of PCF-definable functions and, in that case, exactly how they relate to the above-mentioned work is an interesting question which we leave for the future.

Finally, let us mention that our notion of stable point (\refdef{Stab}), which is new as far as we know, is based on a concept of trace (\refdef{TrApx}) belonging to the same circle of ideas as Ehrhard and Regnier's Taylor expansion \cite{EhrhardRegnier:Taylor1,EhrhardRegnier:Taylor2} and the modern understanding of intersection types in the spirit of Mazza, Pellissier and Vial's work~\cite{Mazza:Habilitation,MazzaPellissierVial}. This extremely general perspective allows the definition of finitary approximations of programs at the level of the operational semantics, rather than denotational, as was the case traditionally. Our proof techniques are therefore not \emph{ad hoc} for our current purposes and may be expected to have applications beyond the present paper.

\paragraph{Contents of the paper.}
\refsect{Prelim} introduces the language \PCF\ with its rewriting relation (\reffig{pcf}) and the AD transformations (\reffig{AD} and Equations~\eqref{eq:grad_f},~\eqref{eq:grad_b}). Our results are quite general and do not depend on a specific operational semantics but apply to a wide family of them (\refprop{reduction_independence}), including the standard ones (call-by-value, call-by-name, etc.). We also introduce admissibility of primitives (\refdef{admissibility}) and the associated topological notions, among which quasivarieties (\refdef{quasivariety}). \refsect{Soundness} and \refsect{Unsoundness} are the heart of the paper, containing the proofs of \refth{Sound} and \refth{Unsound}, respectively, as described above. \refsect{Conclusions} concludes the paper, discussing the results. Most proofs of Sections~\ref{sect:Prelim}, \ref{sect:Soundness} and~\ref{sect:Unsoundness} are postponed to Appendices~\ref{app:prel},~\ref{app:sound} and~\ref{app:unsound}, respectively.

\paragraph{Notations.}
We write $f:A\pto B$ to say that $f$ is a partial function from a set $A$ to a set $B$. In that case, $\dom f$ will denote the subset of $A$ on which $f$ is defined. Given two partial functions $f,g: A\pto B$ and $a\in A$, the equality $f(a)=g(a)$ means that either both $f(a)$ and $g(a)$ are defined and equal, or that both $f(a)$ and $g(a)$ are undefined. The notation $f(a)\neq g(a)$ of course is understood as the logical negation of that. Given a subset $A'\subseteq A$, we write $f\vert_{A'}$ for the restriction of $f$ to $A'$. If $A$ is endowed with a complete measure $\lambda$ (typically $A$ is $\mathbb R$ and $\lambda$ is the Lebesgue measure), then we say that $f$ and $g$ are \emph{almost everywhere equal}, and we write $f\sim g$, if $\lambda(\{a\in A\st f(a)\neq f(b)\})=0$. This is equivalent to the existence of two subsets $A',Z$ of $A$ such that $\dom f\cup\dom g\subseteq A'\cup Z$, $\lambda(Z)=0$ and $f\vert_{A'}=g\vert_{A'}$, a fact which is implicitly used in the proof of \refprop{reduction_independence}.\footnote{Proof of the equivalence: let $D:=\{a\in A\st f(a)\neq f(b)\}$. If $\lambda(D)=0$, then we may take $A':=(\dom f\cup\dom g)\setminus D$ and $Z:=D$. Conversely, given $A'$ and $Z$ with the required properties, notice that $D\subseteq\dom f\cup\dom g$, hence $D\subseteq A'\cup Z$. But observe that $D\cap A'=\emptyset$, so $D\subseteq Z$, which gives us $\lambda(D)=0$.}

We write $\Ball\varepsilon{\seq r}$ for the open ball of $\Real^n$ of radius $\varepsilon$ centered at $\seq r\in\Real^n$. \todod{Changed following Reviewer C's suggestion.} Given $f:\Real^n\pto\Real^m$, we denote by $\diffdom{f}$ the \emph{domain of differentiability} of $f$, defined to be the set of all $\seq r\in\Real^n$ where $f$ is differentiable in the sense that the total derivative of $f$ at $\seq r$ exists, \ie, $f$ admits a best linear approximation at $\seq r$.  We denote by $\Jacob f$ the \emph{Jacobian} of $f$. We recall that, if $\partial_if_j$ denotes the partial derivative of the $j$-th component of $f$ with respect to its $i$-th parameter, then within $\diffdom f$ the Jacobian is equal to the $m\times n$ matrix $(\partial_if_j)$. If $m=1$, the Jacobian is called \emph{gradient} and denoted by $\nabla\!f$. As a special case of the above, within $\diffdom{f}$ we have $\nabla\!f=(\partial_1f,\ldots,\partial_nf)$. Although it may happen in general that all partial derivatives exist without the Jacobian/gradient being defined, it will never be the case in what follows because \emph{we will always work within $\diffdom{f}$}.

\section{PCF with Real Numbers}
\label{sect:Prelim}
\subsection{Terms and Semantics}
\label{sect:Terms}
\begin{figure*}
\begin{subfigure}{\textwidth}
	$$A,B::= \tyR \gsep A\to B \gsep A_1\times\dots\times A_n$$
	\caption{Types.}
	\label{fig:types}
\end{subfigure}

\begin{subfigure}[b]{\textwidth}
	\vspace{10pt}
	$$
	\infer{\Gamma,x^A\vdash x:A}{}
	\qquad\qquad
	\infer{\Gamma\vdash \funsa(M_1,\dots,M_k):\tyR}{\funsa:\tyR^k\to\tyR,&\Gamma\vdash M_1:\tyR, &\dots,&\Gamma\vdash M_k:\tyR}
	$$
	\medskip
	$$
	\infer{\Gamma\vdash \lambda x^{A}.M:A\to B}{\Gamma, x^{A}\vdash M:B}
	\qquad\qquad	
	\infer{\Gamma\vdash MN:B}{\Gamma\vdash M:A\rightarrow B,\quad \Gamma\vdash N:A}
	$$
	\medskip
	$$
	\infer{\Gamma\vdash \pair{M_1, \dots, M_k}:A_1\times\cdots\times A_k}{\Gamma\vdash M_1:A_1, &\dots,&\Gamma\vdash M_k:A_k}
	\qquad\qquad	
	\infer{\Gamma\vdash \pi_i^k M:A_i}{\Gamma \vdash M:A_1\times\cdots\times A_k}
	$$
	\medskip
	$$
	\infer{\Gamma\vdash \ite PMN:A}{\Gamma\vdash P:\tyR & \Gamma\vdash M:A & \Gamma\vdash N:A}
	\qquad\qquad
	\infer{\Gamma\vdash \fix {f^{A\to B}}M:A\to B}{\Gamma, f:A\to B \vdash M:A\to B}\\
	$$
\caption{Terms and typing rules.}\label{fig:terms_types}
\end{subfigure}

\medskip
\begin{subfigure}[b]{\textwidth}
	\begin{align*}
		(\lambda x.M) N &\red M\isub{N}{x}
		&
		\pi_i^k\!\pair{M_1,\dots, M_k} &\red M_i
		&
		\funsa(\numeral{r_1},\dots, \numeral{r_k}) &\red \numeral{\sem[]{\funsa}\!(r_1,\dots, r_k)}
	\end{align*}
	\begin{align*}		
		\ite{\numeral r}MN &\red 
		\begin{cases}
			M & \text{if } r\leq 0\\
			N &\text{if }  r>0
		\end{cases}
		&
		\fix f M &\red M\isub{\lambda x.(\fix f M)x}{f}
	\end{align*}
\caption{Rewriting steps.}\label{fig:reduction}
\end{subfigure}
\caption{The language \PCF{} over the ground type $\tyR$ of real numbers.}\label{fig:pcf}
\end{figure*}
The programming language we use, called \PCF, is introduced in~\reffig{pcf}. There is only one base type, $\tyR$, for real numbers. We use $n$-ary products for convenience. If one prefers, these may be seen as syntactic sugar defined from nullary and binary products. Throughout the paper, we stipulate that unary products are just identities: $\pair{M}$ and $\pr 1^1M$ both stand for $M$. We often omit the index $n$ in a projection $\pr i^n$, when inessential. The empty product type is denoted by $1$. Given a type $A$, we write $A^n$ for the $n$-fold product $A\times\dots\times A$.

The metavariables $\funsa, \funsc, \funsb$ range over a set of function symbols, each coming with a type of the form $\tyR^k\to\tyR$, where $k$ is the \emph{arity} of the symbol. We suppose that the set of function symbols contains at least all real numbers $r\in\Real$ as nullary symbols, called \emph{numerals}, as well as binary addition and multiplication, for which we use infix notation, \ie, $M+N$ and $M\cdot N$ stand for $+(M,N)$ and $\cdot(M,N)$, respectively. We also write $n$-ary sums as syntactic sugar. Each function symbol $\funsa:\tyR^k\to\tyR$ comes with a function $\sem[]\funsa:\Real^k\pto\Real$ and we assume that $\sem[]{r}$, $\sem[]{+}$ and $\sem[]{\cdot}$ are the corresponding numbers and operations on real numbers.  The functions $\sem[]\funsa$ for $\funsa$ ranging over function symbols will be referred to as the \emph{primitive functions} of \PCF.

The notation $\ite PMN$ is just a compact form of $\mathsf{if}\,P\leq 0\,\mathsf{then}\,M\,\mathsf{else}\,N$. We call $P$ the \emph{guard} of the conditional.

The primitive functions one considers are usually very regular, typically analytic (\eg\ exponential, logarithm, trigonometric functions, sigmoid maps\ldots). \refsect{cqc} details the precise conditions that primitives must enjoy in order for our results to hold. The expressive power of \PCF\ considerably enlarges the set of definable functions, in particular introducing singularities. The following examples will be useful in the sequel:
\begin{equation}
\label{ex:pcfterms}
\hspace{-10pt}
\begin{array}{r@{\,}lc@{\!\!\!}r@{\,}l}
\Relu&
	:=\lambda x^\tyR.\ite x0x
&&
\mathsf{Int}&
	:=\lambda x^\tyR.\lambda y^\tyR.\ite{y-x}{\ite{y-x+1}{1}{0}}{1}
\\
\Sid &
	:=\lambda x^\tyR.\ite x{\ite{-x}0x}x
&&
\Floor &
	:=\lambda x^\tyR.\left(\fix f{
		\lambda n^\tyR.
			\ite{\mathsf{Int}\, x\, n}
			{
				n
			}
			{
				f(\ite{x}{n-1}{n+1})
			}
	}\right)0
\end{array}
\end{equation}
$\Relu$ and $\Sid$ are the \PCF{} terms corresponding to the namesake examples discussed in the Introduction. $\Relu$ is a typical example of a continuous non-differentiable function, having a corner in $0$. We know that AD fails on $\Sid$; Sect.~\ref{subsect:ad} will elaborate on this point. The program $\mathsf{Int}$ takes two inputs $x$ and $y$ and gives $0$ if $x\in[y,y+1[$, or $1$ otherwise. It is auxiliary to the definition of the $\Floor$ function, mapping a real number to the greatest integer less than or equal to it. $\Floor$ is an example of how recursive definitions may yield maps with an infinite number of non-differentiable points, being discontinuous on the integers.   

We denote by $M\isub{N}{x}$ the capture-free substitution of a term $N$ to the free occurrences of the variable $x$ in $M$. 

Note that the presence of the additive structure on $\tyR$ turns the product $\tyR^n$ into a biproduct. Indeed, the injection $\iota_i^n$, for $i$ such that $1\leq i\leq n$, may be defined as
\begin{equation}\label{eq:inj}
	\iota_i^n := \lambda x^\tyR. \pair{0,\dots,0,x,0,\dots,0},
\end{equation}
where the variable $x$ is in the $i$-th position of the $n$-tuple. We omit the index $n$ when inessential or clear from the context. 

For every type $A\to B$, we set $\Omega_{A\to B}:=\fix{f^{A\to B}} f$. We write just $\Omega$ when the type is irrelevant. Given a term $\Gamma,f:A\to B\vdash M:A\to B$ and $n\in\Nat\cup\{\infty\}$, we define $\fix[n]{f}{M}$ of type $A\rightarrow B$ as follows:
\begin{equation}
	\label{eq:fixpoint_approx}
		\fix[0]{f}{M} := \Omega,\qquad
		\fix[n+1]{f}{M}:= (\lambda f.M)(\lambda x.(\fix[n]{f}{M})x),\qquad
		\fix[\infty]{f}{M}:= \fix f M.
\end{equation}


Throughout the paper, we use boldface metavariables to denote sequences of metavariables, \ie, $\seq x=x_1,\ldots,x_n$ is a sequence of variables, $\seq M=M_1,\ldots,M_n$ is a sequence of terms, etc. The length of the sequence is specified only when necessary.

Let us introduce two particularly important classes of terms:
\begin{definition}[program, simple term]
	A typing environment $\Gamma$ is \emph{ground} whenever all of its variables have type $\tyR$. A \PCF\ term $M$ is called a \emph{program of arity $n$ and coarity $m$} whenever $x_1^{\tyR},\dots,x_n^{\tyR} \vdash M:\tyR^m$.
	
	A term is called \emph{simple} if it does not contain conditionals or\todod{"or" is fine.} fixpoints. Small Latin letters $t,u,v$ range over simple terms. Note that the subset of the simple terms of \PCF\ corresponds to the simply typed \mbox{$\lambda$-calculus} on the ground type $\tyR$ enriched with function symbols.
\end{definition}

A \emph{context} is a term with a single occurrence of a special variable $\ctxthole$, called the \emph{hole}. We use metavariables $\Ctxt,\Ctxtp$ to range over contexts. Given a context $\Ctxt$ and a term $M$, we write $\ctxt M$ for the term obtained by replacing the hole $\ctxthole$ of $\Ctxt$ with $M$, allowing the capture of the free variables in $M$ by the binders of $\Ctxt$. 

 The reduction relation $\red$ is defined by context closure of the rewriting rules in~\reffig{reduction}:
\begin{definition}[reduction]\label{def:reduction}
\reffig{reduction} defines the set of \emph{rewriting rules} of \PCF, which are pairs $R\red P$ of terms, with $R$ called the \emph{redex} and $P$ the \emph{contractum} of the rule. 

A \emph{reduction step} $\sigma$ is a triple $(\Ctxt, R, P)$ such that $\Ctxt$ is a context, and $R\red P$ is a valid reduction rule. We also write $\sigma: \ctxt{R}\red \ctxt{P}$ or, when the context $\Ctxt$ is irrelevant, simply $\sigma:M\red N$, for $M=\ctxt{R}$ and $N=\ctxt{P}$, and say that $\sigma$ \emph{fires} the redex $R$ in $M$. A term $M$ without redexes, \ie\ such that $M\neq \ctxt{R}$ for any $\Ctxt$ and any redex $R$, is said to be \emph{normal}, or a \emph{normal form}.

Given another context $\mathsf D$, we denote by $\ctxt[\mathsf D]{\sigma}$ the step $(\ctxt[\mathsf D]{\Ctxt}, R, P)$. Similarly we write $\sigma\isub{N}{x}$ for the triple   $(\Ctxt\isub{N}{x}, R\isub{N}{x}, P\isub{N}{x})$, which is still a valid reduction step. 

A \emph{reduction sequence} $\rho$ from a term $M$ to a term $N$, in symbols $\rho:M\reds N$, is either empty, in which case $N=M$, or a sequence of reduction steps $(\Ctxt_i, R_i, P_i)_{1\leq i\leq n}$ with $n\geq 1$ such that $M=\ctxt[\Ctxt_1]{R_1}$, $N=\ctxt[\Ctxt_n]{P_n}$ and for all $i<n$, $\ctxt[\Ctxt_i]{P_i}=\ctxt[\Ctxt_{i+1}]{R_{i+1}}$. We call $M$ the \emph{source} of $\rho$, $N$ its \emph{target} and $n$ the \emph{length} of $\rho$. We often identify a single reduction step and the corresponding reduction sequence of length $1$. The notations $\ctxt[\mathsf D]{\rho}$ and $\rho\isub{N}{x}$ are extended to reduction sequences in the obvious way.

Two reductions sequences $\rho:M\reds M'$ and $\rho':M'\reds M''$ compose in the obvious way to yield a reduction sequence $\rho\rho':M\reds M''$. Empty reduction sequences are the identities of such an operation.

A sequence $\rho$ is \emph{normalizing} if there is no reduction step $\sigma$ such that $\rho\sigma$ is a valid reduction sequence, or, equivalently, if the target of $\rho$ is a normal form. A term $M$ is called \emph{normalizing} if there exists a normalizing reduction from $M$. Otherwise $M$ is said to be \emph{diverging}.
\end{definition}

\PCF\ is Turing-complete: usual PCF~\cite{Plotkin:PCF} is essentially the fragment obtained by restricting to integer (including negative) numerals and sum. We recall some results about reduction which are completely standard (see \eg\ \cite{amadio98book}).

\begin{proposition}[confluence]\label{prop:confluence}
	Whenever $M\reds N_1$ and $M\reds N_2$, there exist $N$ such that $N_1\reds N$ and $N_2\reds N$. In particular, if $N_1$ and $N_2$ are normal forms, then $N_1=N_2$. 
\end{proposition}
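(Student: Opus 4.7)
The plan is to use the standard Tait--Martin-L\"of parallel reduction technique, adapted to the specific constructors of \PCF. I would define a parallel reduction relation $\Rightarrow$ on terms, closed under all constructors, such that in a single $\Rightarrow$-step one may simultaneously contract an arbitrary set of (non-overlapping or properly nested) redexes appearing at different positions of a term. Explicitly, $\Rightarrow$ is reflexive, compatible with all term formers, and satisfies clauses of the form: if $M\Rightarrow M'$ and $N\Rightarrow N'$ then $(\lambda x.M)N \Rightarrow M'\isub{N'}{x}$; if $M_i\Rightarrow M_i'$ then $\pr i^k\pair{M_1,\dots,M_k}\Rightarrow M_i'$; if $M_i\Rightarrow \numeral{r_i}$ with $\sem[]{\funsa}(r_1,\dots,r_k)$ defined, then $\funsa(M_1,\dots,M_k)\Rightarrow \numeral{\sem[]{\funsa}(r_1,\dots,r_k)}$; analogous clauses for the conditional once the guard parallel-reduces to a numeral; and finally, if $M\Rightarrow M'$, then $\fix{f}{M}\Rightarrow M'\isub{\lambda x.(\fix f {M'})x}{f}$.

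Once $\Rightarrow$ is set up, the proof proceeds in four standard steps. First, I would verify the \emph{substitution lemma}: $M\Rightarrow M'$ and $N\Rightarrow N'$ imply $M\isub{N}{x}\Rightarrow M'\isub{N'}{x}$, by induction on the derivation of $M\Rightarrow M'$. Second, I would prove the \emph{diamond property} for $\Rightarrow$: if $M\Rightarrow N_1$ and $M\Rightarrow N_2$, there exists $N$ with $N_1\Rightarrow N$ and $N_2\Rightarrow N$. This is shown by induction on $M$, performing a case analysis on which of the two parallel steps contracts which redexes, and appealing to the substitution lemma for the $\beta$-, fixpoint-, and conditional-cases. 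Third, I would check the sandwich $\red\;\subseteq\;\Rightarrow\;\subseteq\;\reds$, which is immediate from the definitions, yielding that $\Rightarrow^\ast$ coincides with $\reds$. Fourth, a standard tiling argument converts the diamond property of $\Rightarrow$ into confluence of $\reds$.

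The bulk of the work is in the diamond lemma, and the genuinely delicate cases are those created by the new constructors. For $\funsa(\numeral{r_1},\dots,\numeral{r_k})$ and $\ite{\numeral r}MN$, the redex pattern is uniquely determined by the numeral shape of the guard/arguments, which are already normal, so two parallel steps contracting such a redex produce the \emph{same} contractum, and non-contracting steps on the subterms do not interfere. For the fixpoint, the eta-expanded form of the unfolding rule in \reffig{reduction} is precisely what is needed to keep parallel reduction well-behaved: if $\fix f M\Rightarrow M_1\isub{\lambda x.(\fix f {M_1})x}{f}$ and $\fix f M\Rightarrow M_2\isub{\lambda x.(\fix f {M_2})x}{f}$, the common reduct is obtained by choosing the common parallel reduct $M'$ of $M_1,M_2$ and applying the substitution lemma to $M'\isub{\lambda x.(\fix f {M'})x}{f}$.

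I expect the main obstacle to be the bookkeeping in the diamond case analysis when parallel redexes overlap through substitution, in particular the interaction between the fixpoint unfolding and a $\beta$-step contracting the just-created abstraction $\lambda x.(\fix f M)x$; handling this cleanly is what motivates the eta-expanded unfolding rule and the precise formulation of the parallel rule for $\mathsf{fix}$. The ``in particular'' clause then follows immediately: if $N_1$ and $N_2$ are normal and $N_1\reds N$, $N_2\reds N$, then both reduction sequences must be empty, yielding $N_1=N=N_2$.
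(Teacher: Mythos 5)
Your proposal is correct and uses the standard Tait--Martin-L\"of parallel-reduction technique, which is exactly the canonical proof of confluence for this kind of extended $\lambda$-calculus. The paper does not actually prove this proposition: it explicitly declares it ``completely standard'' and refers the reader to the textbook literature (Amadio--Curien), so you are filling in detail that the authors intentionally omit, along the same route that the cited references take. Your treatment of the genuinely new constructors (the numeral-guarded $\funsa$- and conditional-redexes, whose arguments are normal and hence cause no overlap; and the $\eta$-expanded fixpoint unfolding, whose parallel rule and substitution-lemma interaction you handle correctly) is sound.
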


\begin{proposition}[subject reduction]\label{prop:subject_reduction}
If $\Gamma\vdash M:A$ and $M\red M'$, then $\Gamma\vdash M':A$. 
\end{proposition}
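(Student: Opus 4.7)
The plan is to prove subject reduction by the standard two-step approach: first establish a substitution lemma, then proceed by induction on the derivation of $M \red M'$, which factors as an analysis of the rewriting rule being applied together with an induction on the surrounding context.

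First I would prove the substitution lemma: if $\Gamma, x^A \vdash M : B$ and $\Gamma \vdash N : A$, then $\Gamma \vdash M\isub{N}{x} : B$. This is shown by straightforward induction on the derivation of $\Gamma, x^A \vdash M : B$. The variable case splits on whether the variable is $x$ (use the hypothesis on $N$) or not (use weakening, which is admissible since typing contexts are treated as sets); all the other typing rules commute with substitution in the expected way, using that substitution is capture-avoiding for the $\lambda$-abstraction, product and fixpoint cases.

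Next I would check preservation for each of the five rewriting rules in \reffig{reduction}. For $(\lambda x.M)N \red M\isub{N}{x}$, inversion on the typing of the redex yields $\Gamma, x^A \vdash M : B$ and $\Gamma \vdash N : A$, and the substitution lemma gives $\Gamma \vdash M\isub{N}{x} : B$. For $\pi_i^k\pair{M_1,\dots,M_k} \red M_i$, inversion yields $\Gamma \vdash M_i : A_i$ directly. For $\funsa(\numeral{r_1},\dots,\numeral{r_k}) \red \numeral{\sem[]{\funsa}(r_1,\dots,r_k)}$, both redex and contractum have type $\tyR$. For $\ite{\numeral r}{M}{N} \red M$ or $N$, the conditional typing rule requires the two branches to share the common type $A$. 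The fixpoint case $\fix f M \red M\isub{\lambda x.(\fix f M) x}{f}$ is the only one requiring some additional care: from $\Gamma, f:A\to B \vdash M : A\to B$ we derive $\Gamma \vdash \fix f M : A \to B$ by the fixpoint rule, hence $\Gamma, x^A \vdash (\fix f M) x : B$ by an application step, and therefore $\Gamma \vdash \lambda x.(\fix f M) x : A \to B$; the substitution lemma then delivers the required typing of the contractum.

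Finally, the context closure is handled by a routine induction on the context $\Ctxt$ with $M = \ctxt{R}$ and $M' = \ctxt{P}$: each typing rule has the property that if the immediate subterm in the hole position can be retyped preserving its type, so can the whole term, using the inductive hypothesis together with the same typing rule. There is no real obstacle here; the only mildly delicate point is the fixpoint rule, where one must perform an $\eta$-expansion of $\fix f M$ before invoking the substitution lemma in order to land again in the image of the typing rules (since the grammar of terms has no explicit typed identity).
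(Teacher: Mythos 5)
Your proof is correct and is precisely the standard argument (substitution lemma, case analysis on the rewriting rule, induction on the enclosing context); the paper itself does not prove the proposition but simply cites it as standard, pointing to Amadio and Curien, and your argument matches what that reference would give. The only small blemish is the final sentence: the remark about $\eta$-expanding $\fix f M$ belongs to your earlier treatment of the fixpoint \emph{rewriting rule}, not to the context-closure induction, where nothing special happens for fixpoints.
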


\begin{proposition}[strong normalization for simple terms]\label{prop:sn}
For every simple term $t$ there exists $n\in\Nat$ such that the length of every reduction sequence starting from $t$ is bounded by $n$. 
\end{proposition}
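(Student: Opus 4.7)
The plan is to adapt the standard Tait--Girard reducibility argument for the simply typed $\lambda$-calculus to accommodate the two extra features present in simple terms: $n$-ary products and primitive function symbols. Since simple terms contain neither conditionals nor fixpoints, these are the only additions beyond the pure simply typed $\lambda$-calculus.

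First, I would define a reducibility predicate $\mathrm{Red}_A$ by induction on the type $A$: for the ground type, $\mathrm{Red}_\tyR$ is the set of strongly normalizing simple terms of type $\tyR$; for $A\to B$, $t\in\mathrm{Red}_{A\to B}$ iff $tu\in\mathrm{Red}_B$ for all $u\in\mathrm{Red}_A$; for $A_1\times\cdots\times A_k$, $t\in\mathrm{Red}_{A_1\times\cdots\times A_k}$ iff $\pi_i^k t\in\mathrm{Red}_{A_i}$ for each $i$. I would then establish the usual Girard properties: (CR1) $t\in\mathrm{Red}_A$ implies $t$ is strongly normalizing; (CR2) $\mathrm{Red}_A$ is closed under reduction; (CR3) if $t$ is \emph{neutral} (not a $\lambda$-abstraction nor a tuple) and every one-step reduct of $t$ lies in $\mathrm{Red}_A$, then $t\in\mathrm{Red}_A$.

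Next comes the fundamental lemma: if $x_1^{A_1},\ldots,x_n^{A_n}\vdash t:B$ is a simple term and $u_i\in\mathrm{Red}_{A_i}$ for each $i$, then $t\isub{\seq u}{\seq x}\in\mathrm{Red}_B$. The proof is by induction on $t$, with the standard cases for variables, application, abstraction, pairing and projection. The only new case is the function symbol application $\varphi(M_1,\ldots,M_k)$: here, by induction, each $M_i\isub{\seq u}{\seq x}$ is in $\mathrm{Red}_\tyR$, hence strongly normalizing; since the whole term $\varphi(\seq M)\isub{\seq u}{\seq x}$ is neutral and its reducts are either obtained by reducing inside an argument (still reducible by IH and CR2) or, once all arguments are numerals, by firing the function symbol rule to produce a numeral (trivially in $\mathrm{Red}_\tyR$), CR3 applies and gives reducibility. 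Variables are reducible by CR3 (they are neutral with no reducts), so instantiating $\seq u$ with $\seq x$ itself yields that every simple term $t$ is reducible, hence strongly normalizing.

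Finally, to pass from strong normalization (every reduction sequence terminates) to the uniform bound stated in the proposition, I would invoke König's lemma: at each term only finitely many redexes are available, so the tree of all reduction sequences issued from $t$ is finitely branching; being strongly normalizing, it has no infinite path, and therefore is finite, providing the desired uniform bound $n$ on the length of reduction sequences starting from $t$. The main obstacle is really just verifying the function symbol case of CR3 correctly, which is the only genuine deviation from the textbook proof; the rest is a faithful transcription of Girard's argument to this slightly enlarged calculus.
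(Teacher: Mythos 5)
Your approach is sound and is the standard one: the paper itself presents this proposition as ``completely standard'' (with a pointer to Amadio--Curien) and does not develop it from scratch, so your Tait--Girard reconstruction plus K\"onig's lemma is a faithful reconstruction of what the cited literature does. One small point worth tightening: in the function-symbol case of the fundamental lemma, applying CR3 requires showing that \emph{every} one-step reduct of $\varphi(M_1,\ldots,M_k)\isub{\seq u}{\seq x}$ lies in $\mathrm{Red}_\tyR$, and the reduct $\varphi(\ldots,M_i',\ldots)$ obtained by reducing inside an argument is again a function-symbol application, so ``reducible by IH and CR2'' is not yet a proof — you need a secondary induction on $\sum_i \nu(M_i\isub{\seq u}{\seq x})$, where $\nu$ is the maximal reduction length guaranteed by CR1, so that this reduct falls under a strictly smaller measure. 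With that inner induction spelled out, the argument goes through exactly as in the textbook treatment.
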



A \emph{reduction strategy} $\mathcal S$ is a relation between terms $M$ of \PCF{} and occurrences of redexes in $M$. A strategy is called \emph{deterministic} whenever it is a partial function.  We denote by $\red[\mathcal S]$ the reduction relation defined by reducing only the redexes fired by $\mathcal S$ and we write $\mathcal S$-nf for a normal form of $\red[\mathcal S]$. We write $\beta$ for the maximal strategy, giving the reduction relation $\red$. Of course this strategy is not deterministic, however \refprop{confluence} assures that the normal form associated with a term is unique if it exists.

Reduction strategies are often defined by fixing a subset of redexes in \reffig{reduction} and a set of \emph{evaluation contexts}. An example which will be useful in the sequel is \emph{head reduction}, which is defined by taking all reduction rules but restricting their application to \emph{head contexts}, generated by the following grammar:
\begin{align*}
	\hCtxt &::= 
		\ctxthole
		\gsep \funsa(M_1,\dots,\hCtxt,\dots,M_k)
		\gsep \hCtxt N
		\gsep \pair{\hCtxt,N} 
		\gsep \pair{M,\hCtxt} 
		\gsep \pr i\hCtxt
		\gsep \ite{\hCtxt}{M}{N}.
\end{align*}
A \emph{head reduction step} is of the form $\ctxt[\hCtxt]{R}\to\ctxt[\hCtxt]{P}$ with $R\to P$ a rewriting step of \reffig{reduction} and $\hCtxt$ a head context. A \emph{head reduction sequence} is a reduction whose steps are all head reduction steps.

Observe that head reduction is not deterministic. Apart from the freedom in the order of the evaluation of the arguments of a function symbol, we allow to reduce within a pair as well as to project it: for instance, the term $\pi_1\!\pair{(\lambda x.x)y,M}$ may be decomposed either as $\ctxt[\hCtxt]{\pi_1\!\pair{(\lambda x.x)y,M}}$ with the empty head context $\hCtxt=\ctxthole$, or as $\ctxt[\hCtxt']{(\lambda x.x)y}$ with the head context $\hCtxt'=\pi_1\!\pair{\ctxthole,M}$, and the two decompositions fire different redexes.

A classic result~\cite{Barendregt,amadio98book} is that head reduction is a ``winning strategy'' for finding the $\beta$-normal form of a closed program:
\begin{proposition}
	\label{prop:headnf}
	Let $M$ be a normalizing closed program (\ie, of type $\tyR^n$) whose $\beta$-normal form is $N$. Then, there is a head reduction sequence $M\reds N$.
\end{proposition}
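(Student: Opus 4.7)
The plan is to deduce the statement from a standardization theorem for \PCF, combined with an analysis of the shape of head normal forms at ground type.

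First, I would establish a standardization lemma: any reduction sequence $\rho: M \reds N$ can be reorganized into a sequence $\rho' : M \reds N$ (of the same multiset of residuals, hence of the same length) whose initial segment consists exclusively of head steps in the sense of the grammar for $\hCtxt$ given above, and whose remaining steps fire only redexes not lying in a head position. This is the standard Plotkin-style standardization, adapted to \PCF. The adaptation is routine because each rewriting rule of \reffig{reduction} fires at the root of a redex whose outermost constructor is one of those allowed by $\hCtxt$ (application of a $\lambda$, projection of a pair, function-symbol application on numerals, conditional on a numeral guard, or fixpoint), and the contractum uses only subterms/residuals of the redex; these are exactly the hypotheses under which the classical permutation lemmas relating head and non-head steps go through. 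I would prove this by induction on the length of $\rho$, permuting any non-head step that precedes a head step using the standard lemma that such a permutation is always possible and does not increase the length.

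Second, I would analyze head normal forms. Inspecting $\hCtxt$, a closed term of type $\tyR$ in head normal form can only be a numeral: it cannot start with $\lambda$ (wrong type), cannot be a variable (closed), cannot be a free pair or projection (wrong type), cannot be a $\funsa$-application on non-numerals (otherwise an argument would still contain a head redex), cannot be a conditional (same reason), and cannot be a $\mathsf{fix}$ (which is itself a head redex). Likewise, a closed term of type $\tyR^n$ in head normal form must be a tuple $\pair{M_1,\ldots,M_n}$, each $M_i$ being a closed head normal form of type $\tyR$, hence a numeral.

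Finally I would assemble the proof. Let $\rho : M \reds N$ be any normalizing reduction sequence, which exists by hypothesis and confluence (\refprop{confluence}). Apply standardization to get $\rho' : M \reds N$ whose head prefix $\rho_h : M \reds M'$ is maximal. Then $M'$ is a head normal form, and by confluence $M' \reds N$ using only non-head steps. Because $\hCtxt$ allows reducing inside components of a pair, and since $M'$ is a tuple $\pair{\numeral{r_1},\ldots,\numeral{r_n}}$ of numerals by the previous paragraph, no non-head step is in fact possible from $M'$; thus $M'=N$, so $\rho_h$ is the desired head reduction. The main obstacle I anticipate is verifying the standardization lemma itself: one must show that permuting a non-head step past a head step does not duplicate or destroy head redexes in a way that breaks the induction, which requires a careful (but standard) analysis of residuals, especially in the presence of $\mathsf{fix}$ (which duplicates its body) and of head contexts that may enter both sides of a pair.
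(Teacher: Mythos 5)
Your overall strategy---standardize the reduction so that head steps come first, then show that a closed head-normal term of type $\tyR^n$ is a tuple of numerals and hence already $\beta$-normal---is indeed the classical route to this result, which the paper cites to the standard references rather than proving. However, the induction you propose for standardization does not close as stated: permuting a non-head step past a head $\beta$-step may \emph{duplicate} the residual of the non-head redex (consider $(\lambda x.\lambda z.z(x+x))\,P\,Q$ with $P$ a redex sitting in argument position: doing $P$ then the head $\beta$ takes two steps, whereas doing the head $\beta$ first leaves two copies of $P$ under the $\lambda z$), so the permuted sequence can be strictly longer than the original and is certainly \emph{not} a rearrangement of ``the same multiset of residuals.'' Hence ``does not increase the length'' is false, an induction on the length of $\rho$ breaks, and one needs heavier classical machinery (parallel reductions \`a la Takahashi, or L\'evy's finite developments) to push standardization through.

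There are also two holes in the remaining steps. Your canonical-forms analysis omits the possibility that a closed head-normal term of type $\tyR$ is an application $M_1M_2$ or a projection $\pi_i M$---both of these are well-typed shapes at $\tyR$, and ``wrong type'' does not rule out projections. Excluding them requires a \emph{mutual} induction over all types establishing that closed head-normal terms of arrow type are $\lambda$-abstractions and of product type are tuples, so that $M_1M_2$ and $\pi_i M$ would themselves be head redexes. Finally, you assert without justification that the target $M'$ of the maximal head prefix is a head normal form; ``maximal prefix'' only says the next step of $\rho'$ is not a head step, not that no head redex exists in $M'$. The missing lemma is that non-head steps cannot destroy head redexes (since head positions form an upward-closed subtree, a non-head step lies strictly inside or disjoint from any head redex and cannot change its top constructor), so a head redex in $M'$ would leave a head-redex residual in the $\beta$-normal form $N$, a contradiction.
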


Let $\mathcal S$ be a reduction strategy, let $\Gamma=x_1^\tyR,\ldots,x_n^\tyR$ and let $\Gamma\vdash M:\tyR^m$. We define the partial function $\sem{M}^{\mathcal S}:\Real^n\pto\Real^m$ as follows:
\[
	\sem{M}^{\mathcal S}\!(r_1,\ldots,r_n)=
		\begin{cases}
		\pair{q_1,\dots,q_m}&\text{if  } M\isub{r_1}{x_1}\ldots\isub{r_n}{x_n}\reds[\mathcal S] \pair{q_1,\dots,q_m},\\
		\bot&\text{otherwise,}
		\end{cases}
\]
where $\bot$ means undefined. By \refprop{confluence}, $\sem{M}^{\mathcal S}$ is a well-defined partial function, even in case $\mathcal S$ is not deterministic, in fact if $M\isub{r_1}{x_1}\ldots\isub{r_n}{x_n}\reds[\mathcal S] \pair{q_1,\dots,q_m}$ and $M\isub{r_1}{x_1}\ldots\isub{r_n}{x_n}\reds[\mathcal S] \pair{q_1',\dots,q_m'}$ we have that $q_i=q_i'$ for each $i$, as $\red[\mathcal S]\,\subseteq\,\red$ and this latter is confluent. 

We write $\dom[\mathcal S] M$ for $\dom\sem{M}^{\mathcal S}$ and $\diffdom[\mathcal S]{M}$ for $\diffdom{\sem{M}^{\mathcal S}}$. In case $\mathcal S=\beta$, we omit the indices from the above notations and write simply $\sem M$, $\dom M$ and $\diffdom{M}$. 

\begin{proposition}\label{prop:head}
	Let $M$, $N$ and $P$ be programs ($P$ of coarity $1$), then the following relations hold:
	\begin{align*}
	\sem{\phi (M_1,\dots, M_k)}&=\sem\phi \circ \pair{\sem{M_1},\dots,\sem{M_k}},
	&
	\sem{\pair{M_1,\dots,M_k}}&=\pair{\sem{M_1},\dots,\sem{M_k}},
	\end{align*}
	\begin{align*}
	\sem{\ite PMN}&=\seq r \mapsto
		\begin{cases}
		\sem M(\seq r)&\text{ if }\sem P(\seq r)\leq 0,\\
		\sem N(\seq r)&\text{ if }\sem P(\seq r)> 0,\\
		\bot&\text{ if }\sem P(\seq r)=\bot.
		\end{cases}
	\end{align*}
\end{proposition}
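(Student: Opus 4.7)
The plan is to prove each identity as an equality of partial functions $\Real^n\pto\Real^m$ by fixing $\seq r \in \Real^n$, writing $Q[\seq r]$ for $Q\isub{r_1}{x_1}\cdots\isub{r_n}{x_n}$, and establishing two directions: from the right-hand side to the left-hand side (synthesis, via context closure of $\red$) and from the left-hand side to the right-hand side (decomposition, via head reduction).

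For the synthesis direction, I use that if $\sem{M_i}(\seq r)=q_i$ then $M_i[\seq r]\reds\numeral{q_i}$, and close these reductions under the appropriate contexts. This yields $\funsa(M_1,\dots,M_k)[\seq r]\reds\funsa(\numeral{q_1},\dots,\numeral{q_k})\red\numeral{\sem\funsa(q_1,\dots,q_k)}$ for the function-symbol case; $\pair{M_1,\dots,M_k}[\seq r]\reds\pair{\numeral{q_1},\dots,\numeral{q_k}}$ for the pair case (which is already a normal form); and $\ite PMN[\seq r]\reds\ite{\numeral r}{M[\seq r]}{N[\seq r]}$ for the conditional case (where $r=\sem P(\seq r)$), followed by a branching step to $M[\seq r]$ or $N[\seq r]$ according to the sign of $r$, whose own reduction then completes the job.

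For the decomposition direction, I appeal to \refprop{headnf}: if the compound term $Q[\seq r]$ is normalizing with normal form $V$, there exists a head reduction sequence $Q[\seq r]\reds V$. Inspecting the head-context grammar, every step in such a sequence is either a head step strictly inside a component subterm (via the head contexts $\funsa(\dots,\hCtxt,\dots)$, $\pair{\hCtxt,N}$, $\pair{M,\hCtxt}$, or $\ite{\hCtxt}MN$) or a top-level primitive/conditional contraction. A top-level primitive or branching step can only fire once the relevant component has already been head-reduced to a numeral, and the projection onto each component of the steps that occur inside it is itself a valid head reduction of that component. Hence for $\funsa(M_1,\dots,M_k)$ each $M_i[\seq r]$ head-normalizes to some $\numeral{q_i}$, followed by one primitive step giving $V=\numeral{\sem\funsa(q_1,\dots,q_k)}$; for $\ite PMN$ the guard head-normalizes to some $\numeral r$, one branching step fires, and the surviving branch then head-normalizes; for $\pair{M_1,\dots,M_k}$ the normal form $V$ is itself $\pair{\numeral{q_1},\dots,\numeral{q_k}}$ and each $M_i[\seq r]\reds\numeral{q_i}$ independently. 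Confluence (\refprop{confluence}) identifies these values with $\sem{M_i}(\seq r)$ or $\sem P(\seq r)$, yielding the three equations.

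The main obstacle is the decomposition lemma implicit in the head direction: a head reduction of a compound term must factor into (interleaved) head reductions of its immediate subterms plus, where applicable, a single top-level contraction at the end. This is a routine standardization-flavoured argument, made tractable by the rigid shape of the head-context grammar, but it must be checked exhaustively so that no step can ``escape'' the listed contexts and so that interleaved steps on the two halves of a pair can be disentangled component by component. Once this bookkeeping is in place, the three identities follow immediately.
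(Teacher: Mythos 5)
Your proposal is correct and takes essentially the same route as the paper: the paper's entire proof is the single sentence ``Immediate consequence of Proposition~\ref{prop:headnf},'' which is precisely the key lemma your decomposition direction leans on. Your synthesis direction and the head-context factoring argument are just a careful spelling-out of what the paper leaves implicit behind the word ``immediate.''
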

\begin{proof}
	Immediate consequence of Proposition~\ref{prop:headnf}.
\end{proof}

\begin{proposition}\label{prop:strategies}
	For every program $M$ and strategy $\mathcal S$, we have, for all $\seq r\in \dom[\mathcal S] M$, $\sem{M}^{\mathcal S}\!(\seq r)=\sem{M}\!(\seq r)$. So, in particular, $\dom[\mathcal S] M\subseteq \dom M$, $\diffdom[\mathcal S]{M}\subseteq\diffdom{M}$ and, for every $\seq r\in\diffdom[\mathcal S]{M}$,  $\Jacob\sem{M}^{\mathcal S}\!(\seq r)=\Jacob\sem{M}\!(\seq r)$.
\end{proposition}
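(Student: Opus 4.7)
The plan is a three-stage argument hinging on the fact that every strategy's reduction is a sub-relation of full $\beta$-reduction, combined with confluence (\refprop{confluence}).

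\textbf{Stage 1: equality of values on $\dom[\mathcal S]M$.} Fix $\seq r\in\dom[\mathcal S]M$. By definition of $\sem{M}^{\mathcal S}$, there is a reduction $M\isub{r_1}{x_1}\cdots\isub{r_n}{x_n}\reds[\mathcal S]\pair{q_1,\dots,q_m}$. Since $\red[\mathcal S]\,\subseteq\,\red$, this is also a $\beta$-reduction sequence. The tuple $\pair{q_1,\dots,q_m}$ of numerals contains no redex (numerals are nullary function symbols already evaluated), hence is a $\beta$-normal form. By confluence, the $\beta$-normal form of a term is unique when it exists, so $\sem{M}\!(\seq r)=\pair{q_1,\dots,q_m}=\sem{M}^{\mathcal S}\!(\seq r)$.

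\textbf{Stage 2: domain inclusion.} The inclusion $\dom[\mathcal S]M\subseteq\dom M$ is an immediate corollary of Stage 1: whenever $\sem{M}^{\mathcal S}\!(\seq r)$ is defined, so is $\sem{M}\!(\seq r)$, and they coincide.

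\textbf{Stage 3: differentiability and Jacobians.} Suppose $\seq r\in\diffdom[\mathcal S]{M}$. Total differentiability at $\seq r$ requires $\sem{M}^{\mathcal S}$ to be defined on some open ball $\Ball{\varepsilon}{\seq r}$, since the best linear approximation is obtained as a limit of difference quotients that must make sense in every direction around $\seq r$. By Stages 1 and 2 applied pointwise, $\sem{M}$ is also defined on $\Ball{\varepsilon}{\seq r}$ and agrees with $\sem{M}^{\mathcal S}$ there. Two partial functions that coincide on an open neighborhood of a point have identical differentiability status and identical Jacobians at that point (the best linear approximation depends only on local values). Hence $\seq r\in\diffdom{M}$ and $\Jacob\sem{M}\!(\seq r)=\Jacob\sem{M}^{\mathcal S}\!(\seq r)$.

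No step looks hard: the only point requiring any care is Stage 3, where one must remember that differentiability is a genuinely local notion, so one needs open-neighborhood agreement rather than just pointwise equality. This is guaranteed by the fact that $\diffdom[\mathcal S]{M}$ is, by assumption, contained in the interior of $\dom[\mathcal S]M$. Everything else is a direct invocation of confluence plus the inclusion $\red[\mathcal S]\,\subseteq\,\red$.
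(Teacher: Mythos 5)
Your proof is correct and takes exactly the approach the paper has in mind; the paper's own proof is just the one-liner ``Immediate consequence of the definitions and the confluence property,'' and your three stages spell out precisely that argument. One small wording nit: in Stage~3 you write that $\diffdom[\mathcal S]{M}\subseteq\interior{\dom[\mathcal S]M}$ holds ``by assumption,'' but the paper never states this as an assumption --- it is a consequence of the standard definition of total differentiability (a best linear approximation at $\seq r$ requires $\sem{M}^{\mathcal S}$ to be defined on an open neighborhood of $\seq r$), which is what you in fact use; attributing it to the definition rather than to an assumption would be cleaner.
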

\begin{proof}
	Immediate consequence of the definitions and the confluence property.
\end{proof}

\subsection{Automatic Differentiation}
\label{subsect:ad}
\begin{figure*}
	\begin{subfigure}[b]{\textwidth}
		\begin{align*}
			\Fwd[n]\tyR&:=\tyR\times\tyR^n
			&
			\FB[n]{A\rightarrow B}&:=\FB[n]{A}\rightarrow \FB[n]{B}
			\\
			\Bkw[n]\tyR&:=\tyR\times\tyR^{\perp_n}
			&
			\FB[n]{A_1\times\cdots\times A_k} &:= \FB[n]{A_1}\times\cdots\times\FB[n]{A_k}
		\end{align*}
		\caption{The action over types}
		\label{subfig:AD:types}
	\end{subfigure}
	\begin{subfigure}[b]{\textwidth}
		\begin{align*}
		\Fwd[n]{\funsa(\seq M)}&:=
		\left(\lambda\seq z^{\tyR\times\tyR^n}.\pair{\funsa(\pi_1\seq z)\ ,\ \sum_{i=1}^k \partial_i\funsa(\pi_1\seq z)\cdot \pi_2z_i\ ,\ldots,\ \sum_{i=1}^k \partial_i\funsa(\pi_1\seq z)\cdot \pi_{n+1}z_i }
		\right)\Fwd[n]{\seq M}
		\\
		\Bkw[n]{\funsa(\seq M)}&:=
		\left(
		\lambda\seq z^{\tyR\times\tyR^{\bot_n}}.\pair{\funsa(\pi_1\seq z)\ ,\ \lambda a^\tyR.\sum_{i=1}^k \pi_2z_i(\partial_i\funsa(\pi_1\seq z) \cdot a)}
		\right)\Bkw[n]{\seq M}
		\end{align*}
		\caption{The action over a function symbol $\funsa$ of arity $k$. We suppose that, for every $1\leq i\leq k$, there is an associated function symbol $\partial_i \funsa$ of arity $k$ such that $\partial_i\!\sem[]{\phi}\!(\seq r)=\sem[]{\partial_i\phi}\!(\seq r)$ for every $\seq r\in\Real^k$ on which $\partial_i\!\sem[]\phi$ is defined. The writing $\funsa(\seq M)$ is a shortcut for $\funsa(M_1,\dots, M_k)$, and, similarly, $\lambda \seq z$ stands for the sequence of abstractions $\lambda z_1\dots \lambda z_k$ and $\FB[n]{\seq M}$ for the sequence of applications to $\FB[n]{M_1}\dots\FB[n]{M_k}$. These sequences are supposed empty if $k=0$.}\label{subfig:fb-ground}
	\end{subfigure}
	\begin{subfigure}[b]{\textwidth}
		\begin{align*}
			\FB[n]{x^A} &:= x^{\FB[n]{A}} &
			\FB[n]{\lambda x^{A}.M} &:= \lambda x^{\FB[n]{A}}.\FB[n] M &
			\FB[n]{MN} &:= \FB[n]{M}\FB[n]{N} 
		\end{align*}
		\begin{align*}
			\FB[n]{\pair{M_1,\dots,M_k}} &:= \pair{\FB[n]{M_1},\dots,\FB[n]{M_k}} &
			\FB[n]{\pi_i{M}} &:= \pi_i{\FB[n]{M}}
		\end{align*}
		\begin{align*}
			\FB[n]{\ite PMN}&:=\ite{\pi_1\FB[n] P}{\FB[n] M}{\FB[n] N}
			&
			\FB[n]{\fix{f^A}{M}} &:= \fix{f^{\FB[n] A}}{\FB[n]{M}}
		\end{align*}
		\caption{The action over the other programming primitives.}\label{subfig:fb-ho}
	\end{subfigure}
	\caption{The forward and reverse AD transformations. The symbol $\FBSymb$ denotes either one of them. The index $n$ refers to the gradient dimension and acts only over ground type annotations. We will omit the index when inessential or clear from the context.}
	\label{fig:AD}
\end{figure*}

As mentioned in the Introduction, the two modes of AD are performed by means of the program transformations $\FSymb$ (forward) and $\BSymb$ (reverse), outlined in full detail in \reffig{AD}. The cornerstone of these transformations is the chain rule, which describes the derivative of a composition of functions:
\begin{equation}
	\label{eq:chain-rule}
	(f\circ g)'(x) = f'(g(x))\cdot g'(x).
\end{equation}
This says in particular that $(-)'$ is not functorial (\ie, modular/compositional): the right-hand side of Equation~\eqref{eq:chain-rule} does not use only $g'(x)$ but also  $g(x)$. Functoriality may be achieved by transforming a map $f:\Real\to\Real$ into a map $\Fwd f:\Real^2\to\Real^2$ acting as follows:
\begin{equation}
	\label{eq:forward-composition}
	\Fwd f: \pair{z,\dot z}\mapsto\pair{f(z),f'(z)\cdot \dot z}.
\end{equation}
Referring to Equation~\eqref{eq:chain-rule}, the input $z$, which is often called \emph{primal} in the AD literature, corresponds to the output of $g$, whereas $\dot z$, called \emph{tangent}, corresponds to the output of $g'$. The reader may easily check functoriality: $\Fwd{f\circ g}=\Fwd{f}\circ\Fwd{g}$. This generalizes to $n$-ary maps in the definition of $\FSymb_n$ in \refsubfig{fb-ground}\footnote{Modulo some syntactic bureaucracy, \ie, the pair $\pair{z,\dot z}$ of \eqref{eq:forward-composition} corresponds in \refsubfig{fb-ground} to a single variable $z$ of type $\tyR\times\tyR^n$, whose components are obtained by using the projections $\pi_i$} and is the essential part of forward mode AD. The attribute \emph{forward} refers to the fact that the computation of both the primal and the tangent follows the input-to-output flow, in particular the derivative $f'(z)$ is computed after having accumulated in $\dot z$ the derivative of its input function. 

Notice that, if we denote by $|M|$ the size of a term $M$, we have that $|\Fwd[n]{\funsa(M)}|=O(n)+|\Fwd[n]{M}|$. So, supposing that $F$ consists only of function symbols and variables, evaluating both $F$ and $\Fwd[n]{F}$ on a given input requires a number of operations roughly equal to their size. But $|\Fwd[n]{F}|=O(n|F|)$, therefore the evaluation of $\Fwd[n]{F}$ is blown up by a factor of $n$ with respect to the evaluation of $F$. In applications to deep learning, $F$ is a loss\todod{I agree that loss is better terminology} function and $n$ is the number of learning parameters, which may be huge (hundreds of millions).

Luckily, AD offers a more efficient method for applying the chain rule in these cases, called reverse mode AD, or \emph{backpropagation}, because the idea is to accumulate the tangents in the reverse order with respect to the primals. More precisely, by taking the notation of~\eqref{eq:chain-rule},  the backpropagation $\Bkw f$ of $f$ first computes $f'(g(x))$ and then waits for the derivative $g'(x)$ in order to perform the multiplication. As first observed by~\cite{PearlmutterSiskind}, this mode may be naturally expressed in a functional programming language by replacing the tangent variables $\dot z$ with \emph{backpropagators} $z^*$ representing functions (in fact, special forms of continuations):
\begin{equation}
\label{eq:backward-composition}
\Bkw f : \pair{z,z^*}\mapsto\pair{f(z),\lambda a^\tyR.z^*(f'(z)\cdot a)}
\end{equation}
A backpropagator is a map $\Real\to\Real^n$ \emph{waiting} for a real number (the derivative of the next function) in order to achieve the computation of the gradient of the whole function. In
~\eqref{eq:backward-composition}, the second component of the returned pair is the backpropagator associated with $f$, the variable $z^*$ being the awaited backpropagator associated with the input function of $f$ (called $g$ in~\eqref{eq:chain-rule}).  This transformation generalizes to the definition of $\BSymb_n$ in \refsubfig{fb-ground} for $n$-ary maps. 

We encourage the reader to check that $|\Bkw[n]{\funsa(M)}|=O(1)+|\Bkw[n]{M}|$, so if $F$ consists only of function symbols and variables, the evaluation of $\Bkw[n]{F}$ is asymptotically as costly as the evaluation of $F$. However, in more complex cases, the sole transformation~\eqref{eq:backward-composition} is not enough to guarantee efficiency: if $F$ contains sharing, \eg\ a subroutine $g$ called several times, the evaluation of $\Bkw F$ may duplicate uselessly the computation associated with the backpropagator of $g$, and this may result in an exponential blowup. This highlights a key difference between forward and reverse mode: if $F$ is a first order program (\ie, every abstraction $\lambda x^A$ in $F$ is such that $A$ has no arrows), then $\Fwd F$ is also a first order program, whereas $\Bkw F$ is a higher order term. When backpropagation is expressed in an imperative language, as is usually the case, duplication is not a problem because efficiency is automatically achieved by accumulating the tangents (in reverse order) in memory. But for functional languages, subtle techniques have been introduced to avoid this problem, \eg\ closure conversions \cite{PearlmutterSiskind} or memory references and delimited continuations \cite{Purdue}.

In this paper we follow the approach of \cite{POPL2020}, giving a purely functional solution based on linear logic types: backpropagators have type $\tyR^{\perp_n}$, which corresponds to the set of \emph{linear} maps from $\Real$ to $\Real^n$. The efficiency of the transformation is then guaranteed by a \emph{factoring rule} added to the operational semantics, which allows sharing the evaluation of different occurrences of a backpropagator $z^*$ in an expression: 
\begin{equation}
\label{eq:factoring}
z^*M+z^*N \quad\red\quad z^*(M+N).
\end{equation}
This rewriting rule is sound because backpropagators are linear maps, so they commute with sums. In particular, the normal form of a term obtained using \eqref{eq:factoring} is the same one would have obtained, perhaps in more steps, without using it. As mentioned in the Introduction, in our present setting we are concerned only with soundness, not efficiency. For this reason, we adopt the definition
$$\tyR^{\bot_n}:=\tyR\to\tyR^n$$
and do not consider the linear factoring rule \eqref{eq:factoring}. In fact, by the above remark, rule \eqref{eq:factoring} only speeds up computation without introducing any error, so our almost-everywhere soundness result is transparent to its use, and enforcing linearity would only lead to unnecessary complications induced by a more sophisticated type system. We do retain the notation $(-)^\bot$ as a reminder that this is supposed to be a \emph{linear} arrow (\ie, $\tyR^{\bot_n}$ should really be $\tyR\multimap\tyR^n$), but the transformation of \cite{POPL2020} remains well typed with the above ``non-linear'' definition of negation. Indeed, apart from the addition of conditional and fixpoints, the transformation of \reffig{AD} is exactly that of \emph{loc.\ cit.} and it is efficient as long as it is executed according to the operational semantics enriched with \eqref{eq:factoring}, so nothing is lost with respect to our previous work.

So far we have explained AD transformations only in regard to primitive functions, which are the ``elementary blocks'' of straight-line programs mentioned in the Introduction. It is an observation first formalized in \cite{Purdue} that the transformations may be extended to arbitrary programs simply by applying the functoriality principle: $\FSymb$ and $\BSymb$ are defined to commute with the programming constructs of the language, resulting in \refsubfig{fb-ho}. As a result, the abstract syntax tree of an expression is basically preserved by the two transformations,\footnote{This is the case for all constructs except the conditional, where a projection is added to the transformation of the guard. This minor technicality is due to the fact that we consider only the ground type of real numbers and not that of Booleans.} only the types of the variables are lifted so as to accommodate primals and tangents or backpropagators at the ground level. This behavior is often described by saying that AD is implemented via ``operator overloading''. We prefer the term ``functoriality'' because we find it technically more appropriate.

Notice that, in the definition of $\FB{\ite{P}{M}{N}}$, the transformation is applied also to the guard $P$ in order to preserve typability, because $P$ may share free variables with $M$ and $N$. However, the computation of the gradient of $P$ is useless and therefore $\FB P$ is projected to the first component. A possible optimization would be to define $\FB{\ite{P}{M}{N}}:=\ite{\FBSymb'(P)}{\FB M}{\FB N}$ where $\FBSymb'$ is an auxiliary transformation such that $\FBSymb'(x^A)=\pi_1 x^{\FB A}$ and which behaves homomorphically on every other term. We avoid introducing $\FBSymb'$ because it is not crucial for our results and because such an optimization is not so relevant at our level of abstraction (indeed, the evaluation of $\FB P$ is linear in the evaluation of $P$, as proved in \cite{POPL2020}, so asymptotically there is no gain).

\begin{figure*}
\begin{subfigure}{\textwidth}
\begin{align*}
\Fwd[1]{\Relu}&= 
		\lambda x^{\tyR\times\tyR}.\ite
			{\pi_1 x}
			{	\pair{0,0}}
			{x}
&
\Bkw[1]{\Relu}&= 
		\lambda x^{\tyR\times\tyR^\perp}.\ite
			{\pi_1 x}
			{	\pair{0,\lambda a.0}}
			{x}
\end{align*}
\caption{$\FBSymb_1$ of the rectified linear unit $\Relu$, defined in~\eqref{ex:pcfterms}.}\label{fig:FB_relu}
\end{subfigure}

\medskip
\begin{subfigure}{\textwidth}
\begin{align*}
\Fwd[1]{x^{\tyR}-y^{\tyR}}&
=\left(\lambda z_1^{\tyR\times\tyR}z_2^{\tyR\times\tyR}.\pair{\pi_1^2(z_1)-\pi_1^2(z_2), \pi_2^2(z_1)-\pi_2^2(z_2)}
\right)x^{\tyR\times\tyR}y^{\tyR\times\tyR}
\\
\Fwd[2]{x^{\tyR}-y^{\tyR}}&
=\left(\lambda z_1^{\tyR\times\tyR^2}z_2^{\tyR\times\tyR^2}.\pair{\pi_1^3(z_1)-\pi_1^3(z_2), \pi_2^3(z_1)-\pi_2^3(z_2),\pi_3^3(z_1)-\pi_3^3(z_2)}
\right)x^{\tyR\times\tyR^2}y^{\tyR\times\tyR^2}
\\
\Bkw[n]{x^{\tyR}-y^{\tyR}}&
=\left(\lambda z_1^{\tyR\times\tyR^{\perp_n}}\!\!z_2^{\tyR\times\tyR^{\perp_n}}\!\!\!\!.\pair{\pi_1^2(z_1)\!-\!\pi_1^2(z_2), \lambda a^\tyR\!.(\pi_2^2(z_1)1\!\cdot\! a+\pi_2^2(z_2)(-1)\!\cdot\! a)}
\right)x^{\tyR\times\tyR^{\perp_n}}\!\!y^{\tyR\times\tyR^{\perp_n}}
\end{align*}
\caption{$\FBSymb_n$ of the subtraction $x-y$, with $n=1,2$, where we suppose $\partial_1(x-y):=1$ and  $\partial_2(x-y):=-1$.}\label{fig:FB_substraction}
\end{subfigure}
%
\caption{Some examples of the $\FBSymb$ transformations. Notice that we take the liberty of using the same name for the ground variables $x^\tyR$ and $y^\tyR$ and their images $x^{\FB\tyR}$ and $y^{\FB{\tyR}}$ under $\FBSymb$. 
}
\label{fig:FB_examples}
\end{figure*}
Some examples of the two modes of $\FBSymb$ are given in \reffig{FB_examples}. In the case of subtraction (\reffig{FB_substraction}), notice how the size of the term resulting from the forward transformation $\FSymb_n$ increases with the gradient dimension $n$, while it is constant in $\BSymb_n$, as expected from the above discussion.

Given a term $\Gamma\vdash M:A$ with $\Gamma=x_1^{A_1},\dots, x_n^{A_n}$, one can check that $\FB\Gamma\vdash \FB M:\FB A$, 
where $\FB\Gamma=x_1^{\FB{A_1}},\dots, x_n^{\FB{A_n}}$. In particular, if $M$ is a program, then
\begin{align}
x_1^{\tyR\times\tyR^n},\dots,x_n^{\tyR\times\tyR^n} &\vdash \Fwd[n]{M}:\tyR\times\tyR^n&
x_1^{\tyR\times\tyR^{\perp_n}},\dots,{x_n^{\tyR\times\tyR^{\perp_n}}} &\vdash \Bkw[n]{M}:\tyR\times\tyR^{\perp_n}
\end{align}
If, furthermore, $M$ is simple, then the computational behavior of the transformations $\FSymb$ and $\BSymb$ is given by the following result, which was proved in \cite{HuotStatonVakar,POPL2020,BartheCLG},\footnote{In a personal communication, Mitchell Wand showed us that soundness of reverse mode AD may also be proved by means of ``open'' logical relations of the kind discussed in \cite{BartheCLG} and used here for the unsoundness bound (\refsect{Unsoundness}).} and in which $\iota_i^n$ are the injections of $\tyR$ into $\tyR^n$ as defined in Equation~\eqref{eq:inj}.
\begin{proposition}[soundness of AD for simple terms]
	\label{prop:SimpleSound}
	Let $\Gamma=x_1^\tyR,\ldots,x_n^\tyR$ and let $\Gamma\vdash t:\tyR$ be a simple program. Then, for all $\seq r=(r_1,\ldots,r_n)\in\diffdom{t}$, we have
	\begin{align*}
		\Fwd[n]{t}\isub{\pair{r_1,\iota_1^n1}\!}{x_1}\dots\isub{\pair{r_n,\iota_n^n1}\!}{x_n} &\reds \pair{\sem{t}\!(\seq r),\nabla\!\sem{t}\!(\seq r)} \\
		\Bkw[n]{t}
	\isub{\pair{r_1,\iota_1^n}\!}{x_1}\dots\isub{\pair{r_{n},\iota_n^n}\!}{x_n}\bigr) &\reds \pair{\sem{t}\!(\seq r),u}
	\end{align*}
	such that $u1\reds \nabla\!\sem{t}\!(\seq r)$.
\end{proposition}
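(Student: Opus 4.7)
My approach is a standard logical-predicates argument, of a kind common in the literature on AD~\cite{HuotStatonVakar,POPL2020,BartheCLG}. Fix $\seq r\in\diffdom{t}$. For each type $A$, I would define a predicate $\mathcal R_A$ relating a locally differentiable function $f:\Real^n\pto\Val{A}$ (for an appropriate notion of ``value space'' at higher types) with a pair of closed terms $(M,M')$ of types $A$ and $\FB[n]{A}$, respectively. At ground type $\tyR$, $\mathcal R_\tyR(f,M,M')$ holds iff $f$ is differentiable at $\seq r$, $M\reds\numeral{s}$ with $s=f(\seq r)$, and the AD transform produces the correct data: for forward mode, $M'\reds\pair{\numeral{s},\numeral{\dot s}}$ with $\dot s=\nabla\!f(\seq r)$; for reverse mode, $M'\reds\pair{\numeral{s},u}$ with $u\,\numeral{a}\reds\numeral{a\cdot\nabla\!f(\seq r)}$ for every $a\in\Real$. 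At product types the predicate is componentwise, and at arrow type $A\to B$ it is the usual Kripke-style closure: $\mathcal R_{A\to B}(f,M,M')$ iff whenever $\mathcal R_A(g,N,N')$ holds, so does $\mathcal R_B(f\cdot g,MN,M'N')$, where $f\cdot g$ denotes pointwise application.

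The fundamental lemma then asserts that for every simple term $\Delta\vdash t:A$ and every type-correct substitution $\gamma$ (paired with its AD counterpart $\gamma'$) whose entries are related in $\mathcal R$, the resulting pair $(t\gamma,\FB[n]{t}\gamma')$ lies in $\mathcal R_A$ together with the pointwise function determined by $\gamma$. I would prove this by induction on $t$: variables come for free from the substitution; pairs, projections, abstractions, and applications all follow the usual logical-relations playbook. The only nontrivial case is that of a function symbol $\funsa(M_1,\ldots,M_k)$, which is precisely where the chain rule bites. Inductively, each $M_j$ converges to some $\numeral{s_j}$ and its transform produces either the tangent $\dot s_j$ (forward) or a backpropagator $u_j$ (reverse); the defining equations of $\FB[n]{\funsa(\seq M)}$ in \refsubfig{fb-ground} then yield $\funsa(\seq s)$ in the first component and $\sum_i\partial_i\funsa(\seq s)\cdot\dot s_i$ (forward) or $\lambda a^\tyR.\sum_i u_i(\partial_i\funsa(\seq s)\cdot a)$ (reverse) in the second, matching the chain rule exactly.

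To conclude, I instantiate the fundamental lemma on $t$ with the substitution sending each variable $x_i$ to $\numeral{r_i}$ (with associated function the $i$-th projection $\seq r\mapsto r_i$), paired with its AD counterpart $\pair{\numeral{r_i},\iota_i^n 1}$ (forward) or $\pair{\numeral{r_i},\iota_i^n}$ (reverse); both halves of the proposition follow directly. Strong normalization of simple terms (\refprop{sn}) guarantees convergence of every reduction used in the argument, and confluence (\refprop{confluence}) uniquely identifies the resulting normal forms. The main obstacle I anticipate is formulating the predicate at higher types, since differentiability of $t$ at $\seq r$ is the only explicit hypothesis and the class of ``locally differentiable maps'' associated with $\Val A$ must be robust enough to survive the arrow case (something like germs of smooth maps, or the diffeological approach of~\cite{HuotStatonVakar}, would do). A secondary subtlety, specific to reverse mode, is that the backpropagator $u$ must behave linearly for \emph{every} scalar input, not just the one actually consumed; fortunately this is preserved structurally by the induction, as the transformation only composes the $u_i$'s with multiplications by constants, so no linear-factoring rule is needed here.
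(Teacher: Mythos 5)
The paper does not give its own proof of \refprop{SimpleSound}: it is imported from prior work (\cite{HuotStatonVakar,POPL2020,BartheCLG}), and the footnote attached to the citation even points out that an ``open'' logical-relations argument in the style of \cite{BartheCLG} suffices for reverse mode. Your logical-predicates sketch is therefore in the same family as the proofs the paper actually relies on, and there is no in-paper argument to contrast it against.

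That said, one step in your sketch deserves more care, and it is exactly the place the cited works handle via a stronger standing hypothesis. In the function-symbol case you argue: ``inductively, each $M_j$ converges to $\numeral{s_j}$ and its transform produces a tangent/backpropagator; then \refsubfig{fb-ground} assembles the answer via the chain rule.'' But the hypothesis $\seq r\in\diffdom{t}$ does not by itself give you differentiability of each $\sem{M_j}$ at $\seq r$, nor definedness of $\partial_i\sem[]{\funsa}$ at $\seq s=(\sem{M_j}(\seq r))_j$ --- consider a cancellation such as $\funsa(M)-\funsa(M)$, which is identically zero yet whose AD transform still evaluates $\partial_1\funsa$. The resolution in the referenced literature (and implicitly in the paper, whose primitives are taken to be analytic where defined and to come with partial-derivative symbols) is that a simple term computes a function that is smooth on an open domain, so $\seq r\in\diffdom{t}$ actually forces every intermediate primitive evaluation into the open set where its derivative exists. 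This means your ground-type predicate should demand differentiability \emph{in an open neighborhood} of $\seq r$ (as your ``germs'' remark already gestures at), not merely at $\seq r$, and this smoothness-on-open-domain assumption on primitives needs to be stated up front rather than recovered a posteriori. With that adjustment your argument matches the one the paper imports.
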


Looking at \refprop{SimpleSound}, if $M$ is an arbitrary program of arity $n$ and coarity $1$, it is reasonable to believe that the following programs compute $\nabla\!\sem M$ in $\seq r=(r_1,\ldots,r_n)\in\Real^n$ whenever this is defined:
\begin{align}
	\Fgrad[n]{M}(\seq r)
	&:= 
	\pi_2^2\Fwd[n]{M}\isub{\pair{r_1,\iota_1^n1}\!}{x_1}\dots\isub{\pair{r_n,\iota_n^n1}\!}{x_n}
	\label{eq:grad_f},
	\\
	\Bgrad[n]{M}(\seq r)
	&:=
	\bigl(\pi_2^2\Bkw[n]{M}
	\isub{\pair{r_1,\iota_1^n}\!}{x_1}\dots\isub{\pair{r_{n},\iota_n^n}\!}{x_n}\bigr)1.
	\label{eq:grad_b}
\end{align}
In the sequel, we will omit the index $n$ when inessential or clear from the context. We will also write $\FBgrad M$ for either one the above terms.\footnote{Notice that the above definition of $\FBgradSymb$ is slightly different from the informal one used in the Introduction: it applies to terms with free ground variables, whereas in the Introduction we abusively applied $\FBgradSymb$ to closed terms.}

Referring to \reffig{FB_examples}, it is immediate to check that, regardless of the mode, 
$\FBgrad[1]{\Relu\, z^\tyR}(r)\reds 1$ if $r> 0$ and $\FBgrad[1]{\Relu\, z^\tyR}(r)\reds 0$ if $r\leq 0$.  
This is the expected result except for $r=0$, where the map $\sem[z^\tyR]{\Relu\, z}$ is not differentiable. As discussed in the Introduction, our soundness result concerns only the domain of differentiability $\diffdom M$ of a map $\sem[] M$ represented by a program and nothing is stated about the value  of $\FBgrad{M}$ outside $\diffdom M$. 
This situation is quite common in AD frameworks, where it may even be desirable to control the behavior of the ``non-existent derivative'' on singularities. For example, the following term $\Crelu$ implements the rectified linear unit (\ie, $\sem[z^\tyR]{\Crelu\,z}=\sem[z^\tyR]{\Relu\,z}$) so that $\FBgradSymb$ returns some arbitrarily chosen $q\in\Real$ on $0$:
\begin{equation}\label{ex:crelu}
\Crelu[q]:=  \lambda x^\tyR.\ite x{\ite{-x}{q\cdot x}0}x,
\qquad
\FBgrad[1]{\Crelu[q]\, z^\tyR}(r)\reds
	\begin{cases}
	1&\text{if $r>0$,}\\
	q&\text{if $r=0$,}\\
	0&\text{if $r<0$.}
	\end{cases}
\end{equation}
We do not consider these computations as errors because $\nabla{\sem[z^\tyR]{\Relu\, z}}$ is undefined at $0$.

We already discussed $\Sid$ (see \eqref{ex:pcfterms}) as a first example of a mismatch between AD and the gradient in the domain of differentiability of a map. Let us consider here a refined example:
\begin{align}
\EqProj&:= \lambda x^\tyR.\lambda y^\tyR.
	\ite{x-y}
	{\ite{y-x}xy}
	y.\label{ex:eqproj}
\end{align}
This term is extensionally equivalent to the binary projection $\lambda x^\tyR.\lambda y^\tyR.y$ and therefore its gradient should be $\pair{0,1}$ on the whole domain $\Real^2$. By contrast, the reader may check that:
\begin{align*}
\Fgrad[2]{\EqProj \, x_1^\tyR x_2^\tyR}(r_1,r_2) &= \pi_2^2\left(\Fwd[2]{\EqProj} \pair{r_1,\iota_1^21}\pair{r_2,\iota_2^21}\right)\\
&\reds \pi_2^2\left(\ite{r_1-r_2}{\ite{r_2-r_1}{\pair{r_1,\pair{1,0}}}{\pair{r_2,\pair{0,1}}}}{\pair{r_2,\pair{0,1}}}\right)\\
&\reds
\begin{cases}
	\pair{0,1}&\text{if $r_1\neq r_2$},\\
	\pair{1,0}&\text{if $r_1= r_2$}.
\end{cases}
\end{align*}
The diagonal of $\Real^2$ gives an  uncountable set of errors (a similar computation yields the same result also for the reverse mode). However, this set is negligible, \ie, of Lebesgue measure zero, in accordance with the claim (ii') stated in the Introduction.

Let us add a last comment on example \eqref{ex:eqproj}. Consider the unary program $\EqProj \, x_1^\tyR x_1^\tyR$, which is extensionally equivalent to the identity. One can check that  $\Fgrad[1]{\EqProj \, x_1 x_1}(r)\reds 1$ for every $r\in\Real$, so there is no error at all in this case. This is in sharp contrast with approaches based on partial conditionals, such as \cite{AbadiP20}, in which conditionals diverge when the guard evaluates to $0$: under such semantics, $\EqProj \, x_1 x_1$ diverges everywhere.

\medskip
\todod{Added first sentence.}
Different reduction strategies change the convergence and differentiability domain of a program, so a priori the soundness of AD depends on the strategy. In the above examples, we considered the maximal reduction strategy $\beta$. If we wish to specialize to a more restrictive reduction strategy $\mathcal S$, we should prove that $\FBgrad M$ evaluates, in accordance with $\mathcal S$, to the gradient of $\sem{M}^{\mathcal S}$ at almost every point where this is defined. In fact, if we succeed in proving this with respect to $\beta$, then it follows also for any other  ``reasonable''  strategy $\mathcal S$:
\begin{proposition}\label{prop:reduction_independence}
	Let $\Gamma\vdash M:\tyR$ be a program. If $\sem{\FBgrad M}\vert_{\diffdom{M}}\sim\nabla(\sem{M})$, then for any reduction strategy $\mathcal S$ such that $\diffdom[\mathcal S]{M}\subseteq\dom[\mathcal S]{(\FBgrad M)}$ we also have $\sem{\FBgrad M}^{\mathcal S}\vert_{\diffdom[\mathcal S]{M}}\sim\nabla(\sem{M}^{\mathcal S})$.
\end{proposition}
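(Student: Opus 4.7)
The plan is to reduce the claim to the already-assumed $\beta$-case via \refprop{strategies}, which reconciles $\sem{-}^{\mathcal S}$ and $\sem{-}$ wherever the former is defined. First, I would unfold the a.e.\ equality $\sem{\FBgrad M}\vert_{\diffdom{M}}\sim\nabla(\sem{M})$ using the equivalent characterization recalled in the notation section: this supplies subsets $A',Z\subseteq\Real^n$ with $\lambda(Z)=0$ and $\dom{\sem{\FBgrad M}}\cup\dom{\nabla(\sem M)}\subseteq A'\cup Z$, such that $\sem{\FBgrad M}(\seq r)=\nabla(\sem M)(\seq r)$ for every $\seq r\in A'\cap\diffdom M$.

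Next, I would apply \refprop{strategies} both to $M$ and to $\FBgrad M$, the latter being itself a program of arity $n$ and coarity $n$ in our typing discipline (its free variables are the same $x_1^\tyR,\dots,x_n^\tyR$ as those of $M$). This yields two pieces: the inclusion $\diffdom[\mathcal S]M\subseteq\diffdom M$ together with $\nabla(\sem{M}^{\mathcal S})(\seq r)=\nabla(\sem M)(\seq r)$ for every $\seq r\in\diffdom[\mathcal S]M$; and, invoking the standing hypothesis $\diffdom[\mathcal S]M\subseteq\dom[\mathcal S]{(\FBgrad M)}$, the equality $\sem{\FBgrad M}^{\mathcal S}(\seq r)=\sem{\FBgrad M}(\seq r)$ for every $\seq r\in\diffdom[\mathcal S]M$. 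Chaining the equalities, for every $\seq r\in A'\cap\diffdom[\mathcal S]M$ one obtains
\[
\sem{\FBgrad M}^{\mathcal S}(\seq r)\;=\;\sem{\FBgrad M}(\seq r)\;=\;\nabla(\sem M)(\seq r)\;=\;\nabla(\sem{M}^{\mathcal S})(\seq r).
\]

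Consequently, the set of $\seq r\in\diffdom[\mathcal S]M$ on which $\sem{\FBgrad M}^{\mathcal S}$ and $\nabla(\sem{M}^{\mathcal S})$ disagree is contained in $Z\cap\diffdom[\mathcal S]M\subseteq Z$, which has Lebesgue measure zero. Applying the converse direction of the $\sim$-characterization, with witnesses $A'\cap\diffdom[\mathcal S]M$ and $Z$, delivers $\sem{\FBgrad M}^{\mathcal S}\vert_{\diffdom[\mathcal S]M}\sim\nabla(\sem{M}^{\mathcal S})$.

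There is no real obstacle here: the whole argument is a transfer that leverages confluence only indirectly, through \refprop{strategies}. The only mild bookkeeping lies in recognizing that $\FBgrad M$ qualifies as a program of the right shape for \refprop{strategies} to apply, and in observing that the assumed inclusion $\diffdom[\mathcal S]M\subseteq\dom[\mathcal S]{(\FBgrad M)}$ is precisely what guarantees that $\sem{\FBgrad M}^{\mathcal S}$ is well-defined on the relevant domain, which is what allows the restriction $\vert_{\diffdom[\mathcal S]M}$ to interact cleanly with the characterization of almost-everywhere equality.
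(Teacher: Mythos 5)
Your proof is correct and follows essentially the same route as the paper's: unfold the almost-everywhere hypothesis into a witnessing pair $(A',Z)$ via the footnote characterization, intersect $A'$ with $\diffdom[\mathcal S]{M}$, and transfer along \refprop{strategies} (applied once to $M$ and once to $\FBgrad M$, the latter legitimized by the standing hypothesis $\diffdom[\mathcal S]{M}\subseteq\dom[\mathcal S]{(\FBgrad M)}$) to chain the three equalities and bound the exceptional set by $Z$. One cosmetic nit: since $f=\sem{\FBgrad M}\vert_{\diffdom{M}}$ is already a restriction, the characterization yields $\diffdom{M}\subseteq A'\cup Z$ rather than the stronger $\dom{\sem{\FBgrad M}}\cup\dom{\nabla(\sem{M})}\subseteq A'\cup Z$ you wrote, but as you only ever invoke the inclusion on $\diffdom[\mathcal S]{M}\subseteq\diffdom{M}$ this is immaterial.
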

\begin{proof}
	By hypothesis we have that $\diffdom{M}=A\cup Z$ with $Z$ of measure zero and $\sem{\FBgrad M}\vert_A=\nabla(\sem{M})\vert_A$. Let $\mathcal S$ be a reduction strategy satisfying the hypothesis and let $B := A\cap \diffdom[\mathcal S]{M}$. We need to prove that $\sem{\FBgrad M}^{\mathcal S}\vert_B=\nabla(\sem{M}^{\mathcal S})\vert_B$ and that $\diffdom[\mathcal S]{M}\setminus B$ is negligible. 


	Let us start with this latter point. Notice that, by \refprop{strategies},
	$\diffdom[\mathcal S]{M}\setminus B = \diffdom[\mathcal S]{M}\setminus (A\cap \diffdom[\mathcal S]{M}) = \diffdom[\mathcal S]{M}\setminus A\subseteq \diffdom{M}\setminus A$,
	and the latter set is of measure zero by hypothesis.

	Let us now take $\seq r\in B$. We have:
	\begin{align*}
		\sem{\FBgrad M}^{\mathcal S}\!(\seq r)
		&=\sem{\FBgrad M}\!(\seq r) &\text{by } B\subseteq \diffdom[\mathcal S]{M}\subseteq\dom[\mathcal S]{(\FBgrad M)}\text{ and \refprop{strategies}}\\
		&=\nabla(\sem{M})(\seq r) &\text{because }B\subseteq A\\
		&=\nabla(\sem{M}^{\mathcal S})(\seq r) &\text{by $ B\subseteq \diffdom[\mathcal S]{M}$ and \refprop{strategies}}.
	\end{align*}
\end{proof}

The condition $\diffdom[\mathcal S]{M}\subseteq\dom[\mathcal S]{(\FBgrad M)}$ is reasonable, since $\diffdom[\mathcal S]{M}\subseteq\dom[\mathcal S]{M}$ by definition and it is very likely that $\dom[\mathcal S]{M}\subseteq \dom[\mathcal S]{\FBgrad M}$, because the convergence of $\FBgrad M$ coincides with that of $\FB M$ and the latter essentially behaves like $M$, as we will prove in \refsect{stable_points}. Notice that, when $t$ is simple, $\dom[\mathcal S]{t}=\dom[\mathcal S]{\FBgrad t}$ is trivially true because of strong normalization (\refprop{sn}), so \refprop{SimpleSound} in fact holds for any reduction strategy. Common strategies such as call-by-value, call-by-name and call-by-need are easily seen to enjoy the condition of \refprop{reduction_independence}. See Remark~\ref{rk:cbv} below for a proof sketch in the case of call-by-value.

\subsection{Primitive Functions and Complete Quasicontinuity}
\label{sect:cqc}
The only assumption we made so far about the function symbols of \PCF\ is that for every $\phi$ of arity $k$ and every $1\leq i\leq k$, there is another $k$-ary function symbol $\partial_i\phi$ corresponding to the partial derivative of $\sem[]\phi$ with respect to its $i$-th argument (\refsubfig{fb-ground}). In order to prove one of our main results (\refth{Unsound}), we will need to make some further topological and measure-theoretic assumptions, which we proceed to spell out.

In what follows, we always consider $\Real^n$ with its standard topology and the Lebesgue measure. We denote by $\interior{X}$ the interior of a set $X$ (the largest open set contained in $X$) and by $\border X$ its \emph{border}, defined as $\border X:=X\setminus\interior X$. Equivalently, $\border X=\partial X\cap X$ where $\partial X$ is the boundary of $X$ (the closure of $X$ minus $\interior{X}$). In case $X$ is closed, $\border X=\partial X$.

We recall that a \emph{clone}~\cite{Clones} on a set $A$ is a collection $\clone P$ of functions $A^n\pto A$ (for varying $n$) 
which is closed under composition\footnote{We mean that if $f:A^k\pto A$ and $g_1,\ldots,g_k:A^n\pto A$ are in $\clone P$, then so is the function $\seq a\mapsto f(g_1(\seq a),\ldots,g_k(\seq a))$.} and contains all projections (in particular, the identity on $A$). 
Notice that clones are stable under arbitrary intersections, hence every set $\clone F$ of functions $A^n\pto A$ (for possibly varying $n$) generates a clone $\pair{\clone F}$, the smallest clone containing $\clone F$.

\begin{definition}[admissible primitive functions]
\label{def:admissibility}
	We say that a clone $\clone P$ on $\Real$ is \emph{admissible} if $f\in\clone P$ implies:
	\begin{enumerate}
		\item $f$ is continuous on its domain;
		\item if $f:\Real^n\pto\Real$ is not identically zero, then $f^{-1}(0)$ is of Lebesgue measure zero in $\Real^n$.
	\end{enumerate}
	
	Fix a set $\clone F$ of function symbols together with their semantics. We abusively denote by $\clone F$ also the set of all $\sem[]\phi$ with $\phi$ ranging over the chosen function symbols. We say that $\clone F$ forms an \emph{admissible set of primitive functions} if $\pair{\clone F}$ is admissible.
\end{definition}

It is well known~\cite{Mityagin} that an example of admissible clone is provided by the collection of all real functions which are defined and analytic on some open set $U\subseteq\Real^n$, for varying $U$ and $n$. Notice that a subclone of an admissible clone is admissible. Therefore, a simple way of ensuring that the primitive functions of \PCF\ are admissible is to ask that they are all analytic where they are defined. This is of course true of our ``mandatory'' primitive functions (constants, addition and multiplication), as well as all functions usually taken as primitive, such as division, square root, exponential, logarithm, the trigonometric functions and their inverses, Gaussian functions, many sigmoid functions (\eg\ the error function), etc. Other desirable functions which are not analytic (the step function, the floor function, the rectified linear unit\ldots) are usually programmable in \PCF\ from these primitive functions.

The definitions that follow are parametric in a choice of admissible clone $\clone B$, so we should speak of $\clone B$-quasiopen set, $\clone B$-quasicontinuity, etc. However, for simplicity, we will omit the parameter $\clone B$, implicitly fixing once and for all an admissible set $\clone F$ of primitive functions and letting $\clone B:=\pair{\clone F}$. Functions in $\clone B$ will be called \emph{basic}.	

\begin{definition}[quasiopen set]
	We define the class of \emph{quasiopen} sets of $\Real^n$ to be the smallest class of subsets of $\Real^n$ which:
	\begin{enumerate}
		\item contains every open set;
		\item contains the zero set of every basic function $\Real^n\pto\Real$;
		\item is closed under countable unions and binary intersections.
	\end{enumerate}
	Inductively, the quasiopen sets of $\Real^n$ may be defined as follows:
	$$Q,Q' ::= U \gsep h^{-1}(0) \gsep \bigcup_{i\in I}Q_i \gsep Q\cap Q',$$
	where $U$ ranges over the open sets of $\Real^n$, $h:\Real^n\pto\Real$ ranges over basic functions (which may further be supposed to be not identically zero) and $I$ is countable.
\end{definition}

\begin{definition}[quasivariety]
\label{def:quasivariety}
	A set $Z\subseteq\Real^n$ is called a \emph{quasivariety} if there exists a family $\{h_i\}_{i\in I}$ of not identically zero basic functions $h_i:\Real^n\to\Real$ with $I$ countable and such that
	$$Z\subseteq\bigcup_{i\in I}h^{-1}(\{0\}).$$
	In other words, a quasivariety is an arbitrary subset of a countable union of zero sets of basic functions (except the identically zero function).
\end{definition}

The following result, which says that quasivarieties form a class of ``negligible sets'', will be frequently used in the sequel, without explicit mention:
\begin{lemma}
	\label{lemma:quasivar}
	Quasivarieties enjoy the following properties:
	\begin{enumerate}
		\item \textbf{\emph{measure zero:}} if $Z\subseteq\Real^n$ is a quasivariety, then it a has Lebesgue measure zero in $\Real^n$;
		\item \textbf{\emph{stability under countable unions:}} if $\{Z_i\}_{i\in I}$ is a countable family of quasivarieties, then $\bigcup_{i\in I} Z_i$ is a quasivariety;
		\item \textbf{\emph{stability under subsets:}} if $Z$ is a quasivariety and $Z'\subseteq Z$, then $Z'$ is a quasivariety.
	\end{enumerate}
\end{lemma}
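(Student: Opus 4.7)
My plan is to address the three clauses separately, with property (3) being essentially immediate from the definition, property (2) following from the fact that a countable union of countable sets is countable, and property (1) being the only one requiring the admissibility hypothesis.

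For \textbf{stability under subsets}, if $Z' \subseteq Z \subseteq \bigcup_{i \in I} h_i^{-1}(\{0\})$ with $\{h_i\}_{i \in I}$ a countable family of not identically zero basic functions witnessing that $Z$ is a quasivariety, then the same family also witnesses that $Z'$ is a quasivariety, since $Z' \subseteq \bigcup_{i \in I} h_i^{-1}(\{0\})$ by transitivity of inclusion.

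For \textbf{stability under countable unions}, suppose $\{Z_i\}_{i \in I}$ is a countable family of quasivarieties, each witnessed by a countable family $\{h_{i,j}\}_{j \in J_i}$ of not identically zero basic functions, so that $Z_i \subseteq \bigcup_{j \in J_i} h_{i,j}^{-1}(\{0\})$. Then
$$\bigcup_{i \in I} Z_i \;\subseteq\; \bigcup_{i \in I} \bigcup_{j \in J_i} h_{i,j}^{-1}(\{0\}) \;=\; \bigcup_{(i,j) \in K} h_{i,j}^{-1}(\{0\}),$$
where $K := \{(i,j) \st i \in I, j \in J_i\}$ is a countable union of countable sets, hence countable. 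Since all the $h_{i,j}$ are not identically zero basic functions, the reindexed family witnesses that $\bigcup_{i \in I} Z_i$ is a quasivariety.

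For \textbf{measure zero}, let $Z \subseteq \bigcup_{i \in I} h_i^{-1}(\{0\})$ with $I$ countable and each $h_i:\Real^n \pto \Real$ a not identically zero basic function. Since $\clone B$ is admissible, clause (2) of \refdef{admissibility} applies to each $h_i$ and yields that $h_i^{-1}(\{0\})$ has Lebesgue measure zero in $\Real^n$. A countable union of Lebesgue null sets is Lebesgue null, so $\bigcup_{i \in I} h_i^{-1}(\{0\})$ has measure zero, and by completeness of the Lebesgue measure its subset $Z$ is measurable and also of measure zero. The only subtlety here is ensuring that the partiality of the $h_i$ does not cause issues: by \refdef{admissibility}(1) each $h_i$ is continuous on its (presumably open) domain, and the zero set $h_i^{-1}(\{0\})$ is understood relative to this domain, which does not affect the measure-zero conclusion since measure zero is preserved by restriction. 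I expect the main (though still minor) obstacle to be the bookkeeping in this partiality issue; the rest is essentially definitional.
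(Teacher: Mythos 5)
Your proof is correct and fills in exactly the details the paper leaves implicit — the paper's own proof is just ``Immediate from the definition,'' so there is no divergence in approach. The one thing you flag as a potential subtlety, the partiality of the $h_i$, is in fact a non-issue: clause (2) of Definition~\ref{def:admissibility} already asserts outright that $h_i^{-1}(0)$ has Lebesgue measure zero in $\Real^n$ for any not-identically-zero basic $h_i$, so there is no bookkeeping to do and no need to invoke continuity or any assumption about the domain being open.
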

\begin{proof}
	Immediate from the definition.
\end{proof}

\begin{lemma}
	\label{lemma:qopen}
	Let $Q\subseteq\Real^n$ be quasiopen. Then:
	\begin{enumerate}
		\item\label{qopen:decomp} there exists an open set $U$ and a quasivariety $Z$ such that $Q=U\cup Z$;
		\item\label{qopen:border} $\border Q$ is a quasivariety. 
		Hence, in the above one may always take $U=\interior{Q}$ and $Z=\border{Q}$.
	\end{enumerate}
\end{lemma}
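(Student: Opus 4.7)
My plan is to prove part (1) by induction on the inductive presentation of quasiopen sets, and then to derive part (2) from part (1) by a purely set-theoretic argument.

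For part (1), I proceed by structural induction on $Q$ using the grammar
$Q ::= U \mid h^{-1}(0) \mid \bigcup_{i\in I} Q_i \mid Q \cap Q'$
with $U$ open, $h$ basic and not identically zero, and $I$ countable. The two base cases are immediate: for an open set $U$, take the decomposition $U \cup \emptyset$; for $h^{-1}(0)$, take $\emptyset \cup h^{-1}(0)$, which is valid since a single zero set of a not-identically-zero basic function is a quasivariety by \refdef{quasivariety}. For the countable union case, apply the induction hypothesis to get $Q_i = U_i \cup Z_i$, and then use
\[
\bigcup_{i\in I} Q_i \;=\; \Bigl(\bigcup_{i\in I} U_i\Bigr) \cup \Bigl(\bigcup_{i\in I} Z_i\Bigr),
\]
where the first set is open and the second is a quasivariety by the countable-union stability from \reflemma{quasivar}. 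For the binary intersection, expand
\[
(U_1 \cup Z_1) \cap (U_2 \cup Z_2) \;=\; (U_1 \cap U_2) \,\cup\, (U_1 \cap Z_2) \,\cup\, (Z_1 \cap U_2) \,\cup\, (Z_1 \cap Z_2);
\]
the first summand is open, and each of the other three is a subset of a quasivariety ($Z_1$ or $Z_2$), hence itself a quasivariety by the stability under subsets in \reflemma{quasivar}. Their union is therefore a quasivariety, again by stability under countable (here finite) unions.

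For part (2), fix any decomposition $Q = U \cup Z$ given by part (1). Since $U$ is an open subset of $Q$, we have $U \subseteq \interior{Q}$, so
\[
\border Q \;=\; Q \setminus \interior{Q} \;\subseteq\; Q \setminus U \;\subseteq\; Z.
\]
Thus $\border Q$ is a subset of a quasivariety and is therefore itself a quasivariety by \reflemma{quasivar}. The trivial identity $Q = \interior{Q} \cup (Q \setminus \interior{Q}) = \interior{Q} \cup \border Q$ then gives the distinguished decomposition with $U = \interior{Q}$ (open by definition) and $Z = \border{Q}$ (a quasivariety by what we just showed).

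The only genuine care point is the binary intersection case of the induction: one might fear that intersecting a nontrivial quasivariety with an open set could produce something escaping the quasivariety class, but the stability-under-subsets property of \reflemma{quasivar} makes this immediate. Everything else is a bookkeeping manipulation of the grammar, and part (2) is essentially a one-line corollary of part (1) once one observes that any open piece of a decomposition must be contained in $\interior{Q}$.
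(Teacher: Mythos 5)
Your proof is correct and follows the natural and, as far as I can tell, same route as the paper: part (1) by structural induction on the inductive grammar of quasiopen sets, using stability of quasivarieties under subsets and countable unions (\reflemma{quasivar}) in the union and intersection cases, and part (2) as a direct consequence of (1) together with the observation that any open $U\subseteq Q$ satisfies $U\subseteq\interior{Q}$, so $\border{Q}\subseteq Z$ for any decomposition $Q=U\cup Z$.
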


The set of non-positive numbers $\Real_{\leq 0}$ is an example of quasiopen subset of $\Real$: to see why, simply notice that $\Real_{\leq 0}=\Real_{<0}\cup\{0\}$, the first being open and the second being the zero set of the identity, which is always a basic function. In a sense, the key property of the class of quasiopen sets is that it includes both $\Real_{\leq 0}$ and $\Real_{>0}$, a fact which will be used crucially in \reflemma{if}.

On the other hand, thick Cantor sets provide examples of non-quasiopen subsets of $\Real$: such a set $K$ is closed, of positive measure and has empty interior, so $K=\border{K}$, which would contradict \reflemma{qopen}.\ref{qopen:border} if $K$ were quasiopen.

In what follows, if $f:A\pto B$ and $g:C\pto D$ are partial functions between sets, we write $f\times g$ for the function of type $A\times C\pto B\times D$ such that $(f\times g)(a,c)=(f(a),g(c))$ whenever $f(a)$ and $g(c)$ are defined, and is undefined otherwise. 
\begin{definition}[(complete) quasicontinuity]
	\label{def:cqc}
	A function $f:\Real^n\pto\Real^m$ is \emph{quasicontinuous}\footnote{The terminology ``quasicontinuous'' already has a standard meaning, unrelated to the one defined here.  On the other hand, ``quasiopen'' and ``completely quasicontinuous'', which are the fundamental notions used in this work, do not seem to have been used in the literature.} if, for every quasiopen set $Q\subseteq\Real^m$, $f^{-1}(Q)$ is quasiopen. We say that $f$ is \emph{completely quasicontinuous} (cqc) if $\id_{\Real^k}\times f$ is quasicontinuous, for all $k\in\Nat$.
\end{definition}

Complete quasicontinuity is needed in order to have \refsublemma{qcont}{pairing} below. It is worth pointing out, however, that we have not been able to find an example of a quasicontinuous function which is not completely quasicontinuous. So, while we conjecture that complete quasicontinuity is strictly stronger than quasicontinuity, the two notions might coincide in reality.

\begin{lemma}
	\label{lemma:qcont}
	We have the following properties:
	\begin{enumerate}
		\item\label{qcont:decomp} a function $f:\Real^n\to\Real^m$ is quasicontinuous iff for every $Q$ which is either open or the zero set of a basic function, $f^{-1}(Q)$ is quasiopen.
		\item\label{qcont:comp} Identities are cqc and cqc functions are stable under composition.
		\item\label{qcont:basic} Basic functions are cqc. In particular, projections are cqc.
		\item\label{qcont:pairing} If $f:\Real^k\pto\Real^m$ and $g:\Real^k\pto\Real^n$ are cqc, then the function $\pair{f,g}:\Real^k\to\Real^{m+n}$ defined by $\pair{f,g}\!(z):=(f(z),g(z))$ if $z\in\dom f\cap\dom g$ and undefined otherwise, is also cqc.
	\end{enumerate}
\end{lemma}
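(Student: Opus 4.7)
The plan is to prove the four items in an order that respects their dependencies. First I would dispatch (1) and (2), which are general preimage facts, and then treat (3) and (4) together, bootstrapped from a direct verification for tuples of coordinate projections.

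For item (1), the forward direction is immediate since open sets and zero sets of basic functions are quasiopen by definition. For the converse, I would induct on the inductive presentation of quasiopen sets in $\Real^m$, using the identities $f^{-1}(\bigcup_{i\in I}Q_i)=\bigcup_{i\in I}f^{-1}(Q_i)$ and $f^{-1}(Q\cap Q')=f^{-1}(Q)\cap f^{-1}(Q')$, together with the closure of quasiopen sets under countable unions and binary intersections. Item (2) is routine: for identities, $(\id_{\Real^k}\times\id_{\Real^n})^{-1}(Q)=Q$; for composition, use $\id_{\Real^k}\times(g\circ f)=(\id_{\Real^k}\times g)\circ(\id_{\Real^k}\times f)$ together with commutation of preimages with composition, so that two applications of the cqc hypotheses suffice.

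For items (3) and (4) I proceed in three coordinated steps. \emph{Step A:} any map $F:\Real^N\to\Real^M$ whose components are coordinate projections $\pi_i:\Real^N\to\Real$ is cqc, verified via criterion (1): preimages of opens are open by continuity, while the preimage $(\id_{\Real^k}\times F)^{-1}(h^{-1}(0))$ of the zero set of a basic function $h$ equals the zero set of $h\circ(\id_{\Real^k}\times F)$, which lies in $\clone B$ since clones are closed under composition with projections. \emph{Step B} (proof of (4)): factor $\id_{\Real^k}\times\pair{f,g}$ as $F_3\circ F_2\circ F_1$, where $F_1(z,x):=(z,x,x)$, $F_2(z,x,y):=(z,f(x),y)$ and $F_3(z,u,y):=(z,u,g(y))$. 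Then $F_1$ is a tuple of coordinate projections (cqc by Step A), $F_3=\id_{\Real^{k+m}}\times g$ is cqc by hypothesis on $g$, and $F_2$ is $\id_{\Real^{k+\ell}}\times f$ conjugated by coordinate permutations (themselves tuples of projections, cqc by Step A). Item (2) yields cqc of the composite. \emph{Step C} (proof of (3)) proceeds by structural induction on the generation of a basic function as a composition of primitives: the inductive step uses (2) and (4), while the base case is to show a primitive $f$ is cqc. By (1), one checks preimages of opens and of zero sets; the zero-set case is as in Step A, and for an open $U$ continuity of $f$ on $\dom f$ (first clause of admissibility) gives $(\id_{\Real^k}\times f)^{-1}(U)=V\cap(\Real^k\times\dom f)$ for some open $V$, which is quasiopen provided $\dom f$ is.

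The main obstacle I foresee is precisely this last point: to close the open-set case in Step C one needs $\dom f$ to be quasiopen for each primitive $f$. In the setting of primary interest---primitives which are analytic on open sets, as discussed after \refdef{admissibility}---this is automatic (the domain is even open) and propagates along the induction to yield $\dom f$ quasiopen for every basic $f$. Everything else is standard bookkeeping with preimages and the elementary closure properties of quasiopen sets.
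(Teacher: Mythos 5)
Your decomposition is correct and is essentially the natural one: item (1) by induction on the inductive presentation of quasiopen sets using preimage identities, item (2) by $\id\times(g\circ f)=(\id\times g)\circ(\id\times f)$, and items (3)--(4) bootstrapped from tuples of coordinate projections, with (4) proved before (3) and used in the inductive step of (3). That ordering (Step A, then (4), then (3)) avoids any circularity, and the factorization $\id\times\pair{f,g}=F_3\circ F_2\circ F_1$ is exactly the standard trick. As far as the structure of the argument goes, you have it right.

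The ``main obstacle'' you flag at the end is, however, not an obstacle, and you do not need to retreat to the analytic case to close it. For any basic $f:\Real^n\pto\Real$, the function $0\cdot f$ (i.e.\ $x\mapsto\sem[]{\cdot}(0,f(x))$) is again basic, since $0$ and $\cdot$ are among the mandatory primitive functions and $\clone B=\pair{\clone F}$ is a clone. Its domain is exactly $\dom f$, and it vanishes on all of it, so $\dom f=(0\cdot f)^{-1}(0)$ is the zero set of a basic function and hence quasiopen \emph{by clause (2) of the definition of quasiopen}, with no openness hypothesis on $\dom f$ whatsoever. (Note that this clause applies to \emph{every} basic function; the parenthetical in the inductive presentation only says one \emph{may} restrict to non-identically-zero $h$ when convenient, and the quasivariety notion of \refdef{quasivariety} is the one that genuinely needs the non-vanishing restriction.) The same trick handles the rectangle you actually need in Step~C: $\Real^j\times\dom f=(0\cdot(f\circ\pi))^{-1}(0)$ where $\pi:\Real^{j+n}\to\Real^n$ is the projection, and this is again the zero set of a basic function. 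So the open-set case of Step~C closes unconditionally: $(\id_{\Real^j}\times f)^{-1}(U)=V\cap(\Real^j\times\dom f)$ with $V$ open and the second factor quasiopen, hence the intersection is quasiopen. With this patch your proof works for any admissible clone in the sense of \refdef{admissibility}, which is what the lemma claims.
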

Contrarily to what the name might suggest, a continuous function is \emph{not} in general quasicontinuous. In fact, it is well known that any closed subset of $\Real$ may be the zero set of a map which is smooth everywhere (in particular, continuous). So let $\funsa:\Real\to\Real$ be a smooth function whose zero set is a thick Cantor set $K$, which, as observed above, is not quasiopen. The set $\{0\}$ is quasiopen (it is the zero set of the identity, which is a basic map), and yet $\funsa^{-1}(\{0\})=K$, so $\funsa$ is not quasicontinuous.

\section{Soundness of AD}
\label{sect:Soundness}

We want to prove that $\FBgrad M$ computes the gradient of a program $M$ almost everywhere in $\diffdom M$ (\refth{Unsound}). The proof splits in two parts: \refth{Sound} states that $\FBgrad M$ is sound for the set $\Stab M$ of stable points of $\diffdom M$ (\refdef{Stab}) and \refsect{Unsoundness} shows that $\Stab M$ is actually almost all of $\diffdom M\subseteq\dom M$, in the sense that $\dom M\setminus\Stab M$ is of measure zero.

Intuitively, a point $\seq r\in\dom M$ is stable whenever there exists a simple term $t$ that ``traces'' the evaluation of $M$ over an open ball $\Ball\varepsilon{\seq r}$ of $\seq r$. Such a $t$ allows us to lift the soundness theorem for simple terms (\refprop{SimpleSound}) to $\FBgrad M$. \todom{changed sentence}This reasoning is based on the extrusion lemma (\reflemma{extrusion}), which needs a notion of ``trace'' not only at level of terms (\refdef{TrApx}), but also at the level of  the reduction sequences  (\refdef{ApxRed_sequences}).

\subsection{Traces}
\label{subsection:traces}
\begin{figure*}
 	\begin{subfigure}[b]{\textwidth}
 		$$
 		\infer{\tyR\apx\tyR}{}
		\quad
 		\infer{A'\to B'\apx A\to B}{A'\apx A &B'\apx B}
		\quad
 		\infer{A_1'\times\cdots\times A'_k\apx A_1\times\cdots\times A_k}{A_i'\apx A_i,\; \forall\,1\leq i\leq k}
		 \quad
		 \infer{A_1\times\cdots\times A_n\apx A}{A_i\apx A,\; \forall\,1\leq i\leq n}
 		$$
 		\caption{The pre-trace relation on types.}
 	\end{subfigure}
 
 	\bigskip
 	\begin{subfigure}[b]{\textwidth}
 		$$
		\infer{\Xi\vdash\ctxthole\apx\ctxthole}{}
		\qquad
		\qquad		
 		\infer[i\in\{1,\dots,n\}]{\Xi, p^{A_1\times\cdots\times A_n}\apx x^A\vdash \pi_i^n p\apx x}{A_1\times\cdots\times A_n\apx A}
 		$$
 		\medskip
 		$$
 		\infer{\Xi\vdash \lambda p^{A'}.t\apx \lambda x.M}{\Xi, p^{A'}\apx x^A\vdash t \apx M}
		\qquad
		 \infer{\Xi\vdash t\!\pair{u_1,\ldots,u_n}\apx MN}{\Xi\vdash t\apx M, & \Xi\vdash u_1\apx N &\ldots& \Xi\vdash u_n\apx N}
 		$$
 		\medskip
 		$$
 		\infer{\Xi\vdash\pair{t_1,\dots,t_k}\apx\pair{M_1,\dots,M_k}}{\Xi\vdash t_1\apx M_1,&\dots,&\Xi\vdash t_k\apx M_k}
 		\qquad\quad
 		\infer{\Xi\vdash \pi_i^kt\apx\pi_i^kM}{\Xi\vdash t\apx M}
		\qquad\quad
		 \infer{\Xi\vdash \funsa(t_1,\dots,t_k)\apx\funsa(M_1,\dots,M_k)}{\Xi\vdash t_1\apx M_1 &\ldots& \Xi\vdash t_k\apx M_k}
		$$
		\medskip
 		$$
		 \infer[i\in\{1,2\}]{\Xi\vdash \pr i\!\pair{t_i,t_i} \apx\ite P{M_1}{M_2}}{\Xi\vdash t_i\apx M_i}
 		\qquad\quad
 		\infer[n>0]{\Xi\vdash t\apx \fix f M}{\Xi\vdash t\apx\fix[n]{f}{M}}
 		$$
 		\caption{The pre-trace relation on terms. In the variable rule, if $n=1$, then $\pi_i$ is omitted. Recall the definition of $\fix[n] fM$ in \eqref{eq:fixpoint_approx}.
		}
 		\label{fig:TraceTerms}
 	\end{subfigure}
 	\caption{The pre-trace relation between simple and arbitrary \PCF\ terms.}
 	\label{fig:Trace}
\end{figure*}
 
The \emph{pre-trace relation} is defined in \reffig{Trace}. The judgments used in the definition are of the form $\Xi\vdash t\apx M$, where $t$ is a simple term or simple context, $M$ is an arbitrary term or context of type, say, $\Gamma\vdash M:A$ and 
$\Xi$ is a function mapping any variable $x^A$ of $\Gamma$ to a fresh variable $p^{A'}$ with $A'\apx A$. We usually denote this map as a list $p_1^{A_1'}\apx x_1^{A_1},\ldots,p_n^{A_n'}\apx x_n^{A_n}$, supposing the $p_i$'s and $x_i$'s to be pairwise different. 

For brevity, we omit to specify the types of the terms in the subjects of the judgments in \reffig{TraceTerms}, but we encourage the reader to verify that 
if $p_1\apx x_1,\ldots,p_n\apx x_n\vdash t\apx M$ and $x_1^{C_1},\ldots,x_n^{C_n}\vdash M:A$, then $p_1^{C'_1},\ldots p_n^{C'_n}\vdash t:A'$ with $A'\apx A$ and, for all $1\leq i\leq n$, $C'_i\apx C_i$. In particular, $t$ is a simply-typed \lat{} or context. We write $t\apx M$ when the typing environment $\Xi$ is irrelevant.

\todod{Added paragraph}
\todom{Modified paragraph (and added one more)}
Conditionals and fixpoints are the only additional features of \PCF\ with respect to the simply-typed $\lambda$-calculus, and the purpose of $\apx$ is to ``trace'' them with simply-typed terms themselves. The last two rules of \reffig{TraceTerms} ``slice out'' a conditional with the traces of its two branches and unfold a fixpoint into its finite approximations. In fact, the conditional rule is a bit more convoluted as it uses a dummy projection in order to encode the index of the chosen branch, a crucial information for the extrusion property (\reflemma{extrusion}). For example, $u_1:=\lambda x^\tyR.\pi_1\!\pair{0,0}$ and $u_2:=\lambda x^\tyR.\pi_2\!\pair{x,x}$ are traces corresponding to the ``then'' and ``else'' branch, respectively, of $\Relu$ defined in \eqref{ex:pcfterms}.

The variable rule of \reffig{Trace} also deserves an explanation. Its non-trivial shape, which is due to higher order types, may be understood as follows. Let $T:=\lambda f^{\tyR\to\tyR}.f(f0+1)$ be a term using its (higher order) argument twice and consider the program $T\,\Relu$. Recall that $\Relu$ contains a conditional controlled by its argument, and has two different traces $u_1$ and $u_2$ discussed above. However, the execution of 
$T\,\Relu$, as sketched in \reffig{ex_tracing_red}, explores both branches of $\Relu$, so we allow to trace $T$ with $t:=\lambda p.\pi_2 p((\pi_1 p)0+1)$ and $T\,\Relu$ with $t\!\pair{u_1,u_2}$. That is, we allow different instances of the same variable $f$ to be traced by different components of a tuple variable $p$, because, even if in the original program all occurrences of $f$ are replaced by copies of the same term $\Relu$, different copies of $\Relu$ might be traced by different simple terms, so the occurrences of $f$ must be ``separated'' accordingly.

\begin{lemma}\label{lemma:apx-programs}
	Let $M$ and $t$ be normal forms of type $\FB[n]\tyR$ whose free variables have type belonging to $\{\FB[n]\tyR,\tyR,\tyR^n,\tyR^{\perp_n}\}$. If $t\apx M$ then $t=M$.
\end{lemma}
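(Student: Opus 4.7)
The plan is to proceed by induction on the derivation of $t\apx M$, case-analysing the last rule from \reffig{TraceTerms}. The first step is to rule out the two ``non-structural'' rules using the normality hypothesis. The conditional rule concludes with $t = \pi_i^2\!\pair{t_i,t_i}$, which is a projection applied to a pair and hence a redex; this contradicts $t$ being normal. Symmetrically, the fixpoint rule concludes with $M = \fix{f}{M'}$, which is itself a redex, contradicting normality of $M$. As a consequence, $M$ is forced to be a simply-typed term (containing neither conditionals nor fixpoints), exactly like $t$.

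For the remaining rules (variable, lambda, application, pair, projection, function symbol), I would appeal to the inductive hypothesis on the sub-derivations. This relies on two easy observations: subterms of normal forms are normal, and the restricted type set for free variables is preserved when passing to subterms (for newly bound variables, the pertinent types come through the type-$\apx$ relation and can be shown, by a short case analysis, to remain in the allowed set $\{\FB[n]\tyR,\tyR,\tyR^n,\tyR^{\perp_n}\}$). The four ``homomorphic'' rules (lambda, pair, projection, function symbol) then yield $t = M$ immediately by the IH.

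The genuine content of the lemma lies in the variable rule, $t = \pi_i^n p \apx x = M$ with $p$ of product type $A_1\times\cdots\times A_n\apx A$, and the application rule, $t = t'\!\pair{u_1,\ldots,u_n} \apx M'N = M$. In both cases, the trace relation is allowed to insert a product wrapper on the $t$-side via the ``right-product'' rule of the type-$\apx$ relation. The main obstacle is therefore a type-level analysis showing that under the free-variable type restriction, any non-trivial use of the right-product rule forces either $p$ or one of the bound variables of $t$ to inhabit a type outside the allowed set, or forces $t$ to contain a redex (e.g.\ because $t'$ would have to be a $\lambda$-abstraction whose product domain a neutral-normal argument cannot match). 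This collapses the variable rule to $n = 1$ with $p$ and $x$ identified through the naming convention in $\Xi$, and the application rule to $n = 1$ with $u_1 = N$; combined with the IH on the remaining sub-derivations, this closes the induction and gives $t = M$.
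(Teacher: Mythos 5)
The proposal is correct in structure and, as far as one can tell from the lemma's place in the development, follows the paper's approach: induction on the derivation of $t\apx M$, ruling out the conditional and fixpoint rules by normality (of $t$ and $M$ respectively), treating the homomorphic rules by IH, and isolating the variable and application rules as the ones needing a type argument. Two places where the sketch leaves real work undone deserve to be called out. First, the IH must be formulated for \emph{all} types in the allowed set $\{\FB[n]\tyR,\tyR,\tyR^n,\tyR^{\perp_n}\}$, not merely $\FB[n]\tyR$: subterms of a normal form of type $\FB[n]\tyR$ with free variables in this set have types elsewhere in the set, and checking that the set is closed under the decompositions used in the induction (projections, lambda bodies, application heads and arguments) is part of the proof, not a triviality. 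Second, the phrase ``forces $p$ or a bound variable to inhabit a type outside the allowed set'' mischaracterizes what actually rules out non-trivial uses of the right-product rule: types such as $\tyR^n$ and $\FB[n]\tyR$ are non-trivial products and \emph{are} in the allowed set, so that alone is not enough. The operative fact is sharper: the only arrow type in the allowed set is $\tyR^{\perp_n}=\tyR\to\tyR^n$, and its domain $\tyR$ is a base type, hence not a non-trivial product. This is what forces every $\lambda$-bound variable encountered in the induction to have type $\tyR$ (so the variable rule cannot project it), and forces the argument of any application headed by a neutral term to have type $\tyR$ (so it cannot be a non-trivial tuple); for free variables the identification through $\Xi$ (implicitly an identity environment, as you note via the ``naming convention'') forces the two sides to have equal types, blocking a right-product split there as well. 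Your parenthetical about ``$t'$ would have to be a $\lambda$-abstraction'' is the right dichotomy for the application case, but the $\tyR$-domain observation should be stated explicitly since it is the actual content that makes the restricted type set do its job.
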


\begin{lemma}\label{lemma:apx-substitution}
	If $\Xi\vdash t\apx M$, then:
	\begin{enumerate}
	\item\label{lemma:apx-substitution-D}
	$\FB{\Xi}\vdash\FB{t}\apx\FB{M}$, where $\FBSymb$ turns any assignment $p^{A'}\apx x^A$ of $\Xi$ into $p^{\FB{A'}}\apx x^{\FB{A}}$.
	\item\label{lemma:apx-substitution-sub}
	Let $\Xi=\Xi',x^A\apx x^A$. For every closed simple term $u$ of type $A$, we have $\Xi'\vdash t\isub{u}{x}\apx M\isub{u}{x}$.
	\end{enumerate}
\end{lemma}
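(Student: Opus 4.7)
Both parts proceed by structural induction on the derivation $\Xi\vdash t\apx M$, after two preliminaries. First, the type pre-trace is reflexive ($A\apx A$ for every $A$, by induction on types) and commutes with $\FBSymb$: if $A'\apx A$ then $\FB{A'}\apx\FB{A}$, which follows by induction on the derivation of $A'\apx A$ using $\FB{\tyR}=\tyR\times X$ with $X\in\{\tyR^n,\tyR^{\perp_n}\}$ and the homomorphic action of $\FBSymb$ on arrows and products. Second, a reflexivity sublemma for simple terms: for every simple $v$ whose free variables are $\seq{x}$, the judgment $\seq{x}\apx\seq{x}\vdash v\apx v$ is derivable. This is proved by induction on $v$, using only the ``identity'' instance of each rule---the unary-product case of the variable rule (no projection), singleton tuples in the application rule, and direct recursion through abstractions, pairs, projections and function symbols---since the simplicity of $v$ removes any need for the conditional or fixpoint rule. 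Weakening (adding fresh assignments to $\Xi$) is also immediate from inspection of the rules.

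For part~(1), the induction on $\Xi\vdash t\apx M$ treats abstractions, applications, pairs, projections, and variables uniformly, since $\FBSymb$ commutes with the corresponding term constructor (\refsubfig{fb-ho}) and the type-level side conditions reduce to $\FB{A'}\apx\FB{A}$ via the first preliminary. The function symbol case uses the reflexivity sublemma: both $\FB{\funsa(\seq{t})}$ and $\FB{\funsa(\seq{M})}$ have the common shape $V\FB{t_1}\cdots\FB{t_k}$ and $V\FB{M_1}\cdots\FB{M_k}$ for $V:=\lambda\seq{z}.W$ a common closed simple term (\refsubfig{fb-ground}), so the sublemma gives $V\apx V$, the induction hypothesis gives $\FB{t_i}\apx\FB{M_i}$, and $k$ applications of the application rule (using singleton tuples) conclude. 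The conditional case follows directly from $\FB{\ite{P}{M_1}{M_2}}=\ite{\pi_1\FB{P}}{\FB{M_1}}{\FB{M_2}}$ and the corresponding trace rule. The fixpoint case uses the auxiliary identity $\FB{\fix[n]{f^A}{M}}=\fix[n]{f^{\FB{A}}}{\FB{M}}$, proved by induction on $n$ via the homomorphic action of $\FBSymb$ on abstraction and application.

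For part~(2), induction on $\Xi\vdash t\apx M$ with $\Xi=\Xi',x^A\apx x^A$ fixed: every non-variable rule commutes with the substitution $\isub{u}{x}$ of the closed simple term $u$ (modulo $\alpha$-renaming of bound variables, safe by closedness of $u$), so the induction hypothesis applies directly; the fixpoint case additionally uses $(\fix[n]{f}{M})\isub{u}{x}=\fix[n]{f}{(M\isub{u}{x})}$, shown by induction on $n$. For the variable rule, if the assignment used lies in $\Xi'$ the substitution leaves both $t$ and $M$ unchanged. If instead the assignment used is $x^A\apx x^A$, then because the simple variable's type is fixed to $A$, any non-trivial product decomposition $A=A_1\times\cdots\times A_n$ with $n>1$ would violate the type-level consistency condition $A_i\apx A$ expected of the trace (since $A_i$ would be a mere factor of $A$, not a trace of it); hence the rule is forced to use $n=1$, yielding the conclusion $x\apx x$, which under substitution becomes $\Xi'\vdash u\apx u$, supplied by the reflexivity sublemma and weakening. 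The subtle step to verify carefully is precisely this point about the variable rule: one must confirm that the fixed assignment $x^A\apx x^A$ admits only the unary reading, so that substitution by $u$ produces the well-formed $u\apx u$ rather than an undefined $\pi_i^n u\apx u$ with $n>1$.
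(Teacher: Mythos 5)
Your proposal is correct and follows the standard route one would expect: structural induction on the derivation of $\Xi\vdash t\apx M$, supported by the auxiliary observations that $\apx$ on types commutes with $\FBSymb$, that $\apx$ is reflexive on simple terms (proved with the unary instances of the variable, application, and product rules), that weakening holds, and that $\FBSymb$ commutes with the finite fixpoint approximations $\fix[n]{f}{M}$. You also correctly identify the one genuinely delicate point, namely the variable case of part (2): the assignment $x^A\apx x^A$ forces the arity $n$ in the variable rule to be $1$, because $A_i\apx A$ with $A=A_1\times\cdots\times A_n$ and $n>1$ would contradict the fact (provable by a size induction on types) that $A'\apx A$ implies $A'$ is at least as large as $A$, and hence no strict factor of $A$ can pre-trace $A$. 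Your phrase ``$A_i$ would be a mere factor of $A$, not a trace of it'' is gesturing at exactly this size argument, and while a fully rigorous write-up should spell it out, the reasoning is sound and this is essentially the argument the paper relies on.
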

%

\begin{lemma}
	\label{lemma:apx-substs}
	We have that $\Xi\vdash w\apx M\isub{N}{x}$ is equivalent to
	\begin{itemize}
		\item $w=t\isub{u_1}{x_1}\dots\isub{u_n}{x_n}$, for some $n\in\Nat$ and terms $t,u_1,\ldots,u_n$,
		\item such that $\Xi,p^{A_1\times\dots\times A_n}\apx x^A\vdash t\isub{\pi_1 p}{x_1}\dots\isub{\pi_n p}{x_n}\apx M$, $p$ not free in $t$,
		\item and $\Xi\vdash u_i\apx N$ for all $1\leq i\leq n$.
	\end{itemize}
	In particular, $\Xi,p^{A_1\times\dots\times A_n}\apx x^A\vdash w\apx M$ implies $w=t\isub{\pi_1 p}{x_1}\dots\isub{\pi_n p}{x_n}$ for some $t$ not containing $p$ free. 
\end{lemma}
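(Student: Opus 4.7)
The plan is to prove the equivalence by structural induction, and to handle both directions through a common syntactic invariant. The technical heart is the ``in particular'' clause, which says that any trace of $M$ in the extended context $\Xi, p\apx x$ uses the variable $p$ only through the variable rule of \reffig{TraceTerms}, so that every free occurrence of $p$ in $w$ sits inside a subterm of the form $\pi_i^n p$. This is a syntactic invariant that one checks by a straightforward induction on the trace derivation: the only rule that can introduce $p$ as a free variable is the variable rule, and it always does so in the form $\pi_i^n p$; every other rule preserves the invariant on its premises.

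For the forward direction ($\Rightarrow$), I would induct on $M$. The variable cases are immediate: if $M = y \neq x$, set $n = 0$ and $t = w$; if $M = x$, set $n = 1$, pick a fresh $x_1$, and take $t = x_1$, $u_1 = w$, so that the variable rule with the trivial product type gives $\Xi, p\apx x \vdash p \apx x$ (recalling that $\pi_1^1 p$ is identified with $p$). The abstraction, projection, pair, function-symbol, and conditional cases proceed routinely by applying the induction hypothesis to the subderivations and reassembling. The fixpoint case exploits the dedicated rule for $\fix f {M'}$, which reduces tracing to tracing some finite unfolding $\fix[n]{f}{M'}$; the IH then applies to the latter. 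The application case $M = M_1 M_2$ is the most delicate: the trace of $M_1\isub{N}{x}M_2\isub{N}{x}$ has the form $s\pair{v_1,\ldots,v_k}$, and the IH applied to $s$ (against $M_1$) and to each $v_j$ (against $M_2$) produces separate tuple variables $p^{(0)}, p^{(1)}, \ldots, p^{(k)}$; these must be merged into a single $p$ by flattening the product types (using the type-level rule $A_1\times\cdots\times A_n\apx A$) and renumbering the projections.

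For the backward direction ($\Leftarrow$), I would splice derivations: given $\Xi, p\apx x \vdash t\isub{\pi_1 p}{x_1}\cdots\isub{\pi_n p}{x_n} \apx M$ and $\Xi \vdash u_i \apx N$ for each $i$, I replace each leaf of the first derivation in which the variable rule concludes $\pi_i^n p \apx x$ by the given trace derivation of $u_i \apx N$. By the ``in particular'' invariant, those leaves are exactly the places where $p$ appears in the trace, so the splicing is syntactically coherent. The resulting derivation witnesses $\Xi \vdash t\isub{u_1}{x_1}\cdots\isub{u_n}{x_n} \apx M\isub{N}{x}$, since every occurrence of $x$ in $M$ becomes $N$ in $M\isub{N}{x}$, and correspondingly every $\pi_i p$ in $t\isub{\pi_i p}{x_i}$ becomes $u_i$.

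The main obstacle is the bookkeeping in the application case of the forward direction: combining the several tuple variables produced by distinct IH invocations into a single $p$ whose projections index every ``trace-slot'' for $x$ across $s$ and the $v_j$'s. A subtler point worth highlighting is that $\pair{u_1,\ldots,u_n}$ does not itself trace $N$---the trace rules let a pair trace only a pair---which is precisely why the multiplicity of traces of $N$ must be encoded through the substitutions $\isub{\pi_i p}{x_i}$ rather than by a single substitution of a tuple for $x$.
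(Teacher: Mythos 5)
Your overall structure is sound: establishing the ``in particular'' clause as a syntactic invariant (by induction on the trace derivation, noting that the variable rule of \reffig{TraceTerms} is the only one introducing $p$ free, and always wraps it as $\pi_i^n p$), then using it to drive the two directions of the equivalence. Your remark that $\pair{u_1,\ldots,u_n}$ does not itself trace $N$ is exactly the right justification for the substitution-based form of the statement. The backward direction by splicing the derivations of $\Xi\vdash u_i\apx N$ at the variable-rule leaves is correct in spirit, provided one weakens those subderivations into the contexts arising at the leaf positions and then strengthens away the spent assumption $p\apx x$ at the root, which is harmless since neither $p$ nor $x$ occurs in the final subject or object.

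There is, however, one genuine gap: the forward direction cannot proceed by structural induction on $M$ as stated. In the fixpoint case you reduce $\Xi\vdash w\apx (\fix f {M'})\isub{N}{x}$ to $\Xi\vdash w\apx \fix[m]{f}{M'\isub{N}{x}}$ for some $m>0$, observe that $\fix[m]{f}{M'\isub{N}{x}} = (\fix[m]{f}{M'})\isub{N}{x}$, and then invoke the induction hypothesis on $\fix[m]{f}{M'}$. But $\fix[m]{f}{M'}$ is not a subterm of $\fix f {M'}$ --- it is a strictly larger term --- so a structural IH on $M$ does not license this step. The correct measure is the trace derivation itself: the fixpoint rule of \reffig{TraceTerms} has a single premise whose derivation is strictly smaller, so the derivation-size IH applies. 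With this change the rest of the plan goes through unchanged; in particular, the application case then operates on strictly smaller subderivations of $s\apx M_1\isub{N}{x}$ and each $v_j\apx M_2\isub{N}{x}$, which you merge by flattening the product types as you describe.
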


\begin{figure}
\begin{center}
	\begin{tikzpicture}[every node/.style={circle,inner sep=1pt}]
	\node (M1) at (0,1) {$T\,\Relu$};
	\node (M2) at (6.5,1) {$\Relu(\Relu\,0+1)$};	
	\node (M3) at (10.5,1) {$\ite{\Relu\,0+1}{0}{\Relu\,0+1}$};		
	\node (t1) at (0,0) {$t\!\pair{u_1,u_2}$};
	\node (t2) at (3.25,0) {$\pi_2\!\pair{u_1,u_2}(\pi_1\!\pair{u_1,u_2}0+1)$};	
	\node (t3) at (6.5,0) {$u_2(u_10+1)$};		
	\node (t4) at (10.5,0) {$\pi_2\!\pair{u_10+1,u_10+1}$};			
	\node (M4) at (1,-1) {$\ite{\Relu\,0+1}{0}{\Relu\,0+1}$};
	\node (M5) at (5,-1) {$\ite{1}{0}{\Relu\,0+1}$};	
	\node (M6) at (7.75,-1) {$\Relu\,0+1$};	
	\node (M7) at (10,-1) {$\ite000+1$};
	\node (M8) at (11.6,-1) {$0+1$};		
	\node (M9) at (12.5,-1) {$1$};				
	\node (t5) at (1,-2) {$\pi_2\!\pair{u_10+1,u_10+1}$};
	\node (t6) at (5,-2) {$\pi_2\!\pair{u_10+1,u_10+1}$};	
	\node (t7) at (7.75,-2) {$u_10+1$};		
	\node (t8) at (10,-2) {$\pi_1\!\pair{0,0}+1$};			
	\node (t9) at (11.6,-2) {$0+1$};			
	\node (t10) at (12.5,-2) {$1$};			
	\draw[->] (M1.east) -- (M2.west);
	\draw[->] (M2.east) -- (M3.west);
	\draw[->] (M4.east) -- node[at end, above, font=\footnotesize] {$*$} (M5.west);
	\draw[->] (M5.east) -- (M6.west);		
	\draw[->] (M6.east) -- (M7.west);
	\draw[->] (M7.east) -- (M8.west);
	\draw[->] (M8.east) -- (M9.west);	
	\draw[->] (t1.east) -- (t2.west);
	\draw[->] (t2.east) -- node[at end, above, font=\footnotesize] {$*$}  (t3.west);
	\draw[->] (t3.east) -- (t4.west);
	\path (t5.east) -- (t6.west) node[midway] {$=$};		
	\draw[->] (t6.east) -- (t7.west);
	\draw[->] (t7.east) -- (t8.west);
	\draw[->] (t8.east) -- (t9.west);	
	\draw[->] (t9.east) -- (t10.west);	
	\path (t1) -- (M1) node[midway, rotate=90] {$\apx$};
	\node[rotate=90] () at (6.5,.5) {$\apx$};
	\node[rotate=90] () at (10.5,.5) {$\apx$};
	\node[rotate=90] () at (1,-1.5) {$\apx$};
	\node[rotate=90] () at (5,-1.5) {$\apx$};
	\node[rotate=90] () at (7.75,-1.5) {$\apx$};
	\node[rotate=90] () at (10,-1.5) {$\apx$};
	\node[rotate=90] () at (11.6,-1.5) {$\apx$};
	\node[rotate=90] () at (12.5,-1.5) {$\apx$};
	\end{tikzpicture}
	\end{center}
	\vspace{-1.2cm}
\caption{Tracing the head reduction $T\,\Relu\reds 1$. The term $T$ is $\lambda f.f(f0+1)$, of which $t:=\lambda p.\pi_2 p((\pi_1 p)0+1)$ is a trace; $\Relu$ is given in \eqref{ex:pcfterms}, with traces $u_1:=\lambda x.\pi_1\!\pair{0,0}$, $u_2:=\lambda x.\pi_2\!\pair{x,x}$.}\label{fig:ex_tracing_red}
\end{figure}\todod{Changed $D$ to $T$.}

\begin{definition}[Tracing rewriting steps]
	\label{def:ApxRed_rules}
	Let $\sigma:R\red P$ be a rewriting step of \reffig{reduction}, and $\upsilon$ be a reduction sequence between simple terms. We define $\upsilon\apx\sigma$ depending on $R$.
	\begin{itemize}
		\item If $R=(\lambda x.M)N$: we ask that $\upsilon$ is any reduction of the form
		$$(\lambda p.t\isub{\boldsymbol\pi p}{\seq x})\!\pair{\seq u}\red t\isub{\boldsymbol\pi\!\pair{\seq{u}}\!}{\seq x}\reds t\isub{\seq u}{\seq x}$$
		where $p\apx x\vdash t\isub{\boldsymbol\pi p}{\seq x}\apx M$, with $\isub{\boldsymbol\pi p}{\seq x}$ denoting the substitutions $\isub{\pi_1 p}{x_1}\dots\isub{\pi_n p}{x_n}$ for some $n\in\Nat$, and where $\seq u$ is a sequence $u_1,\ldots,u_n$ of simple terms such that $u_i\apx N$ for all $i$, and in which the redexes $\pr i\!\pair{\seq u}\red u_i$ are reduced in any order.
		
		\item If $R=\pi_i\!\pair{M_1,\dots, M_k}$: we ask that $\upsilon$ is any reduction of the form 
		$$\pi_i\!\pair{t_1,\dots, t_k}\red t_i$$ 
		for $t_1\apx M_1$,\dots, $t_k\apx M_k$.		
		
		\item If $R=\funsa(\seq r)$: we ask that $\upsilon=\sigma$.
		
		\item If $R=\ite{r}{M_1}{M_2}$: we ask that $\upsilon$ is any reduction of the form
		$$\pr i\!\pair{t,t}\red t$$
		with $t\apx M_i$ and $i=1$ if $r\leq 0$, otherwise $i=2$.
		
		\item If $R=\fix f N$: we ask that $\upsilon\apx\sigma'$ where $\sigma'$ is any reduction of the form $$(\lambda f.M)(\lambda x.(\fix[n]{f}{M})x)\red M\isub{\lambda x.(\fix[n]{f}{M})x}{f}$$ for some $n\in\Nat$, as defined in the first case.
	\end{itemize}
\end{definition}

\begin{lemma}[pullback]\label{lemma:apx_expansion}
Let $\sigma:R\red P$ be a rewriting step. For any $w\apx P$, there exist $t\apx R$ and $\xi:t\reds w$ such that $\xi\apx\sigma$. 
\end{lemma}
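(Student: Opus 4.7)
The plan is to proceed by case analysis on the rewriting rule $R\red P$, following the five clauses of \refdef{ApxRed_rules}. In each case the strategy is the same: invert the hypothesis $w\apx P$ to expose the simple building blocks of $w$ (by direct inspection of the rules of \reffig{TraceTerms}, or using \reflemma{apx-substs} when a substitution is involved), rebuild a trace $t\apx R$ by applying the trace rule whose right-hand side is $R$, and finally let $\xi$ be the unique reduction shape prescribed by \refdef{ApxRed_rules}. The pivotal case is $\beta$, to which the fixpoint case will be reduced.

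For $R=(\lambda x.M)N$ with $P=M\isub{N}{x}$, \reflemma{apx-substs} yields a decomposition $w=t'\isub{u_1}{x_1}\dots\isub{u_n}{x_n}$ with $p\apx x\vdash t'\isub{\pi_1 p}{x_1}\dots\isub{\pi_n p}{x_n}\apx M$ and $u_i\apx N$ for each $i$. The simple term $t:=(\lambda p.\,t'\isub{\pi_1 p}{x_1}\dots\isub{\pi_n p}{x_n})\pair{u_1,\ldots,u_n}$ traces $R$ by the abstraction and application rules; the reduction $\xi$ first fires the head $\beta$-step, landing in $t'\isub{\pi_1\pair{\seq u}}{x_1}\dots\isub{\pi_n\pair{\seq u}}{x_n}$, and then fires the $n$ projections $\pi_i\pair{\seq u}\red u_i$ in any order to reach $w$, exactly the pattern demanded by the $\beta$-clause of \refdef{ApxRed_rules}. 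The three remaining primitive cases are one-liners. For $R=\pi_i\pair{M_1,\dots,M_k}$ with $P=M_i$, pick any traces $t_j\apx M_j$ for $j\neq i$ (whose existence is an auxiliary fact provable by induction on terms) and set $t:=\pi_i\pair{t_1,\dots,t_{i-1},w,t_{i+1},\dots,t_k}$, with $\xi$ the one-step projection. For $R=\funsa(\numeral{r_1},\dots,\numeral{r_k})$ the contractum $P$ is a numeral, so \reflemma{apx-programs} forces $w=P$; take $t:=R$ and $\xi:=\sigma$. For $R=\ite{\numeral r}{M_1}{M_2}$ with $P=M_i$ (where $i$ is dictated by the sign of $r$), set $t:=\pi_i\pair{w,w}$, which traces $R$ via the conditional rule, with $\xi$ the single projection step.

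The main obstacle is the fixpoint case $R=\fix f N$ with $P=N\isub{\lambda x.(\fix f N)x}{f}$. The difficulty is that $\fix f N$ is traced only through its finite approximants $\fix[n]{f}{N}$ for $n>0$, so the $\beta$-case construction does not apply to $\sigma$ directly. The key step is a monotonicity principle: starting from $w\apx N\isub{\lambda x.(\fix f N)x}{f}$, there exists $n\geq 0$ such that $w\apx N\isub{\lambda x.(\fix[n]{f}{N})x}{f}$. This is proved by induction on the derivation of $w\apx P$, using that every sub-derivation tracing an occurrence of $\fix f N$ ends with an application of the fixpoint rule at some local $n_i$, and that the finiteness of the simple term $w$ bounds the depth of fixpoint unfolding that appears in the derivation; one then picks $n$ uniformly large enough to accommodate all of the $n_i$. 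Once $n$ is obtained, the $\beta$-case applies to $\sigma':\fix[n+1]{f}{N}=(\lambda f.N)(\lambda x.(\fix[n]{f}{N})x)\red N\isub{\lambda x.(\fix[n]{f}{N})x}{f}$, producing $t\apx \fix[n+1]{f}{N}$ and $\xi:t\reds w$ with $\xi\apx\sigma'$. Since $n+1>0$, the fixpoint rule lifts $t\apx\fix[n+1]{f}{N}$ to $t\apx\fix f N=R$, and $\xi\apx\sigma$ holds by definition.
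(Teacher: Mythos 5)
Your overall strategy — case analysis on the rewriting rule, using Lemma~\ref{lemma:apx-substs} to invert $w\apx P$ in the $\beta$ and fixpoint cases, and reducing the fixpoint case to the $\beta$ case via a finitization step — is the right one, and the $\beta$, conditional and function-symbol cases essentially go through. Two caveats: in the function-symbol case, Lemma~\ref{lemma:apx-programs} does not apply (it concerns terms of type $\FB[n]\tyR$, whereas a numeral has type $\tyR$); the fact $w=P$ follows simply because the only trace rule whose right-hand side can be a numeral is the $k=0$ instance of the function-symbol rule. In the fixpoint case, your monotonicity principle needs more than ``pick $n$ larger than the local $n_i$'': the implicit lemma is that $u\apx\fix[m]{f}{N}$ implies $u\apx\fix[n]{f}{N}$ for all $n\geq m$, which requires its own induction and is not just a matter of taking maxima.

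The genuine gap is in the projection case, where you assert that every term has a trace. This is false: $\Omega_{A\to B}=\fix{g}{g}$ has no trace. Any trace of $\Omega$ must, by the fixpoint rule, trace $\fix[n]{g}{g}=(\lambda g.g)(\lambda x.(\fix[n-1]{g}{g})x)$ for some $n>0$; the application rule then demands a trace $a\apx\lambda g.g$, which by the abstraction and variable rules must be of the form $\lambda p.\pi_j p$ with $p$ of product arity at least $1$; hence the argument tuple of the application cannot be empty, so one needs a trace of $\lambda x.(\fix[n-1]{g}{g})x$, which in turn requires a trace of $\Omega$. No finite simple term can close this regress. Consequently, for a rewriting rule such as $\pi_2^2\pair{\Omega\,0,\,0}\red 0$, taking $w:=0$ gives $w\apx P$ but no $t\apx R$ exists, so your construction of $t:=\pi_i\pair{t_1,\dots,t_{i-1},w,t_{i+1},\dots,t_k}$ is not always available. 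To repair the argument you must either show that such $M_j$ cannot occur in the situations where the lemma is actually invoked (e.g.\ because the redex $R$ is already known to have a trace, as is the case when the lemma is used inside the extrusion argument), or add that hypothesis to the statement.
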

%


We now extend \refdef{ApxRed_rules} to one-step head reductions (as defined in \refsect{Terms}).
\begin{definition}[Tracing head reduction steps]
	\label{def:ApxRed_steps}
	Let $\sigma=(\hCtxt,R,P)$ be a reduction step with $\hCtxt$ a head context, and let $\sigma_0$ denote the rewriting step $R\to P$. If $\xi:t\reds u$ is a reduction sequence on simple terms, we write $\xi\apx\sigma$ whenever one of the following holds:
	\begin{itemize}
		\item there exists a simple context $\mathsf h\apx\hCtxt$ such that $\xi=\ctxt[\mathsf h]{\upsilon}$ with $\upsilon\apx\sigma_0$ in the sense of \refdef{ApxRed_rules};
		\item the hole of $\hCtxt$ is in the guard of a conditional, $\xi$ is the empty sequence and $t=u\apx\hctxt R$.
	\end{itemize}
\end{definition}	

Finally, we extend the relation $\apx$ to reduction sequences by reflexive-transitive closure.
\begin{definition}[Tracing head reduction sequences]
\label{def:ApxRed_sequences}
	Let $\rho$ be a head reduction sequence starting from a term $M$ and let $\xi$ be a reduction sequence starting from a simple term $t$. We write $\xi\apx\rho$ if either $\rho$ and $\xi$ are empty and $t\apx M$, or if $\rho=\sigma_1\cdots\sigma_n$ is of length $n>0$ and $\xi=\xi_1\cdots\xi_n$ such that $\xi_i\apx\sigma_i$ for every $1\leq i\leq n$, according to \refdef{ApxRed_steps}. 
\end{definition}

\todom{added this paragraph}
\reffig{ex_tracing_red} gives an example of tracing the head reduction of the term $T\,\Relu$ discussed above. Notice that the reduct of $t\!\pair{u_1,u_2}$ is an intermediate term not corresponding to any trace. Notice also that all the reductions in the guard of a conditional (such as the first steps in the third line of \reffig{ex_tracing_red}) share the same traces. In general, $\xi\apx\rho$ implies that $\rho$ is a head reduction but not necessary $\xi$, because the first case of \refdef{ApxRed_rules} requires $\xi$ to reduce projection redexes not in the ``head position'' of a simple term. In fact, the reduction of $t\!\pair{u_1,u_2}$ in \reffig{ex_tracing_red} is not head.

%

\begin{definition}[trace relation]
	\label{def:TrApx}
	Let $M$ be a term and $t$ a simple term. We say that $t$ \emph{traces} $M$, in symbols $t\trapx M$, whenever there exists a normalizing reduction $\xi$ starting from $t$ and a normalizing head reduction $\rho$ starting from $M$ such that $\xi\apx\rho$.
\end{definition}
As an example, let us consider the term $\Sid$ of~\eqref{ex:pcfterms}. Let us define the simple terms:
\begin{align}
\label{ex:simple_silly}
 t_{1}&:= \lambda x.\pi_1\!\pair{\pi_1\!\pair{0,0},\pi_1\!\pair{0,0}},
 &
 t_{2}&:= \lambda x.\pi_1\!\pair{\pi_2\!\pair{x,x},\pi_2\!\pair{x,x}},
 &
 t_{3}&:= \lambda x.\pi_2\!\pair{x,x}.
\end{align}
Notice that, for any $i\in\{1, 2,3\}$, we have $t_i\apx \Sid$ (see \reffig{Trace}), therefore also $t_i r\apx \Sid\, r$ for any real number $r$. By contrast, we have that $t_1 r\trapx \Sid\, r$ iff $r= 0$, $t_2 r\trapx \Sid\, r$ iff $r< 0$ and, symmetrically,  $t_3 r\trapx \Sid\, r$ iff $r>0$. This highlights a sharp difference between the relations $\apx$ and $\trapx$: the former is static whereas the latter traces the execution of a term. Indeed, the projections of the $t_i$'s reflect the different choices in the conditionals of $\Sid$. 

\subsection{AD is Sound on Stable Points}\label{sect:stable_points}

Consider a \PCF\ program $x_1^\tyR,\dots,x_n^\tyR\vdash M:\tyR$. One can easily check that for every $\seq r\in\Real^n$, $M\isub{\seq r}{\seq x}$ is normalizing if and only if there exists a simple term $t$ tracing $M\isub{\seq r}{\seq x}$.  However, this term $t$ usually depends on the chosen $\seq r$. In the following definition of stable points, we consider a situation where $t$ can be ``uniformly'' chosen in an open ball around $\seq r$. 

\begin{definition}
	\label{def:Stab}
	We define the set of \emph{stable points} of a program $x_1^{\tyR},\dots,x_n^\tyR\vdash M:\tyR$ as follows:
	$$
	\Stab M :=\left\{ 
	\begin{array}{l@{\;}l}
	\seq r\in\Real^n \st&\exists\,\varepsilon>0, \exists\, x_1^{\tyR},\dots,x_n^\tyR\vdash t:\tyR\text{ s.t. }t\apx M \text{ and}\\
	&
	\forall\,\seq r'\in\Ball\varepsilon{\seq r}\ t\isub{\seq r'}{\seq x}\trapx M\isub{\seq r'}{\seq x}
	\end{array}
	\right\}.
	$$
\end{definition}
Notice that we have restricted the tracing of $M$ to terms $t$ of the same type as $M$, in particular $t$ does not split different occurrences of a free variable $x_i$ of $M$. In fact, these variables are supposed to be replaced with numerals, which are not split by $\apx$. Also observe that, by definition, we have $\Stab{M}\subseteq\dom{M}$, as $t\isub{\seq r}{\seq x}\trapx M\isub{\seq r}{\seq x}$ implies that $M\isub{\seq r}{\seq x}$ has a normal form. Moreover, $\Stab{M}$ is open: it is easy to check that $\Stab{M}=\bigcup_{\seq r\in\Stab{M}}\Ball{\varepsilon_{\seq r}}{\seq r}$, where $\varepsilon_{\seq r}$ is the positive real whose existence is given by the very definition of stability of $\seq r$.

We already argued in the Introduction that $\Stab{\Relu\, x^\tyR} = \Real\setminus\{0\}$. By recalling the above discussion about the simple terms $t_1$, $t_2$ and $t_3$ in \eqref{ex:simple_silly}, we may infer that also $\Stab{\Sid\, x^\tyR}= \Real\setminus\{0\}$, in fact $0$ is the border where one has to swap between $t_1 r$ and either $t_2 r$ or $t_3 r$ in tracing $\Sid\, r$. Similarly, but with more involved simple terms, one can check that $\Stab{\Floor\, x^\tyR}=\Real\setminus\mathbb Z$ with $\Floor$ given in~\eqref{ex:pcfterms}.  
As a last example, let us consider the term $\EqProj$ defined in~\eqref{ex:eqproj} and the simple terms
\begin{align*}
 t_{1}&:= \lambda x.\lambda y.\pi_1\pair{\pi_1\pair{x,x},\pi_1\pair{x,x}},
 &
 t_{2}&:= \lambda x.\lambda y.\pi_1\pair{\pi_2\pair{y,y},\pi_2\pair{y,y}},
 &
 t_3&:=\lambda x.\lambda y.\pi_2\pair{y,y}.
\end{align*}
These are all such that $t_i\apx\EqProj$. However, $t_1rq\trapx\EqProj\, r q$ iff $r=q$, $t_2rq\trapx\EqProj\, r q$ iff $r<q$ and $t_3 rq \trapx\EqProj\, r q$ iff $r>q$. So the diagonal splits the plane $\Real^2$ in two open sets where either $t_2$ or $t_3$ uniformly traces the execution of $\EqProj$, whereas the execution on the diagonal is traced by $t_1$. But the diagonal contains no open set, therefore $\Stab{\EqProj\, x^\tyR y^\tyR}=\Real^2\setminus\{(r,r)\;;\;r\in\Real\}$. Notice, finally, that $\EqProj\, x^\tyR x^\tyR$ may be traced everywhere by $t_1 x^\tyR x^\tyR$, hence $\Stab{\EqProj\, x^\tyR x^\tyR}=\Real$. 

\begin{figure}
	\begin{minipage}{0.9\textwidth}
	$$
	\infer{\ctxthole\expd\ctxthole}{}
	\qquad
	\infer{x^A\expd x^{\FB A}}{}
	\qquad
	\infer{\lambda x.M\expd \lambda x.M'}{M\expd M'}
	\qquad
	\infer{MN\expd M'N'}{M\expd M' & N\expd N'}
	$$

	$$
	\infer{\pair{M_1,\dots,M_k}\expd\pair{M_1',\dots,M_k'}}{M_1\expd M_1' &\dots,&M_k\expd M_k'}
	\qquad
	\infer{\pi_i^k M\expd\pi_i^k M'}{M\expd M'}
	$$

	$$
	\infer{\funsa(M_1,\dots, M_k) \expd (\lambda z_1^{\FB \tyR}\ldots\lambda z_k^{\FB \tyR}.\pair{\funsa(\pi_1z_1,\dots,\pi_1z_k),t}) M'_1\cdots M'_k}
	{z_1^{\FB \tyR}\dots z_k^{\FB \tyR}\vdash t:\tyR^{(\perp_n)} \text{ simple normal form},&
	M_1\expd M_1',&\ldots, &M_k\expd M_k'
	}
	$$

	$$
	\infer{\ite P M N \expd \ite {\pi_1 P'} {M'} {N'}}{P\expd P'& M\expd M' & N\expd N'}
	\qquad
	\infer{\fix f M\expd\fix f {M'}}{M\expd M'}
	$$
	\end{minipage}
	\caption{The expansion relation. In the rule on 
	the third line, if $k=0$ then the right-hand term in the conclusion is just 
	$\pair{\phi,t}$.
	}
	\label{fig:Expd}
\end{figure}

The tracing relation $\trapx$ is defined over programs. However, in order to prove soundness (\refth{Sound}), we must move from a program $M$ to its transformation $\FB M$, this latter having a more complex type than $M$. The difficulty behind the proof of \refth{Sound} is then to deduce from $t\trapx M$ a link between $\FB t$ and $\FB M$. In order to do that, we define a further relation $\expd$ (\reffig{Expd}) catching an invariant between $M$ and $\FB M$ (as well as between $t$ and $\FB t$) stable under the evaluation of the two terms. Then the extrusion lemma (\reflemma{extrusion} and its iterated version \reflemma{extrusion_iterate}) will use $\expd$ for deducing the needed link between $\FB t$ and $\FB M$ from the hypothesis $t\trapx M$.

\begin{definition}[expansion]
The  relation $\expd$ on terms and contexts of \PCF{} is defined in \reffig{Expd}.
\end{definition}

As mentioned above, expansion is used to establish that $M$ and $\FB M$ ``behave similarly''. Such a link emerges from two of the main properties of $\expd$, namely that $M\expd\FB M$ holds for every $M$ (\reflemma{expd_ad}), and that it is a simulation (if $M\expd M'$, then $M'$ simulates $M$, \reflemma{expd-reduction}). These justify the definition in the case $M=\phi(M_1,\dots,M_k)$: it mimics the definition of $\FB{M}$ in the first component of the product, whereas the second component is chosen as an arbitrary closed simple normal form. The first gives us stability under reduction, in particular in the case of a conditional redex, while the second component cannot be asked to be linked with the partial derivatives of $\phi$, because $\phi(M_1,\dots,M_k)$ eventually reduces to a numeral, which has zero derivative.


\begin{lemma}
	\label{lemma:expd_ad}
	For every program $x_1^{\tyR},\dots,x_n^{\tyR}\vdash M:\tyR$, $\seq r\in\Real^n$ and sequence $\seq u=u_1,\ldots,u_n$ of simple closed normal forms of suitable type, we have $M\isub{\seq r}{\seq x}\expd \FB M\isub{\pair{\seq r,\seq u}\!}{\seq x}$, where by $\{\pair{\seq r,\seq u}\!/\seq x\}$ we mean $\{\pair{r_1,u_1}/x_1\}\cdots\{\pair{r_n,u_n}/x_n\}$.
	\end{lemma}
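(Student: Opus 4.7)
My plan is to factor the proof through two auxiliary results: a substitution lemma for $\expd$, and an ``identity'' lemma stating that $N\expd \FB{N}$ for every \PCF\ term $N$ (not just programs).

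The substitution lemma says: if $N_1\expd N_1'$ and $N_2\expd N_2'$, then $N_1\isub{N_2}{x}\expd N_1'\isub{N_2'}{x}$, where the variable $x$ bears type $A$ on the left and $\FB{A}$ on the right. It is proved by induction on the derivation of $N_1\expd N_1'$. The variable rule either matches $x$ -- in which case the conclusion is exactly the hypothesis $N_2\expd N_2'$ -- or leaves the variable untouched. All congruence rules (abstraction, application, pair, projection, conditional, fixpoint) commute with substitution via the inductive hypothesis after standard $\alpha$-renaming. The function symbol rule is unproblematic because the auxiliary simple normal form $t$ in its conclusion may be chosen with free variables among $z_1,\ldots,z_k$ only, so substitution does not reach inside it.

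The identity lemma states $N\expd\FB{N}$ for every term $N$ and is proved by structural induction. Variables are handled by the rule $x^A\expd x^{\FB{A}}$. Abstractions, applications, pairs, projections and fixpoints unfold via the corresponding congruence rules of \reffig{Expd} applied to the IH, matching the definition of $\FB{-}$ in \refsubfig{fb-ho} term by term. The conditional case matches as well, since $\FB{\ite P M N}=\ite{\pi_1\FB{P}}{\FB{M}}{\FB{N}}$ is exactly the shape imposed by the corresponding rule of $\expd$. The only case requiring real inspection is that of a function symbol $\funsa(M_1,\ldots,M_k)$: we apply the $\expd$ rule for $\funsa$ with $M_i':=\FB{M_i}$ (valid by IH) and take $t$ to be the specific term appearing in the second component of $\FB{\funsa(\seq M)}$ in \refsubfig{fb-ground}. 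One must verify that this $t$ is simple, that it has type $\tyR^{(\perp_n)}$ in the context $z_1^{\FB{\tyR}},\ldots,z_k^{\FB{\tyR}}$, and that it is in normal form. The last point is the subtle one and is the main obstacle of the argument: it holds because every $\partial_i\funsa$ is applied to $\pi_1 z_j$'s, which are projections of variables rather than numerals, so no $\funsa$-reduction can fire; and every $\pi_j$ is applied to a variable (never to a pair), so no projection-reduction fires either. A similar check works for reverse mode, where $t=\lambda a.\sum_{i=1}^k\pi_2 z_i(\partial_i\funsa(\pi_1\seq z)\cdot a)$.

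Given the two lemmas, the statement of the lemma is immediate. By the function symbol rule with $k=0$, for every real number $r_i$ we have $\numeral{r_i}\expd\pair{\numeral{r_i},u_i}$, since by hypothesis each $u_i$ is a closed simple normal form of type $\tyR^{(\perp_n)}$. Combining $M\expd\FB{M}$ with $n$ successive applications of the substitution lemma, using these judgments, yields $M\isub{\seq r}{\seq x}\expd \FB{M}\isub{\pair{\seq r,\seq u}\!}{\seq x}$, as required.
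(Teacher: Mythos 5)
Your proof is correct and follows essentially the same route as the paper: establish $N\expd\FB{N}$ by structural induction and prove a substitution lemma for $\expd$ (which the paper also relies on, e.g.\ in the $\beta$-reduction case of \reflemma{expd-reduction}), then instantiate with $\numeral{r_i}\expd\pair{r_i,u_i}$ obtained from the nullary-function-symbol rule. Your careful check that the second component of $\FB{\funsa(\seq M)}$ is a simple normal form in $z_1,\ldots,z_k$ is exactly the point that makes the function-symbol case of the identity lemma go through, and the rest of the argument is routine.
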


\begin{lemma}\label{lemma:expd-reduction}
Let $M\expd M'$ and $M\red N$, then there exists $N'$ such that $M'\reds N'$ and $N\expd N'$. 
\end{lemma}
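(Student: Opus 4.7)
The plan is to induct on the derivation of $M\expd M'$, doing a case analysis on where the fired redex of $M\red N$ sits, and to use a single preliminary result: a substitution lemma for $\expd$ saying that $M\expd M'$ and $N\expd N'$ imply $M\isub{N}{x}\expd M'\isub{N'}{x}$. This substitution lemma is proved by routine induction on $M\expd M'$, accommodating the typing mismatch introduced by the variable rule $x^A\expd x^{\FB A}$. When the fired redex is strictly inside a subterm of $M$, the structural rules of $\expd$ let us invoke the induction hypothesis and reassemble the ambient expansion unchanged; in particular, in the $\phi$-case the sub-reduction takes place inside the same $M'_i$ and we reuse the same dummy simple normal form $t$.

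The root-level cases of beta and projection are immediate: the shape of $\expd$ forces $M'$ to have a matching outermost form, so a single step on each side, closed by the substitution lemma, finishes the argument. Fixpoint unfolding $\fix{f}{P}\red P\isub{\lambda x.(\fix{f}{P})x}{f}$ is similar, using that $\lambda x.(\fix{f}{P})x \expd \lambda x.(\fix{f}{P'})x$ by the structural rules and the substitution lemma again. For a conditional root-redex $\ite{r}{M_1}{M_2}\red M_i$, observe that $r$ is a numeral, hence a nullary function symbol; the nullary instance of the $\phi$-rule gives $r\expd\pair{r,t}$ for some simple closed normal form $t$, whence on the $M'$ side $\pi_1 P'\red r$, after which the conditional fires to the branch $i$ determined by $r$, matching $M$.

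The main technical case is $R=\phi(r_1,\dots,r_k)\red s$ with $s:=\sem[]{\phi}(\vec r)$. Since each argument $r_i$ is a numeral, the nullary $\phi$-rule forces $M'_i = \pair{r_i,t_i}$ for some simple closed normal form $t_i$, so
$$M' = (\lambda\vec z.\pair{\phi(\pi_1 z_1,\dots,\pi_1 z_k),t})\pair{r_1,t_1}\cdots\pair{r_k,t_k}.$$
Firing the head beta-redexes, the $\pi_1$ projections, and finally the primitive-function redex, we reach $\pair{s,t'}$, where $t':=t\isub{\pair{r_1,t_1}}{z_1}\cdots\isub{\pair{r_k,t_k}}{z_k}$. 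Now $t'$ is still a simple term (conditionals and fixpoints are absent from $t$ and from the pairs substituted), but it need not be normal, since the substitution can create new projection redexes inside $t$. By strong normalization of simple terms (\refprop{sn}) we extend the reduction to $\pair{s,t''}$ where $t''$ is the simple closed normal form of $t'$, and the nullary $\phi$-rule yields $s\expd\pair{s,t''}$, as required.

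The main obstacle is precisely the bookkeeping in this last case: we must reach an $N'$ sitting in the exact shape permitted by an $\expd$-rule, which is possible only because the auxiliary term $t'$ created by the $\phi$-reduction is guaranteed to be simple, and hence strongly normalizing via \refprop{sn}. The rest of the argument is a mechanical traversal of the expansion rules.
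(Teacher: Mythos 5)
Your proof is correct and follows essentially the same route as the paper's proof sketch. Both invoke a substitution lemma for $\expd$ to dispatch the $\beta$-case, and both hinge on the same key observation in the $\phi$-case: the simple term produced on the $M'$ side by firing the head redexes need not be a normal form, but it is strongly normalizing by Proposition~\ref{prop:sn}, so one may reduce it to a closed simple normal form and invoke the nullary instance of the $\phi$-rule to close the $\expd$-diagram (the only cosmetic difference is that you phrase the induction as being on the derivation of $M\expd M'$, while the paper inducts on the reduction context $\Ctxt$ of the fired redex).
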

\begin{proof}[Proof Sketch]
Let $(\Ctxt,R,P)$ be the reduction step $M\red N$. The proof is an induction on $\Ctxt$. The only non-trivial part is the base of the induction, \ie\ $\Ctxt=\ctxthole$, in which the reasoning splits following \reffig{reduction}: the case of a $\beta$-reduction is a consequence of a substitution lemma on $\expd$. 
If $R=\phi(\seq r)$, then $M'\reds \pair{\phi(\pi_1\seq L'), t\isub{\seq L'}{\seq z}}$ for some $\seq L'$ of the same length as $\seq r$ such that $r_i\expd L'_i$ for all $i$. Notice that $r_i\expd L'_i$ implies that $\pi_1L'_i\red r_i$ as well as $L'_i$ is a simple closed normal form, so in particular $t\isub{\seq L'}{\seq z}$ is normalizable by \refprop{sn}. We therefore have:  $\pair{\phi(\pi_1\seq L'), t\isub{\seq L'}{\seq z}}\reds\pair{\sem[]\phi\!(\seq r),t'}$ with $t'$ a simple closed normal form, and we conclude by taking $N':=\pair{\sem[]\phi\!(\seq r),t'}$.  

The cases of the other redexes (branching, products and fixpoints) are immediate. 
\end{proof}

\begin{remark}\label{rk:cbv}
A variant of Lemma~\ref{lemma:expd-reduction} can be used to prove that $\dom[\mathcal S] M \subseteq \dom[\mathcal S]{\FBgrad M}$, for some reduction strategy $\mathcal S$ (see discussion after Proposition~\ref{prop:reduction_independence}). For example, if $\mathcal S$ is the call-by-value strategy, then one can prove the statement of Lemma~\ref{lemma:expd-reduction} by replacing $\red$ and $\reds$ with $\red[\mathcal S]$ and $\reds[\mathcal S]$ (in the case of a $R=\phi(\seq r)$ redex, one should notice that the terms $\seq L'$ are all values, and one should replace ``normalizable'' with ``$\mathcal S$-normalizable'' and ``simple closed normal form with ``simple closed $\mathcal S$-nf''). Then, consider a program $M$ of arity $1$ and suppose $\seq r\in \dom[\mathcal S] M$, i.e.~ $M\isub{\seq r}{\seq x}\reds[\mathcal S]q$, for $q$ a numeral. 
We have by Lemma~\ref{lemma:expd_ad} and this variant of Lemma~\ref{lemma:expd-reduction} that $\FB{M}\isub{\pair{\seq r,\seq u}\!}{\seq x}\reds[\mathcal S] N$ with $q\expd N$ and each $u_i$ either the normal form of $\iota_{i}^n1$ or $\iota_{i}^n$, depending whether $\FBSymb$ is $\FSymb$ or $\BSymb$ (recall Equations~\eqref{eq:grad_f} and~\eqref{eq:grad_b}). By inspecting Fig.~\ref{fig:Expd}, we can deduce $N=\pair{q,t}$ for some closed simple $\mathcal S$-nf $t$. This means $\Bgrad M\reds[\mathcal S](\pi_2\pair{q,t})1\red[\mathcal S] t1$, this latter evaluating to a normal form because it is a simple term (Proposition~\ref{prop:sn}). We conclude $\seq r\in\dom[\mathcal S] \Bgrad M$. The case of $\Fgrad M$ is simpler. 
\end{remark}

%

\begin{lemma}[extrusion]
	\label{lemma:extrusion}
	Let $M\expd M'$, $t\expd t'$, $t'\apx M'$. Let $\sigma:M\red M_1$ be a head reduction step and moreover let $\xi:t\reds t_1$ be such that $\xi\apx\sigma$ (so in particular $t\apx M$ and $t_1\apx M_1$). Then there exist $M'_1,t'_1$ such that the following relations hold:
	\begin{center}
	\pgfmathsetmacro{\tx}{0.7}
	\pgfmathsetmacro{\ty}{0.8}
	\begin{tikzpicture}[every node/.style={circle,inner sep=1pt}]
	\node (t) at (0,0) {$t$};
	\node (M) at (0,1.2) {$M$};
	\node (t1) at (3,0) {$t_1$};
	\node[inner sep=0pt] (M1) at (3,1.2) {$M_1$};
	\draw[->] (t.east) -- node (xi) [ fill=white, inner sep=0pt, anchor=center, pos=0.5, font=\scriptsize] {$\xi$} node[at end, above, font=\footnotesize] {$*$} (t1.west);
	\draw[->] (M.east) -- node (sigma) [ fill=white, inner sep=0pt, anchor=center, pos=0.5, font=\scriptsize] {$\sigma$}  (M1.west);
	\draw[densely dotted] (sigma.south) -- node [fill=white, inner sep=0pt, anchor=center, pos=0.5, font=\scriptsize] {$\apx$} (xi.north);
	\node (t') at ($(t)+(\tx,\ty)$) {$t'$};
	\node (M') at ($(M)+(\tx,\ty)$) {$M'$};
	\node (t1') at ($(t1)+(\tx,\ty)$) {$t_1'$};
	\node[inner sep=0pt] (M1') at ($(M1)+(\tx,\ty)$) {$M_1'$};
	\draw[->] (t'.east) --  node[at end, above, font=\footnotesize] {$*$} (t1'.west);
	\draw[->] (M'.east) -- node[at end, above, font=\footnotesize] {$*$}   (M1'.west);
	\draw[densely dotted] (M'.south) -- node [fill=white, inner sep=0pt, anchor=center, pos=0.5, font=\scriptsize] {$\apx$} (t'.north);
	\draw[densely dotted] (M1'.south) -- node [fill=white, inner sep=0pt, anchor=center, pos=0.5, font=\scriptsize] {$\apx$} (t1'.north);
	\draw[densely dotted] (t.north east) -- node [fill=white, inner sep=0pt, anchor=center, pos=0.5, font=\scriptsize] {$\expd$} (t'.south west);
	\draw[densely dotted] (t1.north east) -- node [fill=white, inner sep=0pt, anchor=center, pos=0.5, font=\scriptsize] {$\expd$} (t1'.south west);
	\draw[densely dotted] (M.north east) -- node [fill=white, inner sep=0pt, anchor=center, pos=0.5, font=\scriptsize] {$\expd$} (M'.south west);
	\draw[densely dotted] (M1.north east) -- node [fill=white, inner sep=0pt, anchor=center, pos=0.5, font=\scriptsize] {$\expd$} (M1'.south west);
	\end{tikzpicture}
	\end{center}
%
%
\end{lemma}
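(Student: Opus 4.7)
The plan is to prove the lemma by induction on the head reduction step $\sigma = (\hCtxt, R, P)$, following the case analysis of \refdef{ApxRed_steps}. A standing preliminary is a substitution lemma for $\expd$ (proved by structural induction, mirroring \reflemma{apx-substitution}.\ref{lemma:apx-substitution-sub}); together with \reflemma{apx-substs} and \reflemma{expd-reduction}, it gives the toolkit needed below.

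First I dispense with the case where the hole of $\hCtxt$ sits in the guard of a conditional. By \refdef{ApxRed_steps} we then have $\xi$ empty and $t_1=t$, so I set $t'_1:=t'$ and produce $M'_1$ from \reflemma{expd-reduction} applied to $\sigma$, yielding $M'\reds M'_1$ and $M_1\expd M'_1$. Inspection of the conditional rule in \reffig{Expd} shows that every guard $P$ in $M$ corresponds to a subterm $\pi_1 P'$ in $M'$; hence the reductions produced by \reflemma{expd-reduction} take place inside a guard on the primed side as well. Because the conditional rule in \reffig{Trace} does not constrain the guard, $t'=t'_1\apx M'_1$ is immediate, and the three required relations hold.

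In the other case, $\xi = \ctxt[\mathsf h]{\upsilon}$ with $\mathsf h\apx \hCtxt$ and $\upsilon\apx\sigma_0$ where $\sigma_0:R\red P$. A context-splitting argument on both $\apx$ and $\expd$ (routine inductions on the shape of $\hCtxt$) reduces the problem to the case $\hCtxt=\ctxthole$; the general case is then recovered by plugging the obtained primed reductions into the primed context associated to $M\expd M'$. For $\hCtxt=\ctxthole$ I split on the shape of $R$ following \refdef{ApxRed_rules}. The conditional, projection, and fixpoint redexes are handled directly: the decomposition of $M\expd M'$ exposes exactly the primed redex required (with an extra $\pi_1\pair{\cdot,\cdot}$ reduction to be fired first in the conditional case, since a numeral $r$ expands to $\pair{r,s}$), and the dual decompositions of $t\expd t'$ and $t'\apx M'$ via \reffig{Expd} and \reffig{Trace} yield the matching reduction on $t'$ and the invariants at the target.

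The two delicate sub-cases are the $\beta$-redex and the function-symbol redex. For $R=(\lambda x.M_0)N$, the $\expd$-decomposition forces $M'=(\lambda x.M'_0)N'$ with $M_0\expd M'_0$ and $N\expd N'$, while \reflemma{apx-substs} pinpoints the shape of $t$ and $t'$ as $(\lambda p.s\isub{\boldsymbol\pi p}{\seq x})\!\pair{\seq u}$ and $(\lambda p.s'\isub{\boldsymbol\pi p}{\seq x})\!\pair{\seq{u'}}$ respectively, with $s\expd s'$, $u_i\expd u'_i$, the relevant $\apx$ relations, and the sequence $\upsilon$ consisting of one $\beta$ followed by the projection redexes. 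Firing the corresponding reductions on the primed side and invoking the substitution lemma for $\expd$ and \reflemma{apx-substitution} closes the case. For $R=\funsa(\seq r)$, the expansion forces $M'\reds\pair{\funsa(\seq r),s\isub{\pair{\seq r,\seq u}}{\seq z}}$ for some closed simple normal form $s\isub{\pair{\seq r,\seq u}}{\seq z}$; by \refprop{sn} this further reduces to $\pair{\sem[]{\funsa}(\seq r),s^\star}$ for some closed simple normal form $s^\star$, which I take as $M'_1$. Since $\sem[]{\funsa}(\seq r)$ is a numeral, the $k=0$ instance of the function-symbol rule of \reffig{Expd} gives $M_1\expd M'_1$. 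A parallel computation on the trace side, using the clause $\upsilon=\sigma$ of \refdef{ApxRed_rules} for function symbols, produces $t'_1$ and the final $\apx$ relation.

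The main obstacle is precisely this function-symbol case together with the bookkeeping required by the $\beta$-case: the primed reduction is longer than the unprimed one because of the wrapping introduced by $\expd$ (and, for function symbols, of the auxiliary simple normal form that $\expd$ pairs alongside), and the three relations $\expd$, $\apx$, and the reduction diagrams must all be kept in lockstep through this compound reduction. Once the substitution lemma for $\expd$ and the matching of the auxiliary reductions introduced by the wrapping are in place, the remaining checks are routine inspections of \reffig{Expd} and \reffig{Trace}.
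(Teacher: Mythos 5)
Your proposal is correct and follows essentially the same line of reasoning as the paper's proof: induction on the head context $\hCtxt$, with base case $\hCtxt=\ctxthole$ split by rewriting rule, using a substitution lemma for $\expd$, \reflemma{apx-substs}, and \reflemma{expd-reduction} as the auxiliary tools. The only (inessential) organizational difference is that you treat the conditional-guard case up front rather than as the one subtle clause of the context induction; the justification you give for $t'\apx M'_1$ there is exactly the observation the paper makes when it peels off the layer $\hCtxt=\ite{\overline\hCtxt}{N_1}{N_2}$ and sets $M'_1=\ite{\pi_1 L}{N'_1}{N'_2}$.
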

\begin{proof}[Proof Sketch]
By \refdef{ApxRed_steps}, $\sigma=\hctxt{\sigma_0}$ for some head context $\hCtxt$ and reduction step $\sigma_0:R\red P$. The proof is by induction on $\hCtxt$. 

The case $\hCtxt=\ctxthole$ splits following \reffig{reduction}. For example, let  $\sigma$ be $M = (\lambda x.L)N\red L\isub{N}{x}=M_1$, so that $\xi$ is the reduction $ t=(\lambda p.w\isub{\boldsymbol\pi p}{\seq x})\pair{\seq u}\reds w\isub{\seq u}{\seq x}=t_1$. Then $M'=(\lambda x.L')N'$ with $L\expd L'$ and $N\expd N'$, and $t'=(\lambda p.\overline w')\pair{\seq u'}$ with $w\isub{\boldsymbol\pi p}{\seq x}\expd\overline w'$ and $\pair{\seq u}\expd \pair{\seq u'}$. Moreover, since $t'\apx M'$, by \reflemma{apx-substs}, we have $\overline w' = w'\isub{\boldsymbol\pi p}{\seq x}$, with $w'\isub{\boldsymbol\pi p}{\seq x}\apx L'$, $p$ not free in $w'$, and $u_i'\apx N'$. Moreover, by induction on $w$, one can infer from $w\isub{\boldsymbol\pi p}{\seq x}\expd\overline w'$ that actually $w\expd w'$. 

Of the induction cases, the only subtle one is $\hCtxt=\ite{\overline\hCtxt}{N_1}{N_2}$. Under this hypothesis, the reduction $\xi$ is empty and $t_1=t=\pi_i\!\pair{u,u}$ with $u\apx N_i$ for some $i\in\{1,2\}$, as well as $M'=\ite{\pi_1\overline{M}'}{N_1'}{N_2'}$ with $\ctxt[\overline\hCtxt]{R}\expd \overline M'$, $N_1\expd N'_1$, $N_2\expd N'_2$ and $t'=\pi_i\!\pair{u',u'}$ with $u\expd u'$ and $u'\apx N_i'$ (notice that the index $i$ of the projection is the same in $t$ and $t'$ because $t\expd t'$). By Lemma~\ref{lemma:expd-reduction}, $\overline M'\reds L$ such that $\ctxt[\overline\hCtxt]{P}\expd L$. We can then conclude by setting $M'_1=\ite{\pi_1L}{N_1'}{N_2'}$ and $t'_1=t'$. 

All of the remaining cases follow the same pattern. 
\end{proof}

\begin{lemma}[extrusion to normal form]
	\label{lemma:extrusion_iterate}
	Let $M\expd M'$, $t\expd t'$, $t\trapx M$ and $t'\apx M'$, for $t$ and $M$ closed terms both of type $\tyR^n$, for some $n\geq 0$. Then, $t'\reds t''$ and $M'\reds M''$ with $t''$ and $M''$ normal such that $t''\apx M''$. 
\end{lemma}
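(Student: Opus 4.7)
The plan is to iterate the single-step extrusion lemma (\reflemma{extrusion}) along the reductions witnessing $t\trapx M$, and then argue that the resulting reducts on the primed side are automatically in normal form.

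First I would unpack $t\trapx M$ using \refdef{TrApx}: this yields a normalizing head reduction $\rho:M\reds M_k$ and a normalizing reduction $\xi:t\reds t_k$ with $\xi\apx\rho$, where both $M_k$ and $t_k$ are normal. By \refdef{ApxRed_sequences} one may decompose $\rho=\sigma_1\cdots\sigma_k$ into its head reduction steps and $\xi=\xi_1\cdots\xi_k$ accordingly, with $\xi_i\apx\sigma_i$ for every $i$; writing $M_i$ and $t_i$ for the corresponding intermediate sources gives $t_i\apx M_i$ at every stage.

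Next, setting $M'_0:=M'$ and $t'_0:=t'$, I would prove by induction on $i$ that one can build reductions $M'_i\reds M'_{i+1}$ and $t'_i\reds t'_{i+1}$ maintaining the invariants $M_{i+1}\expd M'_{i+1}$, $t_{i+1}\expd t'_{i+1}$, and $t'_{i+1}\apx M'_{i+1}$. The inductive step is exactly an application of \reflemma{extrusion} to $\sigma_{i+1}$ and $\xi_{i+1}$, whose hypotheses are precisely the invariants in force at stage $i$ (the base case $i=0$ is given by the assumptions of the present lemma). After $k$ iterations, composing the intermediate reductions yields $M'\reds M'_k$ and $t'\reds t'_k$ with $M_k\expd M'_k$, $t_k\expd t'_k$, and $t'_k\apx M'_k$.

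The only delicate point is to verify that $M'_k$ and $t'_k$ are themselves normal, so that one may set $M'':=M'_k$ and $t'':=t'_k$; this is not immediate from \reflemma{extrusion}, which only tracks $\expd$ and $\apx$ along reductions. To close the gap I plan to exploit the rigidity of $\expd$ at ground type. Since $M$ and $t$ are closed of type $\tyR^n$, subject reduction (\refprop{subject_reduction}) forces $M_k$ and $t_k$ to be tuples of numerals $\pair{r_1,\ldots,r_n}$. Inspecting the rules of \reffig{Expd}, the only way a tuple of numerals can be related via $\expd$ to a term $N$ is by combining the pair rule with the $k=0$ instance of the function-symbol rule, which imposes $N=\pair{\pair{r_1,s_1},\ldots,\pair{r_n,s_n}}$ with each $s_i$ a closed simple normal form; such a term is manifestly normal. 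Applying this observation to both $M_k\expd M'_k$ and $t_k\expd t'_k$ concludes the proof, the remaining relation $t'_k\apx M'_k$ being carried along by the induction above.
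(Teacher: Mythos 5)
Your plan follows the paper's own proof exactly: unwind $t\trapx M$ into paired normalizing reductions $\xi\apx\rho$ and then perform induction on $\rho$, applying the single-step extrusion lemma (\reflemma{extrusion}) at each stage to transport the invariants $M_i\expd M'_i$, $t_i\expd t'_i$, $t'_i\apx M'_i$. You also correctly identify the genuine subtlety that the paper's one-line sketch leaves implicit — namely that \reflemma{extrusion} by itself does not certify that the final $M'_k$ and $t'_k$ are normal — and you close it by a shape analysis of $\expd$ at ground tuple type.

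One small imprecision worth flagging in step (6): subject reduction only gives you that $M_k$ and $t_k$ have type $\tyR^n$; the further step ``closed normal form of type $\tyR^n$ implies tuple of numerals'' is a separate classification of normal forms, not a consequence of \refprop{subject_reduction}. That classification holds for simple $t_k$ by \refprop{sn} together with the usual syntactic analysis, and it holds for $M_k$ provided the head reduction $\rho$ does not get stuck on a function symbol $\funsa(r_1,\ldots,r_k)$ where $\sem[]\funsa$ is undefined (recall that the primitives are allowed to be partial). In the hypotheses of the lemma $M$ normalizes, but nothing a priori rules out termination on such a stuck term; if it did, the $\expd$-expansion of $\funsa(\seq r)$ is a $\beta$-redex and your normality argument would need a further reduction-and-$\apx$-preservation step. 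This is a corner case that the paper itself does not discuss in the sketch, and it is moot in the place where the lemma is used (there $\seq r\in\dom M$ forces convergence to a tuple of numerals), so it does not undermine the proof, but the phrase ``subject reduction forces'' overstates what that proposition gives you.
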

\begin{proof}[Proof Sketch]
By \refdef{TrApx}, there exist a normalizing reduction $\xi$ starting from $t$ and a normalizing reduction $\rho$ starting from $M$, such that $\xi\apx\rho$. The proof is by induction on $\rho$. 
\end{proof}
%
%

\begin{theorem}[Soundness]
	\label{th:Sound}
	 For every program $\Gamma\vdash M:\tyR$ and every $\seq r\in\Stab{M}\cap\diffdom{M}$, we have:
	$$\FBgrad M(\seq r) \reds \nabla(\sem{M})(\seq r).$$
\end{theorem}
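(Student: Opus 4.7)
Fix $\seq r\in\Stab{M}\cap\diffdom{M}$. By \refdef{Stab}, there exist $\varepsilon>0$ and a simple program $x_1^\tyR,\dots,x_n^\tyR\vdash t:\tyR$ with $t\apx M$ such that $t\isub{\seq r'}{\seq x}\trapx M\isub{\seq r'}{\seq x}$ for every $\seq r'\in\Ball{\varepsilon}{\seq r}$. The first observation of the plan is that tracing enforces semantic agreement on this ball: $t\isub{\seq r'}{\seq x}\trapx M\isub{\seq r'}{\seq x}$ provides normalizing reductions whose targets are closed normal forms of type $\tyR$, \ie, numerals, and \reflemma{apx-programs} forces them to coincide, so $\sem{t}\!(\seq r')=\sem{M}\!(\seq r')$ on all of $\Ball\varepsilon{\seq r}$. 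Since $\sem{M}$ is differentiable at $\seq r$ and $\sem{t}$ coincides with $\sem{M}$ on an open neighborhood of $\seq r$, $\sem{t}$ too is differentiable at $\seq r$ with the same gradient.

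\refprop{SimpleSound} applied to $t$ then yields $\FBgrad{t}(\seq r)\reds\nabla\sem{t}\!(\seq r)=\nabla\sem{M}\!(\seq r)$, so the task reduces to showing that $\FBgrad{M}(\seq r)$ and $\FBgrad{t}(\seq r)$ reach the same normal form. Let $u_i$ denote the closed simple normal form of $\iota_i^n 1$ in forward mode, or of $\iota_i^n$ in reverse mode. Unfolding \eqref{eq:grad_f} and \eqref{eq:grad_b}, this amounts to showing that $\FB{M}\isub{\pair{r_1,u_1}\!}{x_1}\cdots\isub{\pair{r_n,u_n}\!}{x_n}$ and $\FB{t}\isub{\pair{r_1,u_1}\!}{x_1}\cdots\isub{\pair{r_n,u_n}\!}{x_n}$ reduce to a common normal form, since projecting onto the second component (and, in reverse mode, applying the result to $1$) will then give the conclusion.

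This common normal form is produced by the extrusion machinery. By \reflemma{expd_ad}, both $M\isub{\seq r}{\seq x}\expd\FB{M}\isub{\pair{\seq r,\seq u}\!}{\seq x}$ and $t\isub{\seq r}{\seq x}\expd\FB{t}\isub{\pair{\seq r,\seq u}\!}{\seq x}$ hold. Moreover, \reflemma{apx-substitution} first lifts $t\apx M$ to $\FB{t}\apx\FB{M}$ and then preserves this relation when the closed simple normal forms $\pair{r_i,u_i}$ are substituted for the $x_i$, yielding $\FB{t}\isub{\pair{\seq r,\seq u}\!}{\seq x}\apx\FB{M}\isub{\pair{\seq r,\seq u}\!}{\seq x}$. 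Combined with the tracing $t\isub{\seq r}{\seq x}\trapx M\isub{\seq r}{\seq x}$ guaranteed by stability, these are precisely the hypotheses of \reflemma{extrusion_iterate}, which produces normal forms $t''$ and $M''$ of $\FB{t}\isub{\pair{\seq r,\seq u}\!}{\seq x}$ and $\FB{M}\isub{\pair{\seq r,\seq u}\!}{\seq x}$ satisfying $t''\apx M''$. Being closed normal forms of type $\FB\tyR$, they must be equal by \reflemma{apx-programs}, and the theorem follows. The main technical obstacle is clearly the extrusion step itself: once the preceding $\apx$, $\trapx$, $\expd$ machinery and its compatibility with reduction are in place, the rest of the argument is bookkeeping.
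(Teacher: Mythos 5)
Your proof is correct and follows essentially the same route as the paper's: use stability to get a simple trace $t$ of $M$ uniform on a ball around $\seq r$, deduce $\sem t=\sem M$ there so that $\nabla\sem t(\seq r)=\nabla\sem M(\seq r)$, then apply \reflemma{expd_ad} and \reflemma{apx-substitution} to set up \reflemma{extrusion_iterate}, and identify the resulting normal forms via \reflemma{apx-programs} before concluding with \refprop{SimpleSound}. One small imprecision: to justify $\sem{t}(\seq r')=\sem{M}(\seq r')$ on the ball you cite \reflemma{apx-programs}, but that lemma is stated only for normal forms of type $\FB[n]\tyR$, not $\tyR$; the fact you actually need is that two numerals related by $\apx$ must be equal, which is immediate from the pre-trace rule for function symbols and needs no separate lemma.
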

\begin{proof}
	The assumption $\seq r\in\diffdom{M}$ tells us that there is an open ball $\Ball{\varepsilon_0}{\seq r}$ where $\nabla\!\sem{M}$ exists. By \refdef{Stab}, $\seq r\in \Stab M$ means that there is a simple program $t\apx M$ uniformly tracing $M$ in an open ball of $\seq r$, \ie, there exists $\varepsilon>0$ such that for all $\seq r'\in\Ball\varepsilon{\seq r}$ we have $t\isub{\seq r'}{\seq x}\trapx M\isub{\seq r'}{\seq x}$, and of course we may take $\varepsilon\leq\varepsilon_0$. This implies in particular that $\sem{t}$ and $\sem M$ coincide on $\Ball{\varepsilon}{\seq r}$ and, therefore, we have $\seq r\in\diffdom{t}$ as well.
	
	By \reflemma{expd_ad},  
	$t\isub{\seq r}{\seq x}\expd \FB t\isub{\pair{\seq r,\seq u}\!}{\seq x}$ and
	$M\isub{\seq r}{\seq x}\expd \FB M\isub{\pair{\seq r,\seq u}\!}{\seq x}$, where $u_i$ is either the normal form of $\iota_{i}^n1$ or $\iota_{i}^n$, depending whether $\FBSymb$ is $\FSymb$ or $\BSymb$ (recall Equations~\eqref{eq:grad_f} and~\eqref{eq:grad_b}). 
	Moreover, by \reflemma{apx-substitution} we have that $t\apx M$ gives $\FB t\isub{\pair{\seq r,\seq u}\!}{\seq x}\apx \FB M\isub{\pair{\seq r,\seq u}\!}{\seq x}$.
	We are then in position of applying \reflemma{extrusion_iterate} to $M\isub{\seq r}{\seq x}$, $t\isub{\seq r}{\seq x}$ (closed terms of ground type) and $\FB t\isub{\pair{\seq r,\seq u}\!}{\seq x}$, $\FB M\isub{\pair{\seq r,\seq u}\!}{\seq x}$. This gives us
	a normal form $t'$ of $\FB t\isub{\pair{\seq r,\seq u}\!}{\seq x}$ and $M'$ of $\FB M\isub{\pair{\seq r,\seq u}\!}{\seq x}$ such that $t'\apx M'$. By subject reduction (\refprop{subject_reduction}), $t',M'$ are closed normal forms of type $\FB\tyR$, so \reflemma{apx-programs} gives us $t'=M'$. 
	Since $\seq r\in\diffdom{t}$, we conclude by \refprop{SimpleSound}.
\end{proof}

%

\section{Unsoundness of AD}
\label{sect:Unsoundness}
\begin{definition}[unstable point]
	\label{def:unstable}
	The set of \emph{unstable points} of a program $M$, denoted by $\UStab{M}$, is the complement of $\Stab{M}$ (\refdef{Stab}) in $\dom{M}$, \ie, $\UStab{M}:=\dom{M}\setminus\Stab{M}$.
\end{definition}

We know from the remark after \refdef{Stab} that $\UStab{M}$ is closed in $\dom M$ (with respect to to the subspace topology). The goal of this section is to prove that it is a quasivariety, hence of measure zero. The main tool is the logical predicate defined in \reffig{Logic} and its adequacy (\reflemma{Adequacy}). The structure of the proof is standard, but some new notions are needed. First, our programs are first-order functions, \ie, terms with some free variables of ground type, so the logical predicate is indexed by a typing environment. Second, \reflemma{stab} states some properties of the notion of stability necessary to achieve the standard auxiliary lemmas of a logical predicate, such as closure under expansion (\reflemma{expansion_log}) or (a syntactic variant of) Scott-continuity (\reflemma{zero} and \reflemma{fix}). Third, and more important, the standard lemma of logical predicates for the conditional (\reflemma{if}) is particularly subtle. This should not come as a surprise: as discussed above
, the possibility of unsoundness of AD is due to conditionals. In particular, let us underline that \reflemma{if} uses the notion of completely quasicontinuous map introduced in \refsect{cqc}.

The logical predicate on which the proof is based is defined in \reffig{Logic}. In the rest of the section, unless otherwise stated, $\Gamma:=x_1^\tyR,\ldots,x_n^\tyR$ is a ground context. Moreover, if $M:A_1\to\cdots\to A_p\to B$ and $\seq L=L_1,\ldots,L_p$ such that $L_i:A_i$ for all $1\leq i\leq p$, then the notation $M\seq L$ stands for $ML_1\cdots L_p$, which is of course a term of type $B$.

\begin{figure*}
	\begin{minipage}{0.9\textwidth}
	\begin{align*}
		\logic{\tyR} &:= \{\Gamma\vdash M:\tyR \st \sem M \text{  is cqc and } \UStab{M} \text{ is a quasivariety} \} \\
		\logic{A\rightarrow B} &:= \{\Gamma\vdash M: A\rightarrow B \st \forall N\in\logic A, MN\in\logic B
		\} \\
		\logic{A_1\times\cdots\times A_k} &:= \{ 
			\Gamma\vdash M:A_1\times\cdots\times A_k \st \pr i M \in \logic{A_i}, \forall i\leq k
		\}
	\end{align*}
	\end{minipage}
	\caption{The definition of the logical predicate $\logic{A}$, with $\Gamma$ a ground context.}
	\label{fig:Logic}
\end{figure*}

\begin{lemma}
	\label{lemma:stab}
	We have the following inclusions, where the terms appearing in the statements are supposed to be typed under a ground context $\Gamma$.
	\begin{enumerate}
		\item\label{stab:map} Let $\funsa$ be a function symbol of arity $k$ and let $M_1,\dots,M_k$ be programs, then  $\bigcap_i\Stab{M_i}\subseteq\Stab{\funsa(M_1,\dots,M_k)}$.

		\item\label{stab:red}
		Let $R\red P$ be one of the rewriting rules in \reffig{reduction}, with $R,P$ of type $B_1\to\cdots\to B_p\to\tyR^m$. For all $1\leq i\leq p$, let $\Gamma\vdash L_i:B_i$. Then, for all $1\leq j\leq m$, $\Stab{\pi_j(P\seq L)}\subseteq\Stab{\pi_j(R\seq L)}$.

		\item\label{stab:if} Let $P:\tyR$ and $M_1,M_2$ be of type $B_1\to\cdots\to B_p\to\tyR^m$. For all $1\leq i\leq p$, let $\Gamma\vdash L_i:B_i$. Let $X_1:=\sem P^{-1}(\Real_{\leq 0})$ and $X_2:=\sem P^{-1}(\Real_{>0})$. 
		Then, for all $1\leq j\leq m$ and all $l\in\{1,2\}$, we have that $\Stab{\pi_j(M_l\seq L)}\cap\interior{X_l} \subseteq \Stab{\pi_j(\ite P{M_1}{M_2}\seq L)}$.
		
		\item\label{stab:fix} 
		Let $B=B_1\to\cdots\to B_p\to\tyR^m$, let $\Gamma\vdash L_0:A$ and $\Gamma\vdash L_i:B_i$ for all $1\leq i\leq p$. For all $k\in\Nat$ and $1\leq j\leq m$, $\Stab{\pi_j((\fix[k]{f^{A\to B}}{M})\seq L)}\subseteq\Stab{\pi_j((\fix{f^{A\to B}}M)\seq L)}$, where $\seq L:=L_0,L_1,\ldots,L_p$.
	\end{enumerate}
\end{lemma}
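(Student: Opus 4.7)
My plan is to prove each item by constructing, from the simple term $t$ whose existence is guaranteed by the stability hypothesis, a new simple term (often $t$ itself, or a small enrichment of it) witnessing the conclusion on the same or a smaller open ball around $\seq r$. Throughout, I will use the rules of \reffig{Trace} to check the static relation $\apx$, then match head reductions step-by-step via \refdef{ApxRed_steps} and \refdef{ApxRed_sequences} to obtain the dynamic relation $\trapx$.

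For (\ref{stab:map}), given $t_i\apx M_i$ uniformly tracing $M_i$ on $\Ball{\varepsilon_i}{\seq r}$, I set $\varepsilon:=\min_i\varepsilon_i$ and $t:=\funsa(t_1,\dots,t_k)$. The function symbol rule of \reffig{TraceTerms} yields $t\apx\funsa(\seq M)$, and for $\seq r'\in\Ball\varepsilon{\seq r}$ the head reduction of $\funsa(\seq M)\isub{\seq r'}{\seq x}$ reduces each $M_i\isub{\seq r'}{\seq x}$ to a numeral $q_i$ (then fires $\funsa(q_1,\dots,q_k)$), matched on the trace side by the analogous reductions of $t_i\isub{\seq r'}{\seq x}$, together with the $\funsa$-rewriting clause of \refdef{ApxRed_rules}. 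For (\ref{stab:red}), any trace of $\pi_j(P\seq L)$ decomposes by iterated use of the application and projection rules of $\apx$ as $t = \pi_j(t_P\pair{\seq u_1}\cdots\pair{\seq u_p})$ with $t_P\apx P$ and $u_j^{(i)}\apx L_j$. The pullback lemma (\reflemma{apx_expansion}), applied to the rewriting step $R\red P$ and to $t_P\apx P$, yields $t_R\apx R$ and $\xi:t_R\reds t_P$ with $\xi\apx(R\red P)$; then $t':=\pi_j(t_R\pair{\seq u_1}\cdots\pair{\seq u_p})\apx\pi_j(R\seq L)$ uniformly traces $\pi_j(R\seq L)$ on the same ball, by prepending $\xi$ (placed in the head context $\pi_j(\ctxthole\seq L)$) to the existing trace of $\pi_j(P\seq L)$.

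For (\ref{stab:if}), let $t\apx\pi_j(M_l\seq L)$ be a uniform trace on $\Ball{\varepsilon_1}{\seq r}$ and pick $\varepsilon_2>0$ with $\Ball{\varepsilon_2}{\seq r}\subseteq\interior{X_l}$; set $\varepsilon:=\min(\varepsilon_1,\varepsilon_2)$. Decomposing $t = \pi_j(t_l\pair{\seq u_1}\cdots\pair{\seq u_p})$ with $t_l\apx M_l$, I define
$$t' \;:=\; \pi_j\!\bigl(\pr l\!\pair{t_l,t_l}\pair{\seq u_1}\cdots\pair{\seq u_p}\bigr),$$
which satisfies $t'\apx\pi_j(\ite{P}{M_1}{M_2}\seq L)$ via the conditional rule of \reffig{TraceTerms}. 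On $\Ball\varepsilon{\seq r}$, the head reduction of $\pi_j(\ite{P}{M_1}{M_2}\seq L)\isub{\seq r'}{\seq x}$ first reduces the guard $P\isub{\seq r'}{\seq x}$ to a numeral $q$ (traced by the empty sequence thanks to the guard clause of \refdef{ApxRed_steps}); then, since $\seq r'\in X_l$, the conditional fires to $M_l\seq L$, matched on the trace side by $\pr l\!\pair{t_l,t_l}\red t_l$ (the conditional clause of \refdef{ApxRed_rules}); the uniform trace of $\pi_j(M_l\seq L)$ takes over from there. For (\ref{stab:fix}), if $k=0$ then $\fix[0]{f}{M}=\Omega$ diverges and $\Stab{\cdot}=\emptyset$; for $k>0$, the fixpoint rule of \reffig{TraceTerms} yields $t\apx\fix{f}{M}$ directly from $t\apx\fix[k]{f}{M}$, and the head reduction of $\pi_j((\fix f M)\seq L)\isub{\seq r}{\seq x}$ can be built by mimicking the existing head reduction of $\pi_j((\fix[k]{f}{M})\seq L)\isub{\seq r}{\seq x}$, replacing each finite unfolding by the genuine $\fix f M$ unfolding; the same trace transports since the fixpoint clause of \refdef{ApxRed_rules} already allows tracing $\fix f N$ by the reduction of some $(\lambda f.M)(\lambda x.(\fix[i]{f}{M})x)$.

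The subtle step is (\ref{stab:if}): one must simultaneously exploit that the guard produces no trace activity and that the interior condition forces the branch index $l$ to be uniform across the ball. The other delicate point is the careful decomposition of a trace $t\apx \pi_j(P\seq L)$ or $t\apx\pi_j(M_l\seq L)$ as an iterated application to tuples, which is what makes the construction of $t'$ syntactically legitimate; the rules of \reffig{Trace} make this possible but one must verify that no free variable is misplaced and that the indices $n_i$ of the tuples are faithfully transported from the given $t$.
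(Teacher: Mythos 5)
Your proof follows essentially the same approach as the paper's: for the conditional case (\ref{stab:if}), which is the only one the paper spells out, your construction of $t'=\pi_j\bigl(\pr l\!\pair{t_l,t_l}\pair{\seq u_1}\cdots\pair{\seq u_p}\bigr)$, the use of the guard clause of \refdef{ApxRed_steps} to trace the guard reduction by an empty sequence, and the chaining of the projection step with the existing uniform trace on $\interior{X_l}$ is exactly what the paper does. For the remaining cases, which the paper dismisses as ``easy variants,'' your arguments for (\ref{stab:map}) and (\ref{stab:red}) (the latter via the pullback lemma applied \emph{before} substituting $\seq r'$, followed by substitution invariance of $\apx$) are correct and in the same spirit, while your sketch for (\ref{stab:fix}) (tracing the $\mathsf{fix}$-step by the $\beta$-step of a finite unfolding, as already built into \refdef{ApxRed_rules}) is the right idea, though the claim that the normalizing head reduction of the finite unfolding ``transports'' to the full fixpoint term --- the intermediate terms differ wherever $\fix[j]{f}{M}$ versus $\fix{f}{M}$ occurs --- is stated rather than argued; that is the one place where a bit more care (an explicit relation between the two reduction sequences, checked to preserve both head position and the trace) would be welcome, but it does not amount to a wrong approach.
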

\begin{proof}[Proof Sketch]
We only detail the proof of the branching case, the other cases are easy variants.
		By taking the notations of point~\eqref{stab:if}, let $l\in\{1,2\}$ and let $\seq r\in\Stab{\pi_j(M_l\seq L)}\cap\interior{X_l}$. By definition, there exist $\varepsilon>0$, $u\apx \pi_j(M_l\seq L)$ such that, for all $\seq r'\in\Ball\varepsilon{\seq r}$, $u\isub{\seq r'}{\seq x}\trapx \pi_j(M_l\seq L)\isub{\seq{r'}}{\seq x}$ and $\seq r'\in\interior{X_l}$. Notice that $u=\pi_j(u'\seq{u''})$, with $u'\apx M_l$ and $\seq u''=\pair{\seq u_1''}\cdots\pair{\seq u_p}$ such that, for every $1\leq i\leq p$ and every element $u''_{i,h}$ of $\seq u''_i$, we have $u''_{i,h}\apx L_i$. Let $t := \pi_j((\pr \ell\!\pair{u',u'})\seq u'')$ and notice that $t\apx \pi_j(\ite P{M_1}{M_2}\seq L)$. Let us prove that $t\isub{\seq r'}{\seq x}\trapx \pi_j(\ite P{M_1}{M_2}\seq L)\isub{\seq r'}{\seq x}$. 

		Since $\seq r'\in\interior{X_l}\subseteq\dom P$, we have that $P\isub{\seq r'}{\seq x}$ is normalizing. Since $P$ is ground, by \refprop{headnf} there is a head reduction sequence $\rho:P\isub{\seq r'}{\seq x}\reds q$ such that $q$ is a numeral. Let now $\hCtxt=\pi_j(\ite{	\ctxthole}{M_1}{M_2}\seq u'')\isub{\seq r'}{\seq x}$. Notice that $\nu\apx\hctxt\rho$ for $\nu$ the empty reduction sequence of $t\isub{\seq r'}{\seq x}$. Moreover, by hypothesis we have normalizing reduction sequences $\nu'\apx\rho'$ from $u\isub{\seq r'}{\seq x}=\pi_j(u'\seq{u''})\isub{\seq r'}{\seq x}$ and $\pi_j(M_l\seq L)\isub{\seq r'}{\seq x}$, respectively. Furthermore, we have the head reduction steps:
		\begin{align*}
			\nu_0:&\pi_j(\pi_\ell\pair{u',u'}\seq{u''})\isub{\seq r'}{\seq x}\red\pi_j(u'\seq{u''})\isub{\seq r'}{\seq x},&
			\rho_0:&\pi_j(\ite q{M_1}{M_2}\seq L)\isub{\seq r'}{\seq x}\red \pi_j(M_l\seq L)\isub{\seq r'}{\seq x}
		\end{align*}
		such that $\nu_0\apx\rho_0$. 
		We then have 
		$\nu\nu_0\nu' \apx \hctxt\rho\rho_0\rho'$,
		which allows us to conclude.
%
\end{proof}

\begin{lemma}[function symbols]
	\label{lemma:map}
	Let $\funsa$ be a function symbol of arity $k$ and, for each $1\leq i\leq k$, $M_i\in\logic{\tyR}$. If $\sem[]{\funsa}$ is cqc, then $\funsa(M_1,\dots,M_k)\in\logic{\tyR}$.
\end{lemma}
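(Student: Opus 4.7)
The plan is to unfold the definition of $\logic{\tyR}$ and verify the two conditions for $\funsa(M_1,\dots,M_k)$: that $\sem{\funsa(M_1,\dots,M_k)}$ is completely quasicontinuous, and that $\UStab{\funsa(M_1,\dots,M_k)}$ is a quasivariety. Each of these follows by combining a single structural fact about $\funsa(\seq M)$ with a closure property of the class involved (cqc maps, resp.\ quasivarieties).

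For the cqc part, I would invoke \refprop{head}, which gives $\sem{\funsa(\seq M)}=\sem[]{\funsa}\circ\pair{\sem{M_1},\ldots,\sem{M_k}}$. Since each $M_i\in\logic{\tyR}$, each $\sem{M_i}$ is cqc, hence by \refsublemma{qcont}{pairing} the tuple $\pair{\sem{M_1},\ldots,\sem{M_k}}$ is cqc. Composing with $\sem[]{\funsa}$, which is cqc by hypothesis, and using stability of cqc maps under composition (\refsublemma{qcont}{comp}), we conclude that $\sem{\funsa(\seq M)}$ is cqc.

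For the quasivariety part, the key observation is that
\[
	\UStab{\funsa(\seq M)} \;\subseteq\; \bigcup_{i=1}^k \UStab{M_i}.
\]
To see this, note first that, still by \refprop{head}, $\sem{\funsa(\seq M)}(\seq r)$ can only be defined when every $\sem{M_i}(\seq r)$ is defined, so $\dom{\funsa(\seq M)}\subseteq\bigcap_i\dom{M_i}$. On the other hand, \refsublemma{stab}{map} gives $\bigcap_i\Stab{M_i}\subseteq\Stab{\funsa(\seq M)}$. Combining the two inclusions,
\[
	\UStab{\funsa(\seq M)} \;=\; \dom{\funsa(\seq M)}\setminus\Stab{\funsa(\seq M)}
	\;\subseteq\; \Bigl(\bigcap_i\dom{M_i}\Bigr)\setminus\Bigl(\bigcap_i\Stab{M_i}\Bigr)
	\;\subseteq\; \bigcup_i\bigl(\dom{M_i}\setminus\Stab{M_i}\bigr),
\]
which is exactly $\bigcup_i\UStab{M_i}$. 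Since each $\UStab{M_i}$ is a quasivariety by the hypothesis $M_i\in\logic{\tyR}$, stability of quasivarieties under finite unions and subsets (\reflemma{quasivar}) yields that $\UStab{\funsa(\seq M)}$ is a quasivariety.

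There is no real obstacle here: once \refsublemma{stab}{map}, \refsublemma{qcont}{pairing} and \reflemma{quasivar} are in place, the lemma is a bookkeeping exercise. The only point that one must be careful about is not to confuse $\Stab{\funsa(\seq M)}$ (a subset of $\dom{\funsa(\seq M)}$) with $\bigcap_i\Stab{M_i}$ (a subset of $\bigcap_i\dom{M_i}$) when taking complements, which is why the argument above first restricts to $\dom{\funsa(\seq M)}$ before passing to the bigger set $\bigcap_i\dom{M_i}$.
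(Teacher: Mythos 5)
Your proof is correct and takes exactly the route the paper intends: \refprop{head} plus \refsublemma{qcont}{comp} and \refsublemma{qcont}{pairing} give the cqc part, while \refsublemma{stab}{map}, the inclusion $\dom{\funsa(\seq M)}\subseteq\bigcap_i\dom{M_i}$, and \reflemma{quasivar} give the quasivariety part. The set-theoretic manipulation relating $\UStab{\funsa(\seq M)}$ to $\bigcup_i\UStab{M_i}$ is carried out carefully and is sound.
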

%

\begin{lemma}[closure under expansion]
	\label{lemma:expansion_log}
	Let $R\red P$ be one of the rewriting rules in Figure~\ref{fig:reduction}. If $P\in\logic{A}$, then $R\in\logic{A}$.
\end{lemma}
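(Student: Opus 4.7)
The plan is to unfold the definition of $\logic{A}$ recursively down to ground-type observations, and at each such observation invoke \refsublemma{stab}{red} together with confluence (\refprop{confluence}) to transfer the two conditions defining $\logic{\tyR}$ (cqc semantics and quasivariety failure set) from the $P$-side to the $R$-side.

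For the base case $A=\tyR$: the rewriting rule $R\red P$ has $R,P:\tyR$, which is precisely the shape handled by \refsublemma{stab}{red} with $p=0$, $m=1$, and no arguments $\seq L$. This yields $\Stab{P}\subseteq\Stab{R}$. Since $R\red P$, confluence gives $\dom{R}=\dom{P}$ and $\sem{R}=\sem{P}$, whence
$\UStab{R}=\dom{R}\setminus\Stab{R}\subseteq\dom{P}\setminus\Stab{P}=\UStab{P}$.
As $P\in\logic{\tyR}$ makes $\UStab{P}$ a quasivariety, so is $\UStab{R}$ by closure of quasivarieties under subsets (\reflemma{quasivar}); similarly $\sem{R}=\sem{P}$ inherits cqc from $\sem{P}$. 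Therefore $R\in\logic{\tyR}$.

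For the inductive cases, I would unfold $\logic{A}$ compositionally: if $A=A'\to B$, then for any $N\in\logic{A'}$ we must show $RN\in\logic{B}$, and since $RN\red PN$ by context closure with $PN\in\logic{B}$ by hypothesis, the induction hypothesis at type $B$ gives the result; if $A=A_1\times\cdots\times A_k$, we handle each $\pi_i R\red\pi_i P$ via the induction hypothesis at type $A_i$, using $\pi_i P\in\logic{A_i}$.

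The main technical obstacle is that these inductive invocations target context closures of the rewriting rule (like $RN\red PN$ or $\pi_i R\red\pi_i P$) rather than bare rewriting rules. The resolution is to slightly strengthen the statement to cover such context closures — equivalently, every ground observation $\Pi[R]\red\Pi[P]$ for a probing context $\Pi$ obtained by alternating applications and projections — and to observe that at ground type such probings reduce, possibly by iteration through the type structure, to an application of \refsublemma{stab}{red}, whose $\pi_j(\ctxthole\,\seq L)$ shape is precisely the building block produced by the unfolding of $\logic{A}$. With this mild generalization in place, the base-case argument (stability monotone along reduction, semantics preserved by confluence) closes the induction uniformly.
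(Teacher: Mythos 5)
Your base case at $A=\tyR$ is correct, and the overall strategy — unfold $\logic A$ down to ground-type observations, transfer cqc via confluence and the unstable-points condition via \refsublemma{stab}{red} — is the right one, as is your diagnosis of the obstacle: the inductive steps target context closures $RN\red PN$ and $\pi_iR\red\pi_iP$ rather than bare rewriting rules. The gap is in the claimed resolution. You assert that the ground observations produced by unfolding $\logic A$ all have the shape $\pi_j(\ctxthole\,\seq L)$ covered by \refsublemma{stab}{red}, but this fails for mixed types: if $R\red P$ has type $(\tyR\to\tyR)\times\tyR$ (perfectly possible for a $\beta$- or projection redex), unfolding $\logic{(\tyR\to\tyR)\times\tyR}$ produces the ground observation $(\pi_1\ctxthole)\,N$, a projection \emph{before} an application. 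This is not an instance of $\pi_j(\ctxthole\,\seq L)$, and the type $(\tyR\to\tyR)\times\tyR$ is not of the form $B_1\to\cdots\to B_p\to\tyR^m$ assumed in the hypothesis of \refsublemma{stab}{red}.

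So the ``iteration through the type structure'' does not actually close the argument: no amount of iterating \refsublemma{stab}{red} yields $\Stab{(\pi_1 P)N}\subseteq\Stab{(\pi_1 R)N}$, because that lemma only produces stability inclusions for observations in which every application precedes the single terminal projection. The repair is concrete: item 2 of \reflemma{stab} must itself be generalized to arbitrary ground probing contexts $\Pi$ built by alternating applications and projections in any order (equivalently, your strengthened statement about context closures must be supported by a correspondingly strengthened stability lemma, not by the existing one). The pullback argument underlying it, via \reflemma{apx_expansion}, does extend to such $\Pi$, so the intuition is sound and the gap is repairable — but as written your proof silently breaks on any redex whose type has a product component that is not $\tyR$.
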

%
%

\begin{lemma}[conditional]
	\label{lemma:if}
	If $P\in\logic{\tyR}$ and $M_1,M_2\in\logic{A}$, then $\ite{P}{M_1}{M_2}\in\logic{A}$.
\end{lemma}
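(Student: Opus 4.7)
My approach is induction on the type $A$. The base case $A=\tyR$ is the substantive step; the inductive cases reduce to smaller types by unfolding $\logic{A}$, modulo a ``transfer'' between the pushed-in and pushed-out forms of the conditional, which is the main obstacle.

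\textbf{Base case $A=\tyR$.} Set $X_1:=\sem{P}^{-1}(\Real_{\leq 0})$ and $X_2:=\sem{P}^{-1}(\Real_{>0})$. The set $\Real_{\leq 0}=\Real_{<0}\cup\{0\}$ is quasiopen (an open set together with the zero set of the identity, which is basic) and $\Real_{>0}$ is open; since $\sem{P}$ is cqc (from $P\in\logic{\tyR}$), both $X_1$ and $X_2$ are quasiopen. By \refprop{head}, $\sem{\ite{P}{M_1}{M_2}}$ agrees with $\sem{M_l}$ on $X_l$, so for any quasiopen $Q\subseteq\Real$,
$$\sem{\ite{P}{M_1}{M_2}}^{-1}(Q)=(\sem{M_1}^{-1}(Q)\cap X_1)\cup(\sem{M_2}^{-1}(Q)\cap X_2),$$
which is quasiopen by closure under binary intersection and finite union; the analogous computation prefixed by $\id_{\Real^k}\times$ gives complete quasicontinuity. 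For the stability side, \refsublemma{stab}{if} applied with $p=0$, $m=1$, $j=1$ and empty $\seq{L}$ yields $\Stab{M_l}\cap\interior{X_l}\subseteq\Stab{\ite{P}{M_1}{M_2}}$ for $l\in\{1,2\}$. Combined with $\dom{\ite{P}{M_1}{M_2}}=(\dom{M_1}\cap X_1)\cup(\dom{M_2}\cap X_2)$ (again \refprop{head}), this gives
$$\UStab{\ite{P}{M_1}{M_2}}\subseteq\UStab{M_1}\cup\UStab{M_2}\cup\border{X_1}\cup\border{X_2},$$
a countable union of quasivarieties by hypothesis and by \refsublemma{qopen}{border}.

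\textbf{Inductive cases.} If $A=A_1\times\cdots\times A_k$, each $\pi_i M_l\in\logic{A_i}$, so the IH at $A_i$ gives $\ite{P}{\pi_i M_1}{\pi_i M_2}\in\logic{A_i}$. If $A=B\to C$, each $M_l N\in\logic{C}$ for $N\in\logic{B}$, so the IH at $C$ gives $\ite{P}{M_1 N}{M_2 N}\in\logic{C}$. In either case I still owe $\pi_i\ite{P}{M_1}{M_2}\in\logic{A_i}$ or $(\ite{P}{M_1}{M_2})N\in\logic{C}$, namely the conditional pushed \emph{outside} the elimination.

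\textbf{Main obstacle.} This transfer is the crux. Extensionally the two forms coincide by \refprop{head}, so the ground-level semantics match and the cqc condition follows from the base-case argument. For the stability condition I iterate the type decomposition: every full elimination of $A$ into $\tyR$ factors, up to nested outer projections handled by the outer induction, through a ground-type term of the shape $\pi_j(\ite{P}{M_1}{M_2}\seq{L})$, for which the general form of \refsublemma{stab}{if} (with arbitrary $p$ and $j$) delivers $\Stab{\pi_j(M_l\seq{L})}\cap\interior{X_l}\subseteq\Stab{\pi_j(\ite{P}{M_1}{M_2}\seq{L})}$. Replaying the base-case calculation with this inclusion exhibits the unstable points of the pushed-out form as a quasivariety, closing the induction.
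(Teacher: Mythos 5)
Your proof ultimately arrives at the paper's argument, but by a needlessly circuitous route: the paper does not induct on $A$ at all. It immediately unfolds the logical predicate by writing $A=A_1\to\cdots\to A_p\to\tyR^m$, takes arbitrary $\seq L=L_1,\ldots,L_p$ with $L_i\in\logic{A_i}$, and reduces to showing $\pi_j(\ite{P}{M_1}{M_2}\seq L)\in\logic{\tyR}$ for all $j$. Your \emph{base case} is a verbatim instance of this ($p=0$, $m=1$), and it is correct; in particular your inclusion
$\UStab{\ite{P}{M_1}{M_2}}\subseteq\UStab{M_1}\cup\UStab{M_2}\cup\border{X_1}\cup\border{X_2}$
is exactly what the paper derives (writing $Q_i=\interior{Q_i}\cup Z_i$ and using \refsublemma{stab}{if}).

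The issue is with the \emph{inductive cases}. As you yourself note, the IH only hands you terms of the shape $\ite{P}{\pi_i M_1}{\pi_i M_2}$ and $\ite{P}{M_1N}{M_2N}$, whereas $\logic{A}$ demands $\pi_i\ite{P}{M_1}{M_2}$ and $(\ite{P}{M_1}{M_2})N$. Since $\logic{\tyR}$ mentions $\Stab{\cdot}$, which is a syntactic notion (it quantifies over simple terms $t\apx M$), extensional equality of the two forms does \emph{not} let you transfer membership in $\logic{\tyR}$ — the cqc part transfers, as you say, but the quasivariety bound on $\UStab{\cdot}$ does not, because $\apx$ and $\trapx$ are defined on term syntax, not on denotations. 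Your ``main obstacle'' paragraph then correctly identifies the fix — apply \refsublemma{stab}{if} directly at the fully eliminated form $\pi_j(\ite{P}{M_1}{M_2}\seq L)$ with arbitrary $p,j$ — but at that point the induction hypothesis is no longer being used: you have simply switched over to the paper's direct unfolding and are reproving the whole thing in one shot. The inductive scaffolding is therefore dead weight, and the proof would be both shorter and tighter if you dispensed with it and started, as the paper does, from the ``it suffices to prove $\pi_j(\ite{P}{M_1}{M_2}\seq L)\in\logic{\tyR}$'' reduction. (One caveat, which your write-up shares with the paper: the unfolding $A=A_1\to\cdots\to A_p\to\tyR^m$ and the statement of \refsublemma{stab}{if} both tacitly assume types in this normal form; neither handles elimination contexts that interleave projections and applications, e.g.\ $(\pi_1\ite{P}{M_1}{M_2})N$ for $A=(\tyR\to\tyR)\times\tyR$. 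This is a gap in the paper as well, not something introduced by your proof.)
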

\begin{proof}
	Let $A = A_1\rightarrow \dots \rightarrow A_p\rightarrow \tyR^m$. It is enough to prove that, given $\seq L=L_1,\ldots,L_p$ such that $L_i\in\logic{A_i}$ for every $1\leq i\leq p$, we have $\pi_j(\ite{P}{M_1}{M_2}\seq L)\in\logic{\tyR}$ for every $1\leq j\leq m$.
	
	Let $N:=\pi_j(\ite{P}{M_1}{M_2}\seq L)$, $Q_1:=\sem{P}^{-1}(\Real_{\leq 0})$ and $Q_2:=\sem P^{-1}(\Real_{>0})$. Observe that both $Q_1$ and $Q_2$ are quasiopen, because $\Real_{\leq 0}$ and $\Real_{>0}$ are quasiopen and $\sem P$ is cqc by hypothesis. Furthermore, $\Real^l\times Q_i$ is quasiopen for every $l\geq 0$ and $i\in\{1,2\}$, because it is the inverse image of $Q_i$ via a projection $\Real^l\times\Real^p\to\Real^p$, and these are cqc (\reflemma{qcont}.\ref{qcont:basic}).
	
	To prove that $\sem{N}$ is cqc we need to show that, for every $l\geq 0$ and every quasiopen set $Q\subseteq\Real^{l+1}$, the set $(\id_{\Real^l}\times\sem N)^{-1}(Q)$ is quasiopen. But \refprop{head} tells us that $(\id_{\Real^l}\times\sem N)^{-1}(Q)$ is equal to 
	$
	\left((\Real^l\times Q_1)\cap(\id_{\Real^l}\times\sem{\pi_j(M_1\seq L)})^{-1}(Q)\right)
	\cup\left((\Real^l\times Q_2)\cap(\id_{\Real^l}\times\sem{\pi_j(M_2\seq L)})^{-1}(Q)\right)
	$,
	which is quasiopen because $\sem{\pi_j(M_1\seq L)}$ and $\sem{\pi_j(M_1\seq L)}$ are cqc by hypothesis.
	
	For what concerns the unstable points, we first observe that, by \refprop{head},
	$\dom{N} = \dom{P}\cap((Q_1\cap\dom{M_1})\cup(Q_2\cap\dom{M_2}))=(Q_1\cap\dom{M_1})\cup(Q_2\cap\dom{M_2})$,
	the second equality holding because $Q_1,Q_2\subseteq\dom P$. We may therefore write
	\begin{align*}
		\UStab{N} &= \left(\bigcup_{i\in\{1,2\}}Q_i\cap\dom{M_i}\right) \setminus \Stab{N} 
		=\!\!\!
		\bigcup_{i\in\{1,2\}}\!(Q_i\cap\dom{M_i})\setminus \Stab{N} 
		\subseteq\!\!\! 
		\bigcup_{i\in\{1,2\}}\!(Q_i\cap\dom{M_i})\setminus(\interior{Q_i}\cap\Stab{M_i})
	\end{align*}
	where the inclusion is by \refsublemma{stab}{if}. Now, for all $i\in\{1,2\}$, let us write $A_i:=(Q_i\cap\dom{M_i})\setminus(\interior{Q_i}\cap\Stab{M_i})$. Notice that, by \reflemma{qopen}, $Q_i=\interior{Q_i}\cup Z_i$ with $Z_i$ a quasivariety, hence
	\[
		A_i = ((Q_i\cap\dom{M_i})\setminus\interior{Q_i})\cup((Q_i\cap\dom{M_i})\setminus\Stab{M_i})
		\subseteq (Q_i\setminus\interior{Q_i})\cup(\dom{M_i}\setminus\Stab{M_i}) = Z_i\cup\UStab{M_i},
	\]
	so 
	$A_i$ is a quasivariety, because $Z_i$ and $\UStab{M_i}$ are. Since $\UStab{N}\subseteq A_1\cup A_2$, we are done.
\end{proof}

\begin{lemma}[divergence]
	\label{lemma:zero}
	For every type $A\to B$, $\Omega_{A\to B}\in\logic{A\to B}$.
\end{lemma}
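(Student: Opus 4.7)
The plan is to unfold the logical predicate all the way down to ground type and then exploit the fact that $\Omega$ diverges on every input. Writing $A\to B = A_1 \to \cdots \to A_p \to \tyR^m$, the definition of $\logic{-}$ in \reffig{Logic} reduces the claim to showing: for every choice of $L_i \in \logic{A_i}$ ($1 \leq i \leq p$) and every $1 \leq j \leq m$, we have $\pi_j(\Omega_{A\to B}\seq L) \in \logic{\tyR}$, i.e.\ that $\sem{\pi_j(\Omega_{A\to B}\seq L)}$ is cqc and that $\UStab{\pi_j(\Omega_{A\to B}\seq L)}$ is a quasivariety.

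The key observation is that $\Omega_{A\to B}$ is operationally a ``black hole''. Indeed, by the fixpoint rule of \reffig{reduction},
\[
\Omega_{A\to B} \;=\; \fix{f^{A\to B}}{f} \;\red\; \lambda x.(\fix{f^{A\to B}}{f})\,x \;=\; \lambda x.\Omega_{A\to B}\,x,
\]
so any head reduction sequence of $\pi_j(\Omega_{A\to B}\seq L)\isub{\seq r}{\seq x}$ cycles indefinitely through $\beta$-redexes that reconstruct $\Omega_{A\to B}$ applied to $\seq L$. By \refprop{headnf}, normalizability of a closed ground-type program implies normalizability by head reduction; since head reduction here never produces a pair, $\pi_j(\Omega_{A\to B}\seq L)\isub{\seq r}{\seq x}$ has no normal form, for any $\seq r \in \Real^n$. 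Consequently $\sem{\pi_j(\Omega_{A\to B}\seq L)}$ is the everywhere-undefined partial function.

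From this both required properties are immediate. First, for every $k \geq 0$ and every quasiopen $Q \subseteq \Real^{k+1}$, we have $(\id_{\Real^k}\times\sem{\pi_j(\Omega_{A\to B}\seq L)})^{-1}(Q) = \emptyset$, which is open and hence quasiopen; so $\sem{\pi_j(\Omega_{A\to B}\seq L)}$ is cqc. Second, $\dom{\pi_j(\Omega_{A\to B}\seq L)} = \emptyset$, so $\UStab{\pi_j(\Omega_{A\to B}\seq L)} = \emptyset \setminus \Stab{\pi_j(\Omega_{A\to B}\seq L)} = \emptyset$, which is trivially a quasivariety.

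There is no real obstacle: the whole statement amounts to noting that the everywhere-undefined function satisfies the ground clause of $\mathrm{P}_\Gamma$ vacuously, and the only thing that needs a brief justification is the divergence of $\Omega_{A\to B}\seq L$, which follows directly from inspecting the fixpoint rewriting rule.
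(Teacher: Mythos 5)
Your proposal is correct and takes essentially the same approach the paper would (and must): unfold the logical predicate down to the ground clause, observe that every ``eliminate'' $\pi_j(\Omega_{A\to B}\seq L)\isub{\seq r}{\seq x}$ is a closed ground program that diverges, and conclude that both conditions in $\logic{\tyR}$ (cqc, $\UStab{\cdot}$ a quasivariety) hold vacuously because $\sem{\pi_j(\Omega_{A\to B}\seq L)}$ is the nowhere-defined partial function, whence all preimages are empty and $\UStab{\cdot}=\emptyset$.

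Two small points you might tighten. First, the decomposition $A\to B = A_1\to\cdots\to A_p\to\tyR^m$ silently assumes the codomain tower ends in a product of ground types, which is not true for arbitrary $B$ (products can sit between arrows); the paper uses the same shorthand, and the fix is the obvious structural induction on $B$ interleaving applications with projections, so this is cosmetic. Second, the divergence argument deserves one more sentence: you need that \emph{every} head reduction sequence from $\pi_j(\Omega\seq L)\isub{\seq r}{\seq x}$ fails to terminate, not merely that one such sequence loops, before invoking the contrapositive of \refprop{headnf}. This is true because the only head redex available is the fixpoint at the bottom of the application spine (or the $\beta$-redex created by its unfolding)---the $L_i$ sit in argument position and are unreachable by head contexts, and an application is never a pair, so the outer projection never fires. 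Also, the redexes are fixpoint redexes alternating with $\beta$-redexes, not just $\beta$-redexes. None of this changes the substance: the proof is right.
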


\begin{lemma}[fixpoints]
	\label{lemma:fix}
	If $\forall k\in\Nat, \fix[k] f M\in\logic{A\to B}$, then $\fix f M\in\logic{A\to B}$.
\end{lemma}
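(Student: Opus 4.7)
The plan is to show that, for every $N \in \logic{A}$, we have $(\fix f M) N \in \logic{B}$. Writing $B = B_1 \to \cdots \to B_p \to \tyR^m$ and arguing by type decomposition as in the previous lemmas of the section, it suffices to prove that, for every $L_1 \in \logic{B_1}, \dots, L_p \in \logic{B_p}$ and every $1 \leq j \leq m$, the program $P_j := \pi_j((\fix f M)\, N\, L_1 \cdots L_p)$ belongs to $\logic{\tyR}$. Setting $P_j^k := \pi_j((\fix[k] f M)\, N\, L_1 \cdots L_p)$, the assumption $\fix[k] f M \in \logic{A \to B}$ yields $P_j^k \in \logic{\tyR}$ for every $k \in \Nat$, so $\sem{P_j^k}$ is cqc and $\UStab{P_j^k}$ is a quasivariety.

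The key ingredient is the standard syntactic approximation (Scott-continuity) property of PCF, namely that
\[
    \dom{P_j} = \bigcup_{k \in \Nat} \dom{P_j^k},
\]
together with $\sem{P_j}(\seq r) = \sem{P_j^k}(\seq r)$ whenever $\seq r \in \dom{P_j^k}$. This is a classical fact for which the finite unfoldings in Equation~\eqref{eq:fixpoint_approx} were designed: any converging reduction of $P_j$ only invokes finitely many unfoldings of $\fix f M$, while conversely the definition of $\fix[k] f M$ ensures that each finite unfolding is simulated by $\fix f M$ itself. Granted this property, checking that $\sem{P_j}$ is cqc is immediate: for every $l \geq 0$ and every quasiopen $Q \subseteq \Real^{l+1}$,
\[
    (\id_{\Real^l} \times \sem{P_j})^{-1}(Q) = \bigcup_{k \in \Nat} (\id_{\Real^l} \times \sem{P_j^k})^{-1}(Q),
\]
which is a countable union of quasiopen sets, hence quasiopen.

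For the unstable points, we combine the same approximation property with \refsublemma{stab}{fix}. Given $\seq r \in \UStab{P_j}$, we have $\seq r \in \dom{P_j}$, so there exists some $k$ with $\seq r \in \dom{P_j^k}$. If $\seq r$ were stable for $P_j^k$, then \refsublemma{stab}{fix} would give $\seq r \in \Stab{P_j}$, contradicting our assumption; hence $\seq r \in \UStab{P_j^k}$, establishing
\[
    \UStab{P_j} \subseteq \bigcup_{k \in \Nat} \UStab{P_j^k}.
\]
By \reflemma{quasivar}, this countable union is a quasivariety and $\UStab{P_j}$ is a subset of it, therefore itself a quasivariety. The main obstacle is really just the invocation of the Scott-continuity property of PCF: once it is in hand, both the cqc and the measure-theoretic conditions reduce to the closure of quasiopen sets and quasivarieties under countable unions, which is already available to us.
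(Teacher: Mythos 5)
Your proof is correct and follows what I take to be essentially the paper's route: reduce to showing $\pi_j((\fix f M)\,N\,\seq L)\in\logic{\tyR}$, obtain the cqc condition by writing the preimage of each quasiopen set as a countable union of preimages under the finite approximants (via the Scott-continuity property of the fixpoint unfoldings), and obtain the quasivariety condition by combining \refsublemma{stab}{fix} with \reflemma{quasivar} to get $\UStab{P_j}\subseteq\bigcup_k\UStab{P_j^k}$. Both halves rest on the same syntactic approximation fact that $\dom{P_j}=\bigcup_k\dom{P_j^k}$ with semantics agreeing on each $\dom{P_j^k}$, which you correctly identify as the crux; this is indeed the standard PCF adequacy property and is the reason the $\fix[k]fM$ approximants are defined as in Equation~\eqref{eq:fixpoint_approx}.
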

%

\begin{lemma}[adequacy]\label{lemma:Adequacy}
	Suppose that the primitive functions of \PCF\ are admissible. Let $\Gamma:=x_1^\tyR, \dots, x_n^\tyR$, $\Delta:= y_1^{A_1},\ldots,y_m^{A_m}$, let $\Gamma,\Delta\vdash M:A$ and let $\Gamma\vdash N_i\in \logic{A_i}$ for all $1\leq i\leq m$. Then, $$M\isub{N_1}{y_1}\cdots\isub{N_m}{y_m}\in\logic A.$$
\end{lemma}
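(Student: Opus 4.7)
The plan is to proceed by induction on the structure of $M$, at each step routing the syntactic constructor to the appropriate closure lemma established above. Write $M\sigma$ for $M\isub{N_1}{y_1}\cdots\isub{N_m}{y_m}$ throughout. The base cases are the two variable rules: if $M=y_j$, then $M\sigma = N_j\in\logic{A_j}$ by hypothesis; if $M=x_i$ for a ground $x_i\in\Gamma$, the substitution leaves $x_i$ unchanged and I must check $x_i\in\logic{\tyR}$. Its semantics is the $i$-th projection of $\Real^n$, which is basic (clones always contain all projections) and hence cqc by \refsublemma{qcont}{basic}; furthermore $x_i$ is itself a simple term tracing itself uniformly on all of $\Real^n$, so $\UStab{x_i}=\emptyset$, trivially a quasivariety. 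For $M = \funsa(M_1,\ldots,M_k)$, admissibility of the primitives together with \refsublemma{qcont}{basic} makes $\sem[]{\funsa}$ cqc, while the IH supplies $M_i\sigma\in\logic{\tyR}$ for each $i$, so \reflemma{map} concludes.

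For pairs $M = \pair{M_1,\ldots,M_k}$ the IH gives $M_i\sigma\in\logic{A_i}$, and I must show $\pi_i^k\pair{M_1\sigma,\ldots,M_k\sigma}\in\logic{A_i}$ for each $i$, which follows from \reflemma{expansion_log} applied to the redex $\pi_i^k\pair{\seq M}\red M_i$. For projection $M=\pi_i^k M'$, the IH on $M'$ gives $M'\sigma\in\logic{A_1\times\cdots\times A_k}$, so the conclusion is immediate from the definition of $\logic{\cdot}$ at product type. The abstraction case $M=\lambda z^B.M'$ is handled by fixing an arbitrary $L\in\logic{B}$ and applying \reflemma{expansion_log} to the $\beta$-redex $(\lambda z.M'\sigma)L\red M'\sigma\isub{L}{z}$, reducing the problem to the IH on $M'$ with $\Delta$ enlarged by $z:B$. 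Applications $M=M_1M_2$ follow directly from the definition of the arrow predicate and the IH applied to $M_1$ and $M_2$. The conditional case is dispatched to \reflemma{if} using the three premises furnished by the IH.

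The only delicate case, and the one I expect to be the main obstacle, is the fixpoint $M=\fix{f^{A\to B}}{M'}$, where $\Gamma,f\!:\!A\to B,\Delta\vdash M'\!:\!A\to B$. By \reflemma{fix} it suffices to prove $(\fix[k]{f}{M'})\sigma\in\logic{A\to B}$ for every $k\in\Nat$, which I establish by a secondary induction on $k$. The base $k=0$ is \reflemma{zero}, since $\fix[0]{f}{M'}=\Omega$. For the successor step, using the identity
\[
	(\fix[k+1]{f}{M'})\sigma \;=\; \bigl((\lambda f.M')(\lambda x.(\fix[k]{f}{M'})x)\bigr)\sigma,
\]
\reflemma{expansion_log} applied to the leading $\beta$-redex reduces the goal to $M'\sigma\isub{\lambda x.((\fix[k]{f}{M'})\sigma)x}{f}\in\logic{A\to B}$. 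The outer structural IH on $M'$, taken with the environment enlarged by $f:A\to B$, then reduces this further to showing $\lambda x.((\fix[k]{f}{M'})\sigma)x\in\logic{A\to B}$; for any $N\in\logic{A}$ a final use of \reflemma{expansion_log} on $(\lambda x.Lx)N\red LN$ with $L:=(\fix[k]{f}{M'})\sigma$ discharges this from the inner IH $(\fix[k]{f}{M'})\sigma\in\logic{A\to B}$. The conceptual jump from the finite approximations to the fixpoint itself is exactly what \reflemma{fix} encapsulates (its proof uses countable-union stability of quasivarieties from \reflemma{quasivar}), so the remaining work at the fixpoint case is purely bookkeeping.
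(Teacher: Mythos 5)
Your proof is correct and follows the standard adequacy argument that the paper almost certainly uses: structural induction on $M$, with each syntactic constructor dispatched to the corresponding closure lemma (Lemma~\ref{lemma:map}, \ref{lemma:expansion_log}, \ref{lemma:if}, \ref{lemma:zero}, \ref{lemma:fix}), the ground variable case handled directly, and a secondary induction on $k$ inside the fixpoint case to feed Lemma~\ref{lemma:fix}. I see no gap; the only minor point worth making explicit in a write-up is the identification $(\fix[k]{f}{M'})\sigma = \fix[k]{f}{M'\sigma}$ (needed so that the secondary induction matches the statement of Lemma~\ref{lemma:fix}), which you use tacitly but which holds by an easy induction on $k$.
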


\begin{theorem}
	\label{th:Unsound}
	Assuming that the primitive functions of \PCF\ are admissible, for every program $\Gamma\vdash M:\tyR$ the set
	$$
		\mathrm{Fail}(M):=\{\seq r\in\diffdom M\;;\; \FBgrad M(\seq r) \not\reds \nabla(\sem{M})(\seq r)\}
	$$
	is a quasivariety, hence of measure zero.
\end{theorem}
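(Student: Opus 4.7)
The plan is to reduce the statement to the adequacy lemma (\reflemma{Adequacy}) combined with the soundness theorem (\refth{Sound}). First, I would apply \reflemma{Adequacy} to the program $M$ itself, with the free ground variables $x_1,\ldots,x_n$ substituted by themselves. This requires checking that $x_i\in\logic{\tyR}$ for each $i$: we have $\sem{x_i}$ equal to the $i$-th projection, which is a basic function and therefore cqc by \refsublemma{qcont}{basic}, and $\UStab{x_i}=\emptyset$ (a variable is trivially stable everywhere, since $t:=x_i$ satisfies $t\apx x_i$ and $t\isub{\seq r}{\seq x}\trapx x_i\isub{\seq r}{\seq x}$ uniformly on all of $\Real^n$), so $\UStab{x_i}$ is vacuously a quasivariety. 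Adequacy then yields $M\in\logic{\tyR}$, which by the definition of $\logic{\tyR}$ means that $\UStab{M}$ is a quasivariety.

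Next, I would invoke \refth{Sound} to bound $\mathrm{Fail}(M)$ in terms of $\UStab{M}$. By soundness, whenever $\seq r\in\Stab{M}\cap\diffdom{M}$, we have $\FBgrad{M}(\seq r)\reds\nabla(\sem{M})(\seq r)$, hence $\seq r\notin\mathrm{Fail}(M)$. Contrapositively, any $\seq r\in\mathrm{Fail}(M)$ lies in $\diffdom{M}\setminus\Stab{M}$, and since $\diffdom{M}\subseteq\dom{M}$, we deduce
\[
\mathrm{Fail}(M)\;\subseteq\;\dom{M}\setminus\Stab{M}\;=\;\UStab{M}.
\]

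Finally, by stability of quasivarieties under subsets (\reflemma{quasivar}, point 3), $\mathrm{Fail}(M)$ is itself a quasivariety, and by the measure-zero property of quasivarieties (\reflemma{quasivar}, point 1) it has Lebesgue measure zero. This completes the proof.

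The main obstacle is entirely absorbed into the adequacy lemma, whose proof carries the real weight of the argument: it must combine \reflemma{map} (for function symbols, using admissibility and \refsublemma{qcont}{basic}), \reflemma{expansion_log} (closure under expansion, needed for $\beta$ and projection redexes), \reflemma{if} (the delicate conditional case, which uses complete quasicontinuity to ensure that the guard's preimages are quasiopen and contribute only a quasivariety of instability via their borders), and \reflemma{fix} together with \reflemma{zero} (to handle fixpoints by a Scott-continuity-style argument across the finite approximants). Once adequacy is in hand, the theorem itself is essentially a one-line corollary, so at this stage the only novelty is the clean reduction to $\UStab{M}$ via \refth{Sound}.
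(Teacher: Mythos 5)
Your proof is correct and follows essentially the same route as the paper's: the paper's argument is exactly $\mathrm{Fail}(M)\subseteq\UStab{M}$ by \refth{Sound}, and $\UStab{M}$ is a quasivariety because $M\in\logic{\tyR}$ by \reflemma{Adequacy}. The only cosmetic difference is that you apply adequacy with the ground variables substituted by themselves (requiring the — correct but strictly unnecessary — check that $x_i\in\logic{\tyR}$), whereas the paper invokes \reflemma{Adequacy} with an empty $\Delta$, making $M\in\logic{\tyR}$ immediate.
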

\begin{proof}
	By \refth{Sound}, we know that $\mathrm{Fail}(M)\subseteq\UStab M$, which is a quasivariety because $M\in\logic{\tyR}$ by \reflemma{Adequacy}.
\end{proof}

\section{Discussion and Perspectives}
\label{sect:Conclusions}
\paragraph*{On the significance of the measure zero bound}
Since the set of real numbers representable on an actual computer is of measure zero in $\Real$ (it is finite!), one may feel skeptical about the significance of \refth{Unsound}. This issue was already raised by Speelpenning~\cite{Speelpenning} while commenting on Joss's theorem~\cite{Joss}. He defines a program similar to the following:
\begin{align*}
	\mathsf{SlowId} &:= \lambda x^\tyR.\left(\fix{f^{\tyR\to\tyR}}{\lambda y^\tyR.\ite{x-y}{\ite{y-x}{y}{f\,\mathsf{Next}}}{f\,\mathsf{Next}}}\right)0,
\end{align*}
where $\vdash\mathsf{Next}:\tyR$ is a primitive which cycles through machine-representable real numbers, based on some internal state (Speelpenning uses a random number generator in his example). So, given $r\in\Real$, $\mathsf{SlowId}(r)$ will eventually  output $r$ if this is machine-representable, and diverge otherwise.  When executed on an actual computer, every $r$ is necessarily representable and $\mathsf{SlowId}$ behaves like the identity. And yet, $\FBgrad{\mathsf{SlowId}\,x}(r)\reds 0$ for every machine-representable $r$, because $\mathsf{Next}$ is treated as a constant by AD transformations (there is no sensible alternative). Speelpenning concludes that, although $\mathsf{SlowId}$ is not a counterexample to Joss's theorem (or to ours), from the practical viewpoint AD fails \emph{everywhere} on it.

We believe that Speelpenning's example is misleading. The reason why $\mathsf{SlowId}$ is not a counterexample to \refth{Unsound} is not 
that the set where AD fails on $\mathsf{SlowId}$ is of measure zero because it coincides with the set of machine-representable reals; it is because $\sem[x^\tyR]{\mathrm{SlowId}(x)}$ is nowhere differentiable! That is, in the notations of \refth{Unsound}, we actually have $\mathrm{Fail}(\mathsf{SlowId}(x))=\emptyset$ because $\diffdom{\mathsf{SlowId}(x)}=\emptyset$, and this is because $\sem[x^\tyR]{\mathsf{SlowId}(x)}$ is defined only on a discrete set.

Anyway, if the set $R:=\{r_1,\ldots,r_c\}$ of machine-representable reals is finite, there are impractically large but straightforward programs achieving the intended behavior of Speelpenning's example. For instance, with some syntactic sugar, define
$$\mathsf{SlowId}'\quad:=\quad\lambda x^\tyR.\mathsf{if}\,x=r_1\,\mathsf{then}\,r_1\,\mathsf{else}\,(\ldots \mathsf{if}\,x=r_c\,\mathsf{then}\,r_c\,\mathsf{else}\,x\ldots).$$
We have that $\sem[x^\tyR]{\mathsf{SlowId}'(x)}$ is actually the identity function, so $\mathrm{Fail}(\mathsf{SlowId}'(x))=R$ and we may legitimately say that AD is wrong ``everywhere''. But this is just a giant-sized version of the program $\mathsf{SillyId}$ of the Introduction, and speaks more of the contrivance of toying with \PCF\ as a machine-executable language (which it is not) than of the value of our result. In general, questioning the significance of \refth{Unsound} on the grounds that computers are finite is like questioning Turing machines because of their infinite tape, or objecting to the whole idea of studying the asymptotic complexity of programs because in practice we only implement finite functions, whose asymptotic complexity is $O(1)$. In our opinion, there is little point in discussing this standpoint further.

More constructively, we may argue that the significance of \refth{Unsound} lies in the fact that it gives a finer bound than just measure zero. Let us call \PCF\ with only the ``mandatory'' primitive functions (constants, addition, multiplication) \emph{minimal \PCF}.  This is already enough to express all differentiable programming architectures based on neural networks with rectified linear unit activation.  Moreover, by using Taylor series, minimal \PCF\ may also approximate every analytic function with arbitrary precision, so it has a wide range of potential applications. \refth{Unsound} tells us that, if $f:\Real^n\pto\Real$ is a function definable in minimal \PCF, then the set of points on which $\nabla\!f$ exists but AD methods fail to compute it is contained in a countable union of algebraic varieties (\ie, zeros of polynomials). In particular, when $n=1$, this set is countable.

Zero sets of polynomial equations have been studied for literally millennia as part of the vast field known as algebraic geometry. Albeit extremely complex in general, many results exist on their structure, which may be described or approximated very accurately in several cases. It is not excluded that, in the future, these results may be leveraged to develop static analysis techniques (\eg\ type systems) for establishing the absence of errors in differentiable programs.

\paragraph*{From Gradients to Jacobians}
We limited our attention to programs implementing functions $\Real^n\pto\Real^m$ with $m=1$. The case $m>1$, in which one would speak of Jacobians rather than gradients, is conceptually identical. First of all, observe that a function $f:\Real^n\pto\Real^m$ may always be decomposed into $m$ functions $f_i:=\pr if:\Real^n\pto\Real$, so restricting primitives to one output causes no loss of generality and \reffig{AD} needs no modification. When $\Gamma\vdash M:\tyR^m$ with $\Gamma$ containing $n$ variables and $m>1$, what needs to be modified are the Equations \eqref{eq:grad_f} and \eqref{eq:grad_b}, which must yield an $m\times n$ matrix whose lines are built out of $m$ expressions of the form $\FBgrad{\pr i M}$. \refth{Sound} and \refth{Unsound} 
lift to this setting because the Jacobian is just the collection of the $m$ gradients.

\paragraph*{Internalizing AD}
The transformations of \reffig{AD}, and thus the definition of $\FBgrad{M}$ for a program $M$ (Equations~\eqref{eq:grad_f} and~\eqref{eq:grad_b}) are external to \PCF: a programmer may apply them for instance via a compiler, but the transformations are not accessible from within the program itself. For practical purposes, it would be interesting to have a programming language in which $\overrightarrow{\mathop{grad}}$ and $\overleftarrow{\mathop{grad}}$ are syntactic constructs, 
typed $x_1^\tyR,\ldots,x_n^\tyR\vdash \FBgrad{M}:\tyR^n$ whenever $x_1^\tyR,\ldots,x_n^\tyR\vdash M:\tyR$,
and with $\FBgrad{M}$ being executed in such a way as to reflect the application of 
AD to $M$. A naive way of achieving this would be to turn the definition of \reffig{AD} into rewriting rules; a more sophisticated approach was provided by Pearlmutter and Siskind for 
Stalin$\nabla$~\cite{PearlmutterSiskind}.

The errors introduced by AD have the important consequence that \emph{such an internalization is impossible without breaking the expected extensional semantics of programs}. This is because, as any denotational semantics, the standard semantics defined in \refsect{Terms} is contextual, in the sense that $\sem[]{M}=\sem[]{N}$ implies $\sem[]{\ctxt{M}}=\sem[]{\ctxt{N}}$ for any context $\Ctxt$. Now, referring to \eqref{ex:pcfterms}, we have $\sem[x^\tyR]{\mathsf{SillyId}(x)}=\sem[x^\tyR]{x}$ and yet we know that $\sem[x^\tyR]{\FBgrad{\mathsf{SillyId}(x)}}\neq\sem[x^\tyR]{\FBgrad{x}}$, because the latter two functions differ at $0$. So any denotational semantics of \PCF\ with ``internal AD'' needs to interpret $\mathsf{SillyId}$ and $\mathsf{Id}:=\lambda x^\tyR.x$ differently. ``Resource-sensitive'' semantics coming from linear logic do distinguish them, but it is easy to find other examples on which these semantics too fail. An example of denotational semantics which consistently works is the one introduced by Abadi and Plotkin~\cite{AbadiP20}, whose first order language does have internal AD. In that semantics, $\mathsf{Id}$ is the identity whereas $\mathsf{SillyId}$ is a ``partial identity'', undefined at $0$. It is not clear whether this extends to higher order (and thus to \PCF, similarly to~\cite{DiGianantonioEdalat}), but assuming it does, the meaning it gives to programs is somewhat unusual: for example, using the definition given in \eqref{ex:pcfterms}, $\mathsf{Floor}(r)$ would diverge whenever $r$ is an integer. This is a further drawback of partial conditional semantics, in addition to the one pointed out in Sect.~\ref{subsect:ad} (concerning example \eqref{ex:eqproj}).

Another loosely related remark worth making at this point is that, seen as a functional on the Scott domain $\Real_\bot\to\Real_\bot$, where $\Real_\bot$ is the ``flat'' Scott domain typically used for interpreting $\tyR$ as a ground type with total conditionals, the derivative operator $\partial$ is \emph{not} Scott continuous.\footnote{Given $A\subseteq\Real$, say that $\chi$ is the \emph{indicator function} of $A$ if $\chi(x)=0$ when $x\in A$ and $\chi(x)=\bot$ otherwise. For $n>0$, let $I_n:=\ ]-\infty,0]\ \cup\ ]\frac{1}{n},+\infty[$ and let $\varphi_n$ be the indicator function of $I_n$. Notice that each $\varphi_n$ is differentiable on $J_n:=I_n\setminus\{0\}$ and $\partial\varphi_n$ is the indicator function of $J_n$. As elements of the Scott domain $\Real_\bot\to\Real_\bot$, the functions $(\varphi_n)_{n>0}$ and $(\partial\varphi_n)_{n>0}$ form two directed chains whose suprema are the identically zero function and the indicator function of $\Real\setminus\{0\}$, respectively. In particular, $\partial\!\left(\sup_{n>0}\varphi_n\right)\neq\sup_{n>0}\partial\varphi_n$, so $\partial$ is not Scott continuous.} Although the technical consequences of this observation are not entirely clear, from an intuitive point of view it means that the derivative operator is ``not computable'', and therefore no recursive procedure (like those given by AD transformations) will be error-free.

\begin{acks}                            
We would like to thank A.~Brunel, T.~Ehrhard and B.A.~Pearlmutter for useful comments and discussions. This work was partially supported by \grantsponsor{ANR}{ANR}{https://anr.fr/} PRC project PPS (\grantnum{ANR}{ANR-19-CE48-0014}).
\end{acks}

\newpage
\bibliography{Biblio.bib}

\newpage
\appendix
\section{Appendix}
%

\section{Proofs of Section~\ref{sect:Prelim}}\label{app:prel}

\replemma{qopen}
{\it	Let $Q\subseteq\Real^n$ be quasiopen. Then:
	\begin{enumerate}
		\item there exists an open set $U$ and a quasivariety $Z$ such that $Q=U\cup Z$;
		\item $\border Q$ is a quasivariety. 
		Hence, in the above one may always take $U=\interior{Q}$ and $Z=\border{Q}$.
	\end{enumerate}
}
\begin{proof}
	Point \ref{qopen:decomp} is by structural induction on $Q$. If $Q$ is open, the result trivially holds with $U:=Q$ and $Z:=\emptyset$. If $Q=h^{-1}(0)$ for a basic function $h:\Real^n\to\Real$, then we may suppose $h$ to be not identically zero, for otherwise $Q=\Real^n$ and we fall into the previous case. The result then holds by definition with $U:=\emptyset$ and $Z:=Q$. If $Q=\bigcup_{i\in I}Q_i$, then by the induction hypothesis there exist open sets $(U_i)_{i\in I}$ and quasivarieties $(Z_i)_{i\in I}$ such that $$Q=\bigcup_{i\in I}U_i\cup Z_i=\bigcup_{i\in I}U_i\cup\bigcup_{i\in I}Z_i,$$ the first union being open and the second union being a quasivariety because it is countable and each $Z_i$ is a quasivariety. If $Q=Q'\cap Q''$, then by the induction hypothesis there exist open sets $U',U''$ and quasivarieties $Z',Z''$ such that
	$$Q=(U'\cup Z')\cap(U''\cup Z'')=(U'\cap U'')\cup(U'\cap Z'')\cup(Z'\cap U'')\cup(Z'\cap Z'').$$
	We may therefore conclude by letting $U:=U'\cap U''$, which is open, and $Z:=(U'\cap Z'')\cup(Z'\cap U'')\cup(Z'\cap Z'')$, which is a quasivariety because it is a finite union of subsets of quasivarieties.
	
	For point \ref{qopen:border}, we apply point \ref{qopen:decomp} and obtain $Q=U\cup Z$ with $U$ open and $Z$ a quasivariety.  
	Now, observe that, by definition of interior, $U\subseteq\interior{Q}$, therefore $$\border Q=Q\setminus\interior{Q}\subseteq Q\setminus U=(U\cup Z)\setminus U\subseteq Z,$$
	so $\border Q$ is a quasivariety 
	because $Z$ is.
\end{proof}

For the sake of proving points (\ref{qcont:basic}) and (\ref{qcont:pairing}) below, we exploit the fact that a clone on a set $A$ may be equivalently defined as a set $\clone P$ of functions $A^n\pto A$ (for varying $n$) such that:\footnote{For the acquainted reader, a(n abstract) clone is the same as a cartesian operad.}
\begin{itemize}
	\item $\clone P$ contains the identity and is closed under \emph{operadic composition}, meaning that if $f:A^k\pto A$ and $g_1:A^{n_1}\pto A,\ldots,g_k:A^{n_k}\pto A$ are in $\clone P$, then the function of type $A^{n_1+\cdots+n_k}\pto A$ defined by $(\seq a_1,\ldots,\seq a_k)\mapsto f(g_1(\seq a_1),\ldots,g_k(\seq a_k))$ for all $\seq a_i\in A^{n_i}$, is also in $\clone P$;
	\item if $f\in\clone P$ is of arity $n$, then for any permutation $\sigma$ on $\{1,\ldots,n\}$ the function $f_\sigma$ defined by $f_\sigma(x_1,\ldots,x_n):=f(x_{\sigma(1)},\ldots,x_{\sigma(n)})$ is also in $\clone P$;
	\item $\clone P$ contains all projections and if $f\in\clone P$ is of arity $n+2$, then the function $g$ defined by $g(x_1,\ldots,x_n,y):=f(x_1,\ldots,x_n,y,y)$ is also in $\clone P$.
\end{itemize}

\replemma{qcont}
{\it
	We have the following properties:
	\begin{enumerate}
		\item a function $f:\Real^n\to\Real^m$ is quasicontinuous iff for every $Q$ which is either open or the zero set of a basic function, $f^{-1}(Q)$ is quasiopen.
		\item Identities are cqc and cqc functions are stable under composition.
		\item Basic functions are cqc. In particular, projections are cqc.
		\item If $f:\Real^k\pto\Real^m$ and $g:\Real^k\pto\Real^n$ are cqc, then the function $\pair{f,g}:\Real^k\to\Real^{m+n}$ defined by $\pair{f,g}\!(z):=(f(z),g(z))$ if $z\in\dom f\cap\dom g$ and undefined otherwise, is also cqc.
	\end{enumerate}
}
\begin{proof}
	\begin{enumerate}
		\item The implication from left to right is by definition. From right to left, given $Q\subseteq\Real^m$ quasiopen, we prove that $f^{-1}(Q)$ is quasiopen by structural induction on $Q$. The case in which $Q$ is open or the zero set of a basic function are the hypothesis. If $Q=\bigcup_{i\in I}Q_i$ with $I$ countable, we have
		$$f^{-1}(Q)=\bigcup_{i\in I}f^{-1}(Q_i),$$
		which is quasiopen because it is a countable union of sets which the induction hypothesis guarantees us to be quasiopen. If $Q=Q'\cap Q''$, then
		$$f^{-1}(Q)=f^{-1}(Q')\cap f^{-1}(Q''),$$
		which again is quasiopen because it is an intersection of two sets which the induction hypothesis guarantees us to be quasiopen.

		\item Identities are trivially cqc. Let $f,g$ be composable cqc functions. Observe that quasicontinuous functions are obviously stable under composition. Now, we have $\id\times(g\circ f)=(\id\times g)\circ(\id\times f)$, which is quasicontinuous by hypothesis and the above remark.
		\item Let $g:\Real^n\to\Real$ be basic and let $Q\subseteq\Real^{m+1}$ be quasiopen, with $m\in\Nat$ arbitrary. By point \ref{qcont:decomp}, in order to prove that $\id_{\Real^m}\times g$ is quasicontinuous it is enough to show that $(\id_{\Real^m}\times g)^{-1}(Q)$ is quasiopen for any $Q$ open or zero set of a basic function. If $Q=U$ with $U$ open, by continuity of $\id_{\Real^m}\times g$ we have that $(\id_{\Real^m}\times g)^{-1}(U)$ is open, hence quasiopen. If $Q=h^{-1}(0)$ with $h:\Real^m\to\Real$ basic, then
		$(\id_{\Real^m}\times g)^{-1}(h^{-1}(0))=(h\circ(\id_{\Real^m}\times g))^{-1}(0)$, and we conclude because $h\circ(\id_{\Real^m}\times g)$ is basic, being the composition of basic functions.

		\item We start by making the following claims:
		\begin{enumerate}
			\item every permutation $\sigma:\Real^m\to\Real^m$ is cqc;
			\item the diagonal function $\delta_k:\Real^k\to\Real^{2k}$ such that, for all $x\in\Real^k$, $\delta(x)=(x,x)$, is cqc.
		\end{enumerate}
		Both claims follow from the observation that these functions are continuous and, if $h$ is basic, then $h\circ\sigma$ and $h\circ(\id\times\delta_k)$ are basic (and $\id\times\sigma$ is still a permutation), so we conclude by point \ref{qcont:decomp}.
		
		Let now $f$ and $g$ be as in the hypothesis. We have, for all $l\in\Nat$, 
		$$\id_{\Real^l}\times\pair{f,g}=\sigma\circ(\id_{\Real^{l+n}}\times f)\circ\sigma'\circ(\id_{\Real^{l+k}}\times g)\circ(\id_{\Real^l}\times\delta_k)$$
		where $\sigma:\Real^{l+m+n}\to\Real^{l+n+m}$ is the permutation such that $\sigma(x,y,z)=(x,z,y)$ for all $x\in\Real^l$, $y\in\Real^m$ and $z\in\Real^n$, and similarly for $\sigma':\Real^{l+k+n}\to\Real^{l+n+k}$, so the result follows from the above claims and point \ref{qcont:comp}.
	\end{enumerate}
\end{proof}

\section{Proofs of Section~\ref{sect:Soundness}}\label{app:sound}

\begin{lemma}\label{lemma:apx-reflexivity}
	For any simple term $t$, $\Xi\vdash t\apx t$, where $\Xi$ is the identity map on the free variables of $t$. 
\end{lemma}
\begin{proof}
	By structural induction on $t$. In the base case, we use the variable rule with $n=1$ and $p=x$, so that there is no projection in the conclusion.
\end{proof}

\begin{lemma}\label{lemma:apx-normalisation}
Let $t\apx M$. Then:
\begin{enumerate}
	\item $t$ normal implies that $M$ is a \emph{simple} normal form;
	\item conversely, $M$ closed normal form of type $\tyR^n$ implies $t$ normal.
\end{enumerate}
\end{lemma}
\begin{proof}
	By structural induction on $t\apx M$. Point 1 is immediate once observed that $M$ cannot be a conditional or a fixpoint if $t$ is a normal form. For point 2, the hypothesis of being closed and of type $\tyR^n$ for some $n\geq 0$ assures that being normal implies being an $n$-tuple of numerals. 
\end{proof}

\replemma{apx-programs}
{\it	Let $M$ and $t$ be normal forms of type $\FB[n]\tyR$ whose free variables have type belonging to $\{\FB[n]\tyR,\tyR,\tyR^n,\tyR^{\perp_n}\}$. If $t\apx M$ then $t=M$.
}
\begin{proof}
	Let $\tyR'$ be $\tyR^n$ or $\tyR^{\bot_n}$ according to whether $\FBSymb_n$ is $\FSymb_n$ or $\BSymb_n$, respectively. Let $t$ and $M$ be as in the hypothesis, let $\Gamma$ be the typing environment of the two terms. By \reflemma{apx-normalisation}, notice that $M$ is also a simple term, so in particular it cannot be a conditional. Also, because $M$ is of type $\FB[n]\tyR=\tyR\times\tyR'$, and $\Gamma$ does not have variables of type $A\to\FB[n]\tyR$ for some $A$, we have that $M$ is either a variable of type $\FB[n]\tyR$ or a product $\pair{M_1,M_2}$, with $M_1:\tyR$ and $M_2:\tyR'$. In the first case, $t\apx M$ implies $t=M$, while in the second case it implies $t=\pair{t_1,t_2}$ with $t_i\apx M_i$ and $t_1:\tyR$, $t_2:\tyR'$.

	Let us consider the case $\tyR'=\tyR^{\bot_n}$ (the case $\tyR'=\tyR^n$ is a simpler variant). Under this hypothesis, $M_2$ is either a variable in $\Gamma$ of type $\tyR^{\perp_n}$ or it  must be equal to $\lambda a^{\tyR}.M'$ for some $M'$ of type $\tyR^n$ under the context $\Gamma':=\Gamma,  a^{\tyR}$. In the first case, we trivially have $t_2 = M_2$. Otherwise $t_2=\lambda a^{\tyR}.t'$ with $t'\apx M'$ (notice that the fact that $t_2$ has exactly the same type as $M_2$ assures that the type of its abstracted variable $a$ is $\tyR$ and not some product $\tyR^k\apx \tyR$). 

	We will now infer $t'=M'$ from $t'\apx M'$ and the fact that both terms have type $\tyR^n$ under the context $\Gamma'$ defined above. The proof is by induction on $M'$. 

	If $M'=xM''$ with $x$ of type $\tyR^{\perp_n}$ and $M''$ of type $\tyR$, then $t'=xt''$ with $t''$ of type $\tyR$ and we may conclude by induction hypothesis on $t''\apx M''$.

	Otherwise, if $M'$ is not an application and $n=1$, then $M'$ is either a numeral, a variable of type $\tyR$ or some $\funsa(M_1',\dots,M_k')$. In the first two cases we have trivially $t'=M'$. In the third case we have $t'=\funsa(t_1',\dots,t_k')$, with $t_i'\apx M_i'$ and both of type $\tyR$. We also conclude by induction hypothesis. 

	Finally, if $M'$ is not an application and $n>1$, then $M'=\pair{M_1',\dots,M_n'}$ with each $M_i'$ of type $\tyR$, then $t'=\pair{t_1',\dots,t_n'}$ with each $t_i'$ of type $\tyR$ and $t_i'\apx M_i'$. Again we conclude by induction hypothesis. 
\end{proof}

\replemma{apx-substitution}
{\it	If $\Xi\vdash t\apx M$, then:
	\begin{enumerate}
	\item
	$\FB{\Xi}\vdash\FB{t}\apx\FB{M}$, where $\FBSymb$ turns any assignment $p^{A'}\apx x^A$ of $\Xi$ into $p^{\FB{A'}}\apx x^{\FB{A}}$.
	\item
	Suppose $\Xi=\Xi',x^A\apx x^A$. Then for every closed simple term $u$ of type $A$, we have $\Xi'\vdash t\isub{u}{x}\apx M\isub{u}{x}$.
	\end{enumerate}
}
\begin{proof}
	Item \ref{lemma:apx-substitution-D} is proved by induction on a derivation of $\Xi\vdash t\apx M$. We show the cases in which the last rule is a conditional or a function symbol, all other cases being similar or immediate. Let $M=\ite P{N_1}{N_2}$ and $t=\pi_i\!\pair{u,u}$ for some $i\in\{1,2\}$ and $u\apx N_i$. Then we have 
	$\FB M=\ite{\pi_1 \FB P}{\FB{N_1}}{\FB{N_2}}$ and $\FB t= \pi_i\!\pair{\FB u, \FB u}$ and we conclude because by induction hypothesis $\FB u\apx \FB{N_i}$. Let now $M=\phi(M_1,\dots,M_k)$ and $t=\phi(t_1,\dots,t_k)$, with $t_i\apx M_i$. Then $\FB{t}=u\FB{\seq t}$ and $\FB{M}=u\FB{\seq M}$ with $u$ the closed simple term defined in \refsubfig{fb-ground} and depending only on $\phi$. By \reflemma{apx-reflexivity}, $u\apx u$ and by induction hypothesis $\FB{t_i}\apx\FB{M_i}$ for every $i\leq k$. We may then conclude $\FB t\apx \FB M$.

	Item \ref{lemma:apx-substitution-sub} is also proved by induction on a derivation of $\Xi\vdash t\apx M$, using \reflemma{apx-reflexivity} in the base case. 
\end{proof}

\replemma{apx-substs}
{\it	We have that $\Xi\vdash w\apx M\isub{N}{x}$ is equivalent to
	\begin{itemize}
		\item $w=t\isub{u_1}{x_1}\dots\isub{u_n}{x_n}$, for some $n\in\Nat$ and terms $t,u_1,\ldots,u_n$,
		\item such that $\Xi,p^{A_1\times\dots\times A_n}\apx x^A\vdash t\isub{\pi_1 p}{x_1}\dots\isub{\pi_n p}{x_n}\apx M$, $p$ not free in $t$,
		\item and $\Xi\vdash u_i\apx N$ for all $1\leq i\leq n$.
	\end{itemize}
	In particular, $\Xi,p^{A_1\times\dots\times A_n}\apx x^A\vdash w\apx M$ implies $w=t\isub{\pi_1 p}{x_1}\dots\isub{\pi_n p}{x_n}$ for some $t$ not containing $p$ free. 
}
\begin{proof}
	By induction on $M$. The second claim follows by considering $M=M\isub xx$ and remarking that $\Xi,p^{A_1\times\dots\times A_n}\apx x\vdash u_i\apx x$ implies $u_i=\pi_i p$ by the variable rule of \reffig{TraceTerms}.
\end{proof}

\replemma{apx_expansion}
{\it Let $\sigma:R\red P$ be a rewriting step. For any $w\apx P$, there exist $t\apx R$ and $\xi:t\reds w$ such that $\xi\apx\sigma$. 
}
\begin{proof}
By case inspection. If $R=(\lambda x.M)N$, then $P=M\isub{N}{x}$ and by \reflemma{apx-substs}, $w$ must be of the form $t\isub{\seq u}{\seq x}$, with $p\apx x,\Xi\vdash t\isub{\boldsymbol \pi p}{\seq x}\apx M$ and $u_i\apx N$ for all $u_i$ in $\seq u$. We may then define $\xi:(\lambda p.t\isub{\boldsymbol\pi p}{\seq x})\pair{\seq u}\reds t\isub{\seq u}{\seq x}$. Notice that
$(\lambda p.t\isub{\boldsymbol\pi p}{\seq x})\pair{\seq u}\apx (\lambda x.M)N$, as well as $\xi\apx\sigma$.

The case $R=\fix f M$ is similar to the previous one and the other cases are simpler. 
\end{proof}

\begin{lemma}[uniformity]
	\label{lemma:uniformity}
	Let $\Xi,x^\tyR\apx x^\tyR\vdash t\apx R$, let $\sigma:R\red P$ be a rewriting step and let $\xi$ be a reduction sequence starting from $t$ and such that $\xi\apx\sigma$. Then, we have $\xi\isub{r}{x}\apx\sigma\isub{r}{x}$ for any ground variable $x$ and $r\in\Real$. 
\end{lemma}
\begin{proof}
By inspecting the cases of \refdef{ApxRed_rules}, using \reflemma{apx-substitution}.\ref{lemma:apx-substitution-sub} to obtain $\Xi\vdash t\isub{r}{x}\apx R\isub{r}{x}$.
\end{proof}	

\begin{lemma}[endpoints]
	\label{lemma:endpoints}
	Let $\xi:t\reds u$ and $\rho: M\reds N$. If $\xi\apx\rho$, then $t\apx M$ and $u\apx N$. Moreover, if $u$ is a normal form, then so is $N$.
\end{lemma}
\begin{proof}
	The first implication is proved by induction on $\rho$. If the length is one, then it is a simple consequence of Definitions~\ref{def:ApxRed_rules} and~\ref{def:ApxRed_steps}. In particular, in the case $\rho=(\hCtxt,R,P)$ with the hole of $\hCtxt$ in the guard of a conditional, then notice that $t=u\apx \hctxt R$ implies $t=u\apx \hctxt P$, as $\apx$ does not depend on the guards. 

	The second implication is a consequence of \reflemma{apx-normalisation}.
\end{proof}

\begin{lemma}\label{lemma:expd-substitution}
Let $x:A,\Gamma\vdash M:B$, $\Gamma\vdash N:A$. Then $M\expd M'$ and $N\expd N'$ implies $M\isub Nx\expd M'\isub{N'}{x}$.
\end{lemma}
\begin{proof}
By structural induction on the derivation of $M\expd M'$. In the case $M=\phi(M_1,\dots,M_k)$ then $M'=u M_1'\dots M_k'$, with some \emph{closed} simple $u$. So in particular, $u\isub{N'}{x}=u$. We then conclude immediately by the induction hypothesis on the various $M_i\expd M_i'$.
\end{proof}

\begin{lemma}
\label{lemma:expd_closed_nf}
If $\Gamma \vdash M:A$ and $M\expd M'$, then $\FB \Gamma \vdash M':\FB A$. In particular, if $M$ is closed, then $M'$ is also closed. Furthermore, if $M$ is a closed normal form of type $\tyR^n$, then $M'$ is also a normal form.
\end{lemma}
\begin{proof}
By induction on  $M\expd M'$ one can check that  $\FB \Gamma \vdash M':\FB A$. 

The last statement follows because the closed normal form of type $\tyR^n$ are tuples of numerals. In this case $M'$ must be a tuple of simple normal forms of the form $\pair{r,t}$ (notice that if $k=0$ in the first rule of \reffig{Expd}, then there is no $\beta$-redex at the right-hand side of $\expd$ in the conclusion and $t$ is a closed term, although it might be not a numeral, in case of type $\tyR^{\perp}$).
\end{proof}

\replemma{expd_ad}
{\it	For every program $x_1^{\tyR},\dots,x_n^{\tyR}\vdash M:\tyR$, $\seq r\in\Real^n$ and sequence $\seq u=u_1,\ldots,u_n$ of simple closed normal forms of suitable type, we have $M\isub{\seq r}{\seq x}\expd \FB M\isub{\pair{\seq r,\seq u}\!}{\seq x}$, where by $\{\pair{\seq r,\seq u}\!/\seq x\}$ we mean $\{\pair{r_1,u_1}/x_1\}\cdots\{\pair{r_n,u_n}/x_n\}$.
}
\begin{proof}
	First of all, a straightforward induction establishes that $M\expd \FB M$. Second, notice that $r_i\expd\pair{r_i,u_i}$ for any numeral $r_i$ and any simple closed normal form $u_i$ of suitable type, by the first rule of \reffig{Expd} with $k=0$. We then apply \reflemma{expd-substitution} to conclude.
\end{proof}

\replemma{expd-reduction}
{\it
Let $M\expd M'$ and $M\red N$, then there exists $N'$ such that $M'\reds N'$ and $N\expd N'$. 
}
\begin{proof}
Let $(\Ctxt,R,P)$ be the reduction step $M\red N$. The proof is an easy induction on $\Ctxt$. The only non-trivial part is the base of the induction, \ie\ $\Ctxt=\ctxthole$, in which the reasoning splits following \reffig{reduction}:
\begin{itemize}
\item the case of a $\beta$-reduction is a consequence of \reflemma{expd-substitution}. 

\item If $R=\phi(\seq r)$, then $M'\reds \pair{\phi(\pi_1\seq L'), t\isub{\seq L'}{\seq z}}$ for some $\seq L'$ of the same length as $\seq r$ such that $r_i\expd L'_i$ for all $i$. Notice that $r_i\expd L'_i$ implies that $\pi_1L'_i\red r_i$ as well as $L'_i$ is a simple closed normal form, so in particular $t\isub{\seq L'}{\seq z}$ is normalizable by \refprop{sn}. We therefore have:  $\pair{\phi(\pi_1\seq L'), t\isub{\seq L'}{\seq z}}\reds\pair{\sem[]\phi\!(\seq r),t'}$ with $t'$ a simple closed normal form, and we conclude by taking $N':=\pair{\sem[]\phi\!(\seq r),t'}$.

\item If $R=\ite{r}{L_1}{L_2}$ and $N=L_j$ for some $j\in\{1,2\}$, then we have $M'=\ite{\pi_1\pair{r,t}}{L_1'}{L_2'}$ for some closed simple normal form $t$, and $L_j\expd L_j'$. We conclude as $M'\reds L_j'$. 

\item The other redexes (products and fixpoints) are immediate. 
\end{itemize}
\end{proof}

\replemma{extrusion}
{\it
	Let $M\expd M'$, $t\expd t'$, $t'\apx M'$. Let $\sigma:M\red M_1$ be a head reduction step and moreover let $\xi:t\reds t_1$ be such that $\xi\apx\sigma$ (so in particular $t\apx M$ and $t_1\apx M_1$). Then there exist $M'_1,t'_1$ such that the following relations hold:
	\begin{center}
	\pgfmathsetmacro{\tx}{0.7}
	\pgfmathsetmacro{\ty}{0.8}
	\begin{tikzpicture}[every node/.style={circle,inner sep=1pt}]
	\node (t) at (0,0) {$t$};
	\node (M) at (0,1.2) {$M$};
	\node (t1) at (3,0) {$t_1$};
	\node[inner sep=0pt] (M1) at (3,1.2) {$M_1$};
	\draw[->] (t.east) -- node (xi) [ fill=white, inner sep=0pt, anchor=center, pos=0.5, font=\scriptsize] {$\xi$} node[at end, above, font=\footnotesize] {$*$} (t1.west);
	\draw[->] (M.east) -- node (sigma) [ fill=white, inner sep=0pt, anchor=center, pos=0.5, font=\scriptsize] {$\sigma$}  (M1.west);
	\draw[densely dotted] (sigma.south) -- node [fill=white, inner sep=0pt, anchor=center, pos=0.5, font=\scriptsize] {$\apx$} (xi.north);
	\node (t') at ($(t)+(\tx,\ty)$) {$t'$};
	\node (M') at ($(M)+(\tx,\ty)$) {$M'$};
	\node (t1') at ($(t1)+(\tx,\ty)$) {$t_1'$};
	\node[inner sep=0pt] (M1') at ($(M1)+(\tx,\ty)$) {$M_1'$};
	\draw[->] (t'.east) --  node[at end, above, font=\footnotesize] {$*$} (t1'.west);
	\draw[->] (M'.east) -- node[at end, above, font=\footnotesize] {$*$}   (M1'.west);
	\draw[densely dotted] (M'.south) -- node [fill=white, inner sep=0pt, anchor=center, pos=0.5, font=\scriptsize] {$\apx$} (t'.north);
	\draw[densely dotted] (M1'.south) -- node [fill=white, inner sep=0pt, anchor=center, pos=0.5, font=\scriptsize] {$\apx$} (t1'.north);
	\draw[densely dotted] (t.north east) -- node [fill=white, inner sep=0pt, anchor=center, pos=0.5, font=\scriptsize] {$\expd$} (t'.south west);
	\draw[densely dotted] (t1.north east) -- node [fill=white, inner sep=0pt, anchor=center, pos=0.5, font=\scriptsize] {$\expd$} (t1'.south west);
	\draw[densely dotted] (M.north east) -- node [fill=white, inner sep=0pt, anchor=center, pos=0.5, font=\scriptsize] {$\expd$} (M'.south west);
	\draw[densely dotted] (M1.north east) -- node [fill=white, inner sep=0pt, anchor=center, pos=0.5, font=\scriptsize] {$\expd$} (M1'.south west);
	\end{tikzpicture}
	\end{center}
}
\begin{proof}
By \refdef{ApxRed_steps}, $\sigma=\hctxt{\sigma_0}$ for some head context $\hCtxt$ and reduction step $\sigma_0:R\red P$. The proof is by induction on $\hCtxt$. The case $\hCtxt=\ctxthole$ splits following \reffig{reduction}. 
\begin{itemize}
\item Let $\sigma$ be $M = (\lambda x.L)N\red L\isub{N}{x}=M_1$, so that $\xi$ is the reduction $ t=(\lambda p.w\isub{\boldsymbol\pi p}{\seq x})\pair{\seq u}\reds w\isub{\seq u}{\seq x}=t_1$. Then $M'=(\lambda x.L')N'$ with $L\expd L'$ and $N\expd N'$, and $t'=(\lambda p.\overline w')\pair{\seq u'}$ with $w\isub{\boldsymbol\pi p}{\seq x}\expd\overline w'$ and $\pair{\seq u}\expd \pair{\seq u'}$. Moreover, since $t'\apx M'$, by \reflemma{apx-substs}, we have $\overline w' = w'\isub{\boldsymbol\pi p}{\seq x}$, with $w'\isub{\boldsymbol\pi p}{\seq x}\apx L'$, $p$ not free in $w'$, and $u_i'\apx N'$. Moreover, by induction on $w$, one can infer from $w\isub{\boldsymbol\pi p}{\seq x}\expd\overline w'$ that actually $w\expd w'$. 

We can then define: $M'_1 = L'\isub{N'}{x}$ and $t'_1 = w'\isub{\seq u'}{\seq x}$. Clearly $M'\red M_1'$, as well as $t'\reds t_1'$. Moreover, since $L\expd L'$ and $N\expd N'$, we have by \reflemma{expd-substitution} that $M_1\expd M_1'$. Similarly, from $\pair{\seq u}\expd \pair{\seq u'}$ and $w\expd w'$, we have $t'\expd t'_1$. Finally, \reflemma{apx-substs} gives us $t'_1\apx M_1'$.

\item Let $\sigma$ be the step $M=\funsa(\seq r)\red \sem[]{\funsa}\!(\seq r)=M_1$, so that $\xi=\sigma$, $t=M$, $t_1=M_1$, and we also have $M'=(\lambda \seq z^{\FB\tyR}.\pair{\funsa(\pi_1\seq z),w})\pair{\seq r, \seq u}$ and $t'=(\lambda \seq z^{\FB\tyR}.\pair{\funsa(\pi_1\seq z),\overline w})\pair{\seq r, \seq{\overline u}}$ with $ \overline w\apx w$ and $ \overline u_i\apx u_i$. Furthermore, notice that $w, \overline w$ $u_i,\overline u_i$ are normal forms of type $\tyR^{(\perp)}$ having only free variables of type $\FB\tyR$, so we may apply \reflemma{apx-programs} and infer $w=\overline w$ and $u_i=\overline u_i$. 

Let $w'$ be the normal form of $w\isub{\pair{\seq r,\seq u}\!}{\seq z}$ (which exists by \refprop{sn} and is unique by \refprop{confluence}) and define $M'_1:=t_1':=\pair{\sem[]{\funsa}\!(\seq r),w'}$. Clearly, $M'\reds M'_1$ and $t'\reds t'_1$, as well as $t_1'\apx M'_1$ by Lemma~\ref{lemma:apx-reflexivity}. Moreover, since $w'$ is a closed simple normal form of type $\tyR^{(\perp)}$, we have $M_1\expd M_1'$ as well as $t_1\expd t_1'$. 

\item Let $\sigma$ be $M=\ite r{L_1}{L_2}\red L_i=M_1$ with $i\in\{1,2\}$ depending on whether $r\leq 0$ or $r>0$. Then $\xi$ is the reduction $t=\pi_i\!\pair{u,u}\red u=t_1$ with $u\apx L_i$. Moreover, $M'=\ite{\pi_1\!\pair{r,w}}{L_{1}'}{L_2'}$ with $L_j\expd L_j'$ for all $j\in\{1,2\}$ and $t\expd t'$ gives $t'=\pi_i\!\pair{u',u'}$ with $u\expd u'$, while $t'\apx M'$ gives $u'\apx L_i'$. 

Let $M'_1:= L_i'$ and $t'_1:=u'$. Clearly $M'\reds M'_1$ and $t'\red u'$. We have also $t'_1\apx M'_1$ and $t_1\expd t_1'$.

\item The case of $\sigma$ being a projection reduction step is immediate.
\item The case of $\sigma$ being a fixpoint reduction step is analogous to the $\beta$-step.
\end{itemize}	

Of the other induction cases, the only subtle one is $\hCtxt=\ite{\overline\hCtxt}{N_1}{N_2}$. Under this hypothesis, the reduction $\xi$ is empty and $t_1=t=\pi_i\!\pair{u,u}$ with $u\apx N_i$ for some $i\in\{1,2\}$, as well as $M'=\ite{\pi_1\overline{M}'}{N_1'}{N_2'}$ with $\ctxt[\overline\hCtxt]{R}\expd \overline M'$, $N_1\expd N'_1$, $N_2\expd N'_2$ and $t'=\pi_i\!\pair{u',u'}$ with $u\expd u'$ and $u'\apx N_i'$ (notice that the index $i$ of the projection is the same in $t$ and $t'$ because $t\expd t'$). By Lemma~\ref{lemma:expd-reduction}, $\overline M'\reds L$ such that $\ctxt[\overline\hCtxt]{P}\expd L$. We can then conclude by setting $M'_1=\ite{\pi_1L}{N_1'}{N_2'}$ and $t'_1=t'$. 

All of the remaining cases follow the same pattern. For instance, let $\hCtxt=\funsa(L_1,\ldots,\overline\hCtxt,\ldots,L_k)$. Then, $\sigma=\funsa(L_1,\ldots,\overline\sigma,\ldots,L_k)$, for $\overline\sigma$ the head reduction step $\ctxt[\overline\hCtxt]{\sigma_0}$, and $\xi=\funsa(w_1,\ldots,\overline\xi,\ldots,w_k)$, with $\overline\sigma\apx\overline\xi$ and $\funsa(L_1,\ldots,\ctxthole,\ldots,L_k)\apx\funsa(w_i,\ldots,\ctxthole,\ldots,w_k)$. Let us denote by $\overline M:=\ctxt[\overline\hCtxt]{R}$ and 
$\overline M_1:=\ctxt[\overline\hCtxt]{P}$ the source and the target of $\overline\sigma$, respectively. Similarly, let us denote by $\overline t$ and $\overline t_1$ the source and the target of $\overline\xi$. We have $M'=u L_1'\cdots \overline M'\cdots L_k'$ with $u$ a closed simple normal form and $L_i\expd L'_i$ and $\overline M\expd \overline M'$, and similarly $t'=u' w_1'\cdots \overline t'\cdots w_k'$ with $u'$ a closed simple normal form and $w_i\expd w'_i$ and $\overline t\expd \overline t'$. Moreover, we have also $u\apx u'$ and $w_i'\apx L_i'$ and $\overline t'\apx \overline M'$. We can then apply the induction hypothesis on the quadruple $\overline\sigma$, $\overline\xi$, $\overline M'$, $\overline t'$, obtaining $\overline M'_1$ and $\overline t_1'$. We conclude by setting $M_1':=u L_1'\cdots \overline M'_1\cdots L_k'$ and $t'_1:=u' w_1'\cdots \overline t'_1\cdots w_k'$. Notice that $M'\reds M_1'$ (as well as $t'\reds t_1'$) even if this reduction is not under a head context.
\end{proof}

\replemma{extrusion_iterate}
{\it
	Let $M\expd M'$, $t\expd t'$, $t\trapx M$ and $t'\apx M'$, for $t$ and $M$ closed terms both of type $\tyR^n$, for some $n\geq 0$. Then, $t'\reds t''$ and $M'\reds M''$ with $t''$ and $M''$ normal such that $t''\apx M''$. 
}
\begin{proof}
By \refdef{TrApx}, there exist a normalizing reduction $\xi$ starting from $t$ and a normalizing reduction $\rho$ starting from $M$, such that $\xi\apx\rho$. The proof is by induction on the length of $\rho$. 

If the length is $0$, then $M$ is a closed normal form of type $\tyR^n$, so by \reflemma{apx-normalisation} also $t$ is a ground normal form of type $\tyR^n$. We can apply \reflemma{expd_closed_nf} and conclude that $t'$ and $M'$ are normal.

Otherwise, let $\xi=\upsilon\xi'$ and $\rho=\sigma\rho'$ such that $\upsilon:t\reds t_1$, $\sigma:M\red M_1$ and $\upsilon\apx\sigma$, $\xi'\apx\rho'$. In particular, $t_1\trapx M_1$. By \reflemma{extrusion}, there exist $M_1',t_1'$ such that $M_1\expd M_1'$, $t_1\expd t_1'$, $t_1'\apx M_1'$ and $M'\reds M_1'$, $t'\reds t_1'$. We can thus conclude by induction on the quadruple $M_1,M_1',t_1,t_1'$.
\end{proof}

\section{Proofs of Section~\ref{sect:Unsoundness}}\label{app:unsound}

The standard semantics of \PCF\ deals with recursive definitions (\ie, fixpoints) by considering them as suprema of finitary (\ie, fixpoint-free) approximations. In order to apply this idea to our setting we need to follow a more syntactic approach than the usual one, based on Scott domains. Indeed, our definition of stable point fundamentally uses traces (\refdef{TrApx}), and the latter are defined in terms of reduction sequences, which are abstracted away in Scott domains. We therefore introduce a further relation $\fixapx$ which approximates fixpoints within the syntax (\refprop{appr}) and which interacts well with the trace relation (Propositions~\ref{prop:FixApxComp-terms} and~\ref{prop:FixApxComp}).

\begin{definition}
	\label{def:fixapprox}
	The \emph{approximation relation $\fixapx$} between terms is defined by the following rules:
	$$
	\infer{x\fixapx x}{}
	\qquad
	\infer{\funsa(M_1,\dots,M_k)\fixapx \funsa(M_1',\dots,M_k')}{M_1\fixapx M_1'&\dots& M_k\fixapx M_k'}
	\qquad
	\infer[n\leq m]{\fix[n]{f}{M}\fixapx\fix[m]{f}{M'}}{M\fixapx M'}
	$$
(where $m,n\in\Nat\cup\{\infty\}$)
and all other rules lifting the relation homomorphically.
\end{definition}

\begin{lemma}[substitution]
	\label{lemma:fixapx_subst}
	We have:
	\begin{enumerate}
		\item\label{fixapx_subst:red} if $M\fixapx M'$ and $Q\fixapx Q'$, then $M\isub{Q}{x}\fixapx M'\isub{Q'}x$; 
		\item\label{fixapx_subst:exp} if $M\fixapx N'\isub{Q'}{x}$, then $M=N\isub{Q}{x}$ for some $N\fixapx N'$ and $Q\fixapx Q'$.
	\end{enumerate}
\end{lemma}
\begin{proof}
	Both points are by structural induction, on $M$ for point \ref{fixapx_subst:red} and on $N'$ for \ref{fixapx_subst:exp}.
\end{proof}

\begin{lemma}[monotonicity]
	\label{lemma:FixApxMono}
	If $M\fixapx M'$ and $M\red N$, then $M'\red N'$ such that $N\fixapx N'$.
\end{lemma}
\begin{proof}
By structural induction on $M$. In case $M=(\lambda x.P) Q\red P\isub{Q}{x}$, we have that $M'=(\lambda x.P')Q'$, with $P\leq P'$ and $Q\fixapx Q'$. So $M'\red P'\isub{Q'}{x}$ and we may conclude by \refsublemma{fixapx_subst}{red}.

Let $M=\fix[n+1]{f}{P}$, with $n\in\Nat\cup\{\infty\}$ and $\infty+1=\infty$. Then, $M\red P\isub{\lambda x.(\fix[n] fP)x}f$. Notice that $M'=\fix[m+1]{f}{P'}$, with some $m\geq n$ and $P'$ such that $P\fixapx P'$. We have $M'\red P'\isub{\lambda x.(\fix[m] fP')x}f$. Since $\lambda x.(\fix[n] fP)x\fixapx \lambda x.(\fix[m] fP')x$, we may conclude by \refsublemma{fixapx_subst}{red}.

The other cases are immediate. 
\end{proof}

\begin{lemma}[continuity]
	\label{lemma:FixApxCont}
	If $M'\red N'$ and $N\fixapx N'$, then there exists $M\fixapx M'$ such that $M\red N$. Moreover, if $N$ has no occurrence of $\mathrm{fix}_\infty$ apart from $\Omega$, then neither does $M$.
\end{lemma}
\begin{proof}
	By structural induction on $M'$. The cases $M'=(\lambda x.P')Q'$ or $M'=\fix[k+1]{f}{P'}$ are similar to the analogous cases in the proof of Lemma~\ref{lemma:FixApxMono}, using \refsublemma{fixapx_subst}{exp} instead of \refsublemma{fixapx_subst}{red}. The fact that $M$ has no occurrence of $\mathrm{fix}_\infty$ apart from $\Omega$ is a trivial consequence of supposing this for $N$. 

	If $M'=\ite{r}{L_1'}{L_2'}$, then $N'=L'_i$ for some $i\in\{1,2\}$. Then we can write $L_i:=N$ and chose an arbitrary $L_{3-i}\fixapx L_{3-i}'$ with no occurrence of $\mathrm{fix}_\infty$ apart from $\Omega$ and set $M:=\ite r{L_1}{L_2}$. 
\end{proof}

\begin{proposition}[fixpoints are suprema of approximations]
	\label{prop:appr}
	Given a program $\Gamma\vdash\fix fL:\tyR$, we have 
	\[
		\sem{\fix fL}=\sup_{k<\infty}\sem{\fix[k] fL},
	\]
	\ie, for every $\seq r$, $\sem{\fix fL}(\seq r)= q$ iff there is $k<\infty$, $\sem{\fix[k] fL}(\seq r)=q$.
\end{proposition}
\begin{proof}
	The right-to-left implication is an immediate consequence of a stronger statement:
	\begin{itemize}
	\item[($\star$)] given two programs $M,M'$ such that $M\fixapx M'$, $\sem M(\seq r)=q$ implies $\sem{M'}(\seq r)=q$.
	\end{itemize}
	In fact, suppose that there is a reduction $M\isub{\seq r}{\seq x}\reds q$, then by iterating \ref{lemma:FixApxMono}, we get $M'\isub{\seq r}{\seq x}\reds N'$ with $q\fixapx N'$, which implies $N'=q$. 

	Let us now prove the left-to-right implication. Suppose $\sem{\fix fL}(\seq r)= q$, \ie, there is a reduction $\fix fL\isub{\seq r}{\seq x}\reds q$. Since $q\fixapx q$, by iterating \ref{lemma:FixApxCont} we get $M\fixapx\fix fL\isub{\seq r}{\seq x}$ such that $M\reds q$. Moreover, since $q$ has no occurrence of $\mathrm{fix}_\infty$, then $M$ has no occurrence of $\mathrm{fix}_\infty$ apart from $\Omega$, so that $M=\fix[k] fL'$ for some $k<\infty$ and $L'\fixapx L$. We then conclude by claim ($\star$), as $M\fixapx \fix[k] fL$. 
\end{proof}

\begin{proposition}[composition with pre-trace]
	\label{prop:FixApxComp-terms}
	If $t\apx M\fixapx M'$, then $t\apx M'$.
\end{proposition}
\begin{proof}
	By induction on a derivation of $t\apx M$.
\end{proof}

\begin{lemma}
	\label{lemma:onesteptraceapprox}
	Let $M\fixapx M'$ and $\sigma: M\red N$ be a reduction step. If $\xi\apx\sigma$, then there exists $\sigma' :M' \red N'$ s.t.  $\xi\apx\sigma'$ and $N\fixapx N'$.
\end{lemma}
\begin{proof}
	Let $\sigma=(\hCtxt,R,P)$, so $M=\ctxt[\hCtxt] R$ and $N=\ctxt[\hCtxt] P$. Notice that $\xi\apx\sigma$ implies that $\hCtxt$ is a head context. The proof is by induction on $\hCtxt$. 

	If $\hCtxt=\ctxthole$, then we follow the cases of \reffig{reduction}.
	If $M=R=(\lambda x.L_1)L_2$ and $N=P=L_1\isub{L_2}{x}$, then $M\fixapx M'$ gives us $M'=(\lambda x.L_1')L_2'$ with $L_i\fixapx L_i'$. 
	Define $N':=L_1'\isub{L_2'}{x}$ and $\sigma':N\red N'$. \refsublemma{fixapx_subst}{red} gives us $N\fixapx N'$. By definition, $\xi$ must be of the form:
	\[
		(\lambda p.t\isub{\boldsymbol\pi p}{\seq x})\!\pair{\seq u}\red t\isub{\boldsymbol\pi\!\pair{\seq{u}}}{\seq x}\reds t\isub{\seq u}{\seq x} 
	\]
	where $p\apx x\vdash t\isub{\boldsymbol\pi p}{\seq x}\apx L_1$ and $u_i\apx L_2$ for all $u_i$ in $\seq u$. By Proposition~\ref{prop:FixApxComp-terms}, we have $t\isub{\boldsymbol\pi p}{\seq x}\apx L_1'$ and $u_i\apx L_2'$, so $\xi\apx\sigma'$.

	The case $M=\fix[k+1]{f}{P'}$ is similar to the previous one. All other cases are simpler and do not need \reflemma{fixapx_subst}.

	If $\hCtxt=\ite{\overline\hCtxt}{L_1}{L_2}$, then $\xi$ is the empty reduction. Moreover, $M\fixapx M'$ gives us $M'=
	\ite{\overline M'}{L_1'}{L_2'}$ with $\ctxt[\overline\hCtxt]{R}\fixapx \overline M'$ and $L_i\fixapx L_i'$.  We apply Lemma \reflemma{FixApxMono} to $\ctxt[\overline\hCtxt]{R}\fixapx \overline M'$ and $\ctxt[\overline\hCtxt]{R}\red \ctxt[\overline\hCtxt]{P}$, thus getting $\overline\sigma':\overline M'\red \overline N'$ with $\ctxt[\overline\hCtxt]{P}\fixapx \overline N'$. We then define $\sigma' := \ite{\overline\sigma'}{L_1'}{L_2'}$. Notice that $\xi\apx\sigma'$ as this latter fires a redex in the guard of a conditional. 

	If $\hCtxt=\overline\hCtxt L$, then $\xi=\overline\xi\!\pair{\seq u}$ with each $u_i\apx L$ and $\overline\xi\apx (\overline\hCtxt, R,P)$. Moreover, $M\fixapx M'$ gives us $M'= \overline M' L'$ with $\ctxt[\overline\hCtxt]{R}\fixapx \overline M'$ and $L\fixapx L'$. So we apply the induction hypothesis and get $\overline\sigma':\overline M'\reds \overline N'$ with $\ctxt[\overline \hCtxt]{P}\fixapx \overline N'$. We then define $\sigma':=\sigma L$ and $N':=\overline N'L$.

	All other induction cases are similar to the previous one. 
\end{proof}

\begin{proposition}[composition with trace]
	\label{prop:FixApxComp}
	Let $M\fixapx M'$ and $\rho: M\reds N$, then $\xi\apx\rho$ implies that there exists $\rho' :M' \reds N'$ s.t.  $\xi\apx\rho'$ and $N\fixapx N'$.
\end{proposition}
\begin{proof}
	By induction on the length of $\rho$. The base of the induction is given by \reflemma{onesteptraceapprox}.
\end{proof}


\replemma{stab}
{\it	We have the following inclusions, where the terms appearing in the statements are supposed to be typed under a ground context $\Gamma$.
	\begin{enumerate}
		\item Let $\funsa$ be a function symbol of arity $k$ and let $M_1,\dots,M_k$ be programs, then  $\bigcap_i\Stab{M_i}\subseteq\Stab{\funsa(M_1,\dots,M_k)}$.

		\item
		Let $R\red P$ be one of the rewriting rules in \reffig{reduction}, with $R,P$ of type $B_1\to\cdots\to B_p\to\tyR^m$. For all $1\leq i\leq p$, let $\Gamma\vdash L_i:B_i$. Then, for all $1\leq j\leq m$, $\Stab{\pi_j(P\seq L)}\subseteq\Stab{\pi_j(R\seq L)}$.

		\item Let $P:\tyR$ and $M_1,M_2$ be of type $B_1\to\cdots\to B_p\to\tyR^m$. For all $1\leq i\leq p$, let $\Gamma\vdash L_i:B_i$. Let $X_1:=\sem P^{-1}(\Real_{\leq 0})$ and $X_2:=\sem P^{-1}(\Real_{>0})$. 
		Then, for all $1\leq j\leq m$ and all $l\in\{1,2\}$, we have that $\Stab{\pi_j(M_l\seq L)}\cap\interior{X_l} \subseteq \Stab{\pi_j(\ite P{M_1}{M_2}\seq L)}$.
		
		\item 
		Let $B=B_1\to\cdots\to B_p\to\tyR^m$, let $\Gamma\vdash L_0:A$ and $\Gamma\vdash L_i:B_i$ for all $1\leq i\leq p$. For all $k\in\Nat$ and $1\leq j\leq m$, $\Stab{\pi_j((\fix[k]{f^{A\to B}}{M})\seq L)}\subseteq\Stab{\pi_j((\fix{f^{A\to B}}M)\seq L)}$, where $\seq L:=L_0,L_1,\ldots,L_p$.
	\end{enumerate}
}
\begin{proof}
	\begin{itemize}
		\item[\ref{stab:map}.]
		Let $\seq r\in\bigcap_i\Stab{M_i}$. By definition, we have for each $i$, some $\varepsilon_i>0$ and $t_i\apx M_i$ such that $\forall \seq r'\in \Ball{\varepsilon_i}{\seq r}$, $t_i\isub{\seq r'}{\seq x}\trapx M_i\isub{\seq r'}{\seq x}$. Define $\varepsilon:=\min_i(\varepsilon_i)$, $u:=\phi(t_1,\dots,t_k)$ and notice that $u\apx\phi(M_1,\dots,M_k)$. Let us prove $u\isub{\seq r'}{\seq x}\trapx \phi(M_1,\dots,M_k)\isub{\seq r'}{\seq x}$, for every $\seq r'\in \Ball{\varepsilon}{\seq r}$.

		For each $i$, by hypothesis there exist two normalizing reductions $\xi_i$ and $\rho_i$ from respectively $t_i\isub{\seq r'}{\seq x}$ and $M_i\isub{\seq r'}{\seq x}$ such that $\xi_i\apx\rho_i$ (notice that the choice of these reductions may depend on $\seq r'$). In particular, this means that both $\xi_i$, $\rho_i$ end into a numeral $p_i$. Let  $\sigma: \phi(p_1,\dots,p_k)\red \sem\phi(p_1,\dots,p_k)$ and notice that
		\begin{align*}
			& \phi(\xi_1,t_{2}\isub{\seq r'}{\seq x},\dots,t_k\isub{\seq r'}{\seq x})\cdots\phi(p_1,\dots,p_{k-1},\xi_k)\sigma\\
			\apx\ &
			\phi(\rho_1,M_{2}\isub{\seq r'}{\seq x},\dots,M_k\isub{\seq r'}{\seq x})\cdots\phi(p_1,\dots,p_{k-1},\rho_k)\sigma.
		\end{align*}
		This proves $u\isub{\seq r'}{\seq x}\trapx \phi(M_1,\dots,M_k)\isub{\seq r'}{\seq x}$. 

		\item[\ref{stab:red}.]
		Let $\sigma:R\red P$ be one of the rewriting rules of \reffig{reduction} and suppose $\seq r\in\Stab{\pi_j(P\seq L)}$.  By definition, there exist $\varepsilon>0$ and $u\apx \pi_j(P\seq L)$ such that $\forall  \seq r'\in\Ball{\varepsilon}{\seq r}$, we have $u\isub{\seq r'}{\seq x}\trapx \pi_j(P\seq L)\isub{\seq r'}{\seq x}=\pi_j(P\isub{\seq r'}{\seq x}\seq L\isub{\seq r'}{\seq x})$. 

		This means that $u=\pi_j(u'\!\pair{\seq u''_1}\cdots\pair{\seq u''_p})$ with $u'\apx P$ and, for every $1\leq i\leq p$ and every element $u''_{i,h}$ of $\seq u''_i$, we have $u''_{i,h}\apx L_i$. Moreover, fixing  $\seq r'\in\Ball{\varepsilon}{\seq r}$, there are two normalizing reduction sequences $\nu\apx\rho$ from $u\isub{\seq r'}{\seq x}$ and $\pi_j(P\isub{\seq r'}{\seq x}\seq L\isub{\seq r'}{\seq x})$ respectively. In what follows, we will abbreviate the sequence of successive applications $\pair{\seq u_1''}\cdots\!\pair{\seq u_p''}$ as $\seq u''$.

		By \reflemma{apx_expansion}, there is $t'\apx R$ and $\xi:t'\reds u'$ such that $\xi\apx\sigma$. Notice that the definition of $\xi$ and $t'$ does not depend on $\seq r'$. In fact, by \reflemma{uniformity} we have $\xi\isub{\seq r'}{\seq x}\apx\sigma\isub{\seq r'}{\seq x}$. We then have, by Definitions~\ref{def:ApxRed_steps} and~\ref{def:ApxRed_sequences},
		$$
			\pi_j(\xi\isub{\seq r'}{\seq x}\seq u''\isub{\seq r'}{\seq x})\nu
			\apx
			\pi_j(\sigma\isub{\seq r'}{\seq x}\seq L\isub{\seq r'}{\seq x})\rho.
		$$
		This means that $\forall \seq r'\in\Ball{\varepsilon}{\seq r}, \pi_j(t'\seq{u''})\isub{\seq r'}{\seq x}\trapx \pi_j(R\seq L)\isub{\seq r'}{\seq x}$, so $\seq r\in\Stab{\pi_j(R\seq L)}$.

		\item[\ref{stab:if}.]
		Let $l\in\{1,2\}$ and let $\seq r\in\Stab{\pi_j(M_l\seq L)}\cap\interior{X_l}$. By definition, there exist $\varepsilon>0$, $u\apx \pi_j(M_l\seq L)$ such that, for all $\seq r'\in\Ball\varepsilon{\seq r}$, $u\isub{\seq r'}{\seq x}\trapx \pi_j(M_l\seq L)\isub{\seq{r'}}{\seq x}$ and $\seq r'\in\interior{X_l}$. Notice that $u=\pi_j(u'\seq{u''})$, with $u'\apx M_l$ and $\seq u''=\pair{\seq u_1''}\cdots\pair{\seq u_p}$ such that, for every $1\leq i\leq p$ and every element $u''_{i,h}$ of $\seq u''_i$, we have $u''_{i,h}\apx L_i$. Let $t := \pi_j((\pr \ell\!\pair{u',u'})\seq u'')$ and notice that $t\apx \pi_j(\ite P{M_1}{M_2}\seq L)$. Let us prove that $t\isub{\seq r'}{\seq x}\trapx \pi_j(\ite P{M_1}{M_2}\seq L)\isub{\seq r'}{\seq x}$. 

		Since $\seq r'\in\interior{X_l}\subseteq\dom P$, we have that $P\isub{\seq r'}{\seq x}$ is normalizing. Since $P$ is ground, by \refprop{headnf} there is a head reduction sequence $\rho:P\isub{\seq r'}{\seq x}\reds q$ such that $q$ is a numeral. Let now $\hCtxt=\pi_j(\ite{	\ctxthole}{M_1}{M_2}\seq u'')\isub{\seq r'}{\seq x}$. Notice that $\nu\apx\hctxt\rho$ for $\nu$ the empty reduction sequence of $t\isub{\seq r'}{\seq x}$. Moreover, by hypothesis we have normalizing reduction sequences $\nu'\apx\rho'$ from $u\isub{\seq r'}{\seq x}=\pi_j(u'\seq{u''})\isub{\seq r'}{\seq x}$ and $\pi_j(M_l\seq L)\isub{\seq r'}{\seq x}$, respectively. Furthermore, we have the head reduction steps:
		\begin{align*}
			\nu_0:&\pi_j(\pi_\ell\pair{u',u'}\seq{u''})\isub{\seq r'}{\seq x}\red\pi_j(u'\seq{u''})\isub{\seq r'}{\seq x} \\
			\rho_0:&\pi_j(\ite q{M_1}{M_2}\seq L)\isub{\seq r'}{\seq x}\red \pi_j(M_l\seq L)\isub{\seq r'}{\seq x}
		\end{align*}
		such that $\nu_0\apx\rho_0$. 
		We then have : 
		$$\nu\nu_0\nu' \apx \hctxt\rho\rho_0\rho',$$
		which allows us to conclude.

		\item[\ref{stab:fix}.]
		Let $P:=\pi_j((\fix[k]{f}{M})\seq L)$, $P':=\pi_j((\fix{f}{M})\seq L)$ and $\seq r\in\Stab{P}$. By definition, we have $t\apx P$ such that  for some $\varepsilon>0$ and all $\seq r'\in\Ball\varepsilon{\seq r}$, there are normalizing reduction sequences $\xi\apx\rho$ starting from $t\isub{\seq r'}{\seq x}$ and $P\isub{\seq r'}{\seq x}$, respectively.

		Notice that $P\fixapx P'$, so by \refprop{FixApxComp} we have $t\apx P'$ as well as $\rho':P'\isub{\seq r'}{\seq x}\reds P'_0$ with $\xi\apx\rho'$ and $q\fixapx P'_0$, with $q$ the target of $\rho$ (a normal form). By an inspection of the rules defining $\fixapx$, we infer $P_0'=q$, so $\rho'$ is normalizing. We conclude $t\isub{\seq r'}{\seq x}\trapx P'\isub{\seq r'}{\seq x}$, which proves that $\seq r\in\Stab{P'}$.
	\end{itemize}
\end{proof}

\replemma{map}
{\it
	Let $\funsa$ be a function symbol of arity $k$ and, for each $1\leq i\leq k$, $M_i\in\logic{\tyR}$. If $\sem[]{\funsa}$ is cqc, then $\funsa(M_1,\dots,M_k)\in\logic{\tyR}$.
}
\begin{proof}
	Let us write $\seq M:=M_1,\ldots,M_k$. By \refprop{head}, $\sem{\funsa(\seq M)}=\sem[]{\funsa}\circ\pair{\sem{M_1},\dots,\sem{M_k}}$ which is cqc by \reflemma{qcont}, items \ref{qcont:comp} and \ref{qcont:pairing}.

	For what concerns the unstable points, using \refsublemma{stab}{map} and the fact that $\dom{\funsa(\seq M)}\subseteq\dom{M_i}$ for all $1\leq i\leq k$, we have
	$$\UStab{\funsa(\seq M)}\subseteq\dom{\funsa(\seq M)}\setminus\bigcap_{i=1}^k\Stab{M_i}=\bigcup_{i=1}^k\dom{\funsa(\seq M)}\setminus\Stab{M_i}\subseteq\bigcup_{i=1}^k\dom{M_i}\setminus\Stab{M_i}=\bigcup_{i=1}^k\UStab{M_i},$$
	and the latter set is a quasivariety by hypothesis.
\end{proof}

\replemma{expansion_log}
{\it	Let $R\red P$ be one of the rewriting rules in Figure~\ref{fig:reduction}. If $P\in\logic{A}$, then $R\in\logic{A}$.}
\begin{proof}
	Let $A = A_1\rightarrow \dots \rightarrow A_p\rightarrow \tyR^m$.  It is enough to prove that, given $\seq L=L_1,\ldots,L_p$ such that $L_i\in\logic{A_i}$ for all $1\leq i\leq p$, we have $\pi_j(R\seq L)\in\logic{\tyR}$ for every $1\leq j\leq m$.

	By the confluence property, we have $\sem{\pi_j(R\seq L)}=\sem{\pi_j(P\seq L)}$, so the former is cqc as the latter is. 

	Concerning the set of unstable points, since $\dom{\pi_j(R\seq L)}=\dom{\pi_j(P\seq L)}$, by \refsublemma{stab}{red} $\UStab{\pi_j(R\seq L)}\subseteq \UStab{\pi_j(P \seq L)}$, and the latter is a quasivariety by hypothesis.
\end{proof}

\replemma{zero}
{\it	For every type $A\to B$, $\Omega_{A\to B}\in\logic{A\to B}$.}
\begin{proof}
	Recall that, by definition, $\Omega_{A\to B}=\fix{f^{A\to B}}f$. Let $B=B_1\rightarrow\dots B_p\rightarrow \tyR^m$. It is enough to prove that, for all $\seq L=L_0,L_1,\ldots,L_p$ such that $L_0\in\logic A$ and $L_i\in\logic{A_i}$ for all $1\leq i\leq p$, we have $N:=\pi_j((\fix ff)\seq L) \in\logic {\tyR}$ for all $1\leq j\leq m$. This follows immediately from
	\begin{itemize}
	\item[$(\star)$] for any sequences (including empty) $\seq P$ and $\seq z$ of terms and fresh variables, respectively, $\sem{\pi_j((\lambda \seq z.(\fix ff)\overline{\seq z})\seq P)}$ is the nowhere-defined function, where by $\overline{\seq z}$ we mean the application to all variables in $\seq z$, but in reverse order.
	\end{itemize}
	Indeed, $(\star)$ implies that $\sem{N}$ is the nowhere-defined function, which is trivially cqc, as well as that $N$ diverges, which means that $\dom N=\emptyset$, hence $\UStab{N}=\emptyset$, and the empty set is a quasivariety.
	
	Claim $(\star)$ may be proved by contradiction: we suppose that  $((\lambda \seq z.(\fix ff)\overline{\seq z})\seq P)\isub{\seq r}{\seq x}$ has a normalizing reduction sequence $\rho$, which is necessary for the semantics to be defined in $\seq r\in\Real^n$, and we derive a contradiction by induction on its length. If $\rho$ is empty, then we have a contradiction since the term is not normal. Otherwise, let $\sigma$ be the first reduction step of $\rho$. If the redex fired by $\sigma$ is in any subterm of $\seq P$, then the target of $\sigma$ is of the form $((\lambda \seq z.(\fix ff)\overline{\seq z})\seq P')\isub{\seq r}{\seq x}$ and the induction hypothesis gives us a contradiction. In case $\seq z=z\seq z'$ and $\seq P=P\seq P'$, $\sigma$ may be the step $((\lambda z\seq z'.(\fix ff)\overline{\seq z'}z)P\seq P)\isub{\seq r}{\seq x}\red ((\lambda\seq z'.(\fix ff)\overline{\seq z'})P\seq P')\isub{\seq r}{\seq x}$, which is still if the form to which the induction hypothesis applies. The last possibility is that $\sigma$ is the step $((\lambda \seq z.(\fix ff)\overline{\seq z})\seq P)\isub{\seq r}{\seq x}\red ((\lambda \seq zz'.(\fix ff)z'\overline{\seq z})\seq P)\isub{\seq r}{\seq x}$, and also in this case we obtain a contradiction by applying the induction hypothesis. 
\end{proof}

\replemma{fix}
{\it	If $\forall k\in\Nat, \fix[k] f M\in\logic{A\to B}$, then $\fix f M\in\logic{A\to B}$.}
\begin{proof}
	Let $B=B_1\to\cdots\to B_p\to\tyR^m$. It is enough to show that, given $\seq N=N_0,N_1,\ldots,N_p$ such that $N_0\in\logic{A}$ and $N_i\in\logic{B_i}$ for all $1\leq i\leq p$, we have $P:=\pi_j((\fix f M)\seq N)\in\logic{\tyR}$ for all $1\leq j\leq m$. In what follows, we let, for arbitrary $k\in\Nat$, $P_k:=\pi_j((\fix[k]{f}{M})\seq N)$.	
	First, let us prove that $\sem P$ is cqc. Let $l\geq 0$ and let $Q\subseteq\Real^{l+1}$ be quasiopen. We need to prove that $(\id_{\Real^l}\times\sem{P})^{-1}(Q)$ is quasiopen. By \refprop{appr}, we have
	$$(\id_{\Real^l}\times\sem{P})^{-1}(Q)=\bigcup_{k<\infty}(\id_{\Real^l}\times\sem{P_k})^{-1}(Q),$$
	and the latter union is quasiopen because by hypothesis each $\sem{P_k}$ is cqc.
		
	Let us now prove that $\UStab{P}$ is a quasivariety. Notice that the above equality gives us (with $l=0$) $\dom P=\bigcup_{k<\infty}\dom{P_k}$. Then, using \refsublemma{stab}{fix} and the latter observation, we may write
	\begin{align*}
		\UStab{P} &\subseteq \dom{P}\setminus\bigcup_{m<\infty}\Stab{P_m} 
		= \left(\bigcup_{k<\infty}\dom{P_k}\right)\setminus\bigcup_{m<\infty}\Stab{P_m} \\
		&\subseteq \bigcup_{k<\infty}\left(\dom{P_k}\setminus\bigcup_{m<\infty}\Stab{P_m}\right) 
		\subseteq \bigcup_{k<\infty} \dom{P_k}\setminus\Stab{P_k}=\bigcup_{k<\infty}\UStab{P_k},
	\end{align*}
	and the latter union is a quasivariety because each $\UStab{P_k}$ is a quasivariety by hypothesis.
\end{proof}

\replemma{Adequacy}
{\it	Suppose that the primitive functions of \PCF\ are admissible. Let $\Gamma:=x_1^\tyR, \dots, x_n^\tyR$, $\Delta:= y_1^{A_1},\ldots,y_m^{A_m}$, let $\Gamma,\Delta\vdash M:A$ and let $\Gamma\vdash N_i\in \logic{A_i}$ for all $1\leq i\leq m$. Then, $$M\isub{N_1}{y_1}\cdots\isub{N_m}{y_m}\in\logic A.$$
}
\begin{proof}
	We reason by induction on a derivation of $\Gamma,\Delta\vdash M:A$. Throughout the proof, we write $\overline P:=P\isub{N_1}{y_1}\cdots\isub{N_m}{y_m}$ for a generic term $P$.

	If $M$ is a variable in $\Gamma$, then $A=\tyR$ and $\sem{\overline M}=\sem{x_i}$ is a projection, so cqc by \refsublemma{qcont}{basic}. Notice also that $\UStab{x_i}=\emptyset$. 

	If $M$ is a variable in $\Delta$, then  $\overline M = N_i$ for some $i\leq m$ and we conclude by the hypothesis  $N_i\in \logic{A_i}$.

	If $M=\lambda z^B.L$, then $A=B\rightarrow C$. We need to prove that for every $N\in\logic{B}$, $\overline MN\in\logic C$. Notice that $\overline MN\red \overline L\isub{N}{z}$, and $\overline L\isub{N}{z}\in\logic C$ by induction hypothesis, so we conclude by \reflemma{expansion_log}.

	Let $M=LP$, with $\Gamma,\Delta\vdash L:B\rightarrow A$ and $\Gamma,\Delta\vdash P:B$. By induction hypothesis, $\overline L\in\logic{B\rightarrow A}$ and $\overline P\in\logic B$, so by definition $\overline M=\overline L\overline P\in\logic A$. 

	If $M=\funsa(M_1,\dots,M_k)$, we apply \reflemma{map}, via \refsublemma{qcont}{basic} and the admissibility assumption.

	If $M=\pair{N_1,\dots,N_k}:A_1\times\cdots\times A_k$, we need to prove $\pi_i \overline M\in\logic{A_i}$ for any $i\leq k$. Notice that $\pi_i \overline M\red  \overline N_i\in\logic{A_i}$ so we conclude by~\reflemma{expansion_log}.

	If $M=\ite{P}{L}{N}$, we just apply the induction hypothesis and \reflemma{if}.

	Let $M=\fix f L$, with $\Gamma,\Delta, f:A\rightarrow B\vdash L:A\rightarrow B$.  By \reflemma{zero}, $\fix[0]{f}{L}\in\logic{A\rightarrow B}$ and by the induction hypothesis, $\lambda f.L\in\logic{(A\rightarrow B)\rightarrow A\rightarrow B}$, so that, for every $k\in\mathbb N$, $\fix[k]{f}{L}\in\logic{A\rightarrow B}$. We then conclude by \reflemma{fix}.
\end{proof}

\end{document}